\newcommand\facllncs[2]{#2} 
	\newlength{\cellH}
\newlength{\cellW}
\newcommand\newpi[2]{\hlight{#1}{NP\#\! #2}}
\newcommand\newpii[3]{\hlightNew{#1}{NP:~#2}{#3}}
\newcommand\zdnewok[2]{{#1}}
\newcommand\zdnewwok[3]{{#1}}
\newcommand\zdnewwno[3]{}
\newcommand\zdmodok[3]{\modifyok{#1}{#2}{ZD:#3}}
\newcommand\zdmoddok[4]{\modiffyok{#1}{#2}{ZD:#3}{#4}}
\newcommand\zdmodno[3]{\modifyno{#1}{#2}{ZD:#3}}
\newcommand\zdmoddno[4]{\modiffyno{#1}{#2}{ZD:#3}{#4}}
\newcommand{\modifyok}[3]{{#2}}
\newcommand{\modiffyok}[4]{{#2}}
\newcommand{\modifyno}[3]{{#1}}
\newcommand{\modiffyno}[4]{{#1}}
\newcommand\hkmodok[3]{\modifyok{#1}{#2}{HK:#3}}
\newcommand\hkmodno[3]{\modifyno{#1}{#2}{HK:#3}}
\newcommand\newno[2]{}
\newcommand\zdmoddtok[4]{\zdmoddok{#1}{#2}{#3}{#4}}
\newcommand\zdmodtok[3]{\zdmodok{#1}{#2}{#3}}
\newcommand\mor{morphism}
\newcommand\mmodel{multimodel}
\newcommand\corring{corresponding}
\newcommand\corrce{correspondence}
\newcommand\eg{e.g.}
\newcommand\ie{i.e.}
\newcommand\etal{{\em et al}}
\newcommand{\cocy}{consistency}
\newcommand\syncon{synchronization}
\newcommand\syncing{synchronizing}
\newcommand\syncer{synchronizer}
\newcommand\synced{synchronized}
\newcommand\Syncon{Synchronization}
\newcommand\fwk{framework}
\newcommand\mysubsubsection[2]{\vspace{#2}\textit{#1}}
\renewcommand\mysubsubsection[1] {\vspace{0.25ex}\noindent{\bf\em #1.}\vspace{0.ex}}
\renewcommand\mysubsubsection[2]{\par\vspace{-2ex}%
	\subsubsection{\thissec.#1 #2}
}
\DeclareMathAlphabet{\mathantt}{OT1}{antt}{li}{it}
\DeclareMathAlphabet{\mathpzc}{OT1}{pzc}{m}{it}
\newenvironment{defin}[1][]{\ifthenelse{\equal{#1}{}}{\definition}{\definition[#1]}\rm}{\enddefinition}
\newcommand\nia{\nicex{A}}
\newcommand\so{\ensuremath{\mathsf{src}}}
\newcommand\ta{\ensuremath{\mathsf{trg}}}
\renewcommand\so{\ensuremath{\mathsf{s}}} 
\renewcommand\ta{\ensuremath{\mathsf{t}}}  
\newcommand\id[1]{\opname{id}_{{#1}}}
\renewcommand\id{\opname{id}}
\newcommand\figref[1]{Fig.~\ref{#1}}
\newcommand\defref[1]{Def.~\ref{#1}}
\newcommand\sectref[1]{Sect.~\ref{#1}}
\newcommand\eqdef{\ensuremath{\stackrel{\mathrm{def}}{=}}}
\newcommand\timm{{\times}}
\newcommand\inn{{\in}}
\newcommand\ovr[1]{\ensuremath{\overline{#1}}}
\newcommand\restrict[2]{#1\left\lceil_{#2}\mathstrut\right.}
\renewcommand\restrict[2]{\ensuremath{#1{\upharpoonright}_{#2}\mathstrut}}
\newcommand\restr[2]{\restrict{#1}{#2}}
\newcommand\comprehension[2]{\ensuremath{\left\{{#1}\left|\;{#2}\right.\right\}}}
\renewcommand\comprehension[2]{\ensuremath{\left\{{#1}| \;{#2}\right\}}}  
\newcommand\compr[2]{\comprehension{#1}{#2}}
\newcommand\comprfam[2]{\left(\mathstrut{#1}\left|\;{#2}\right.\right)}
\renewcommand\comprfam[2]{(\mathstrut{#1}\left|\;{#2}\right.)}
\newcommand\noteq{\ne}
\newcommand\gracat{\catname{Graph}}
  \def\fps@figure{tp}
\newcounter{defCounter}
\newenvironment{defin}[1][]{
\refstepcounter{defCounter}
\begin{trivlist}
\item[\hskip \labelsep {\bfseries Definition \arabic{defCounter} (#1) }]}%
{\end{trivlist}}
\newcommand\eqdef{\stackrel{\rm def}{=}}
\newcommand\eer{\datatrend{eer}}
\newcommand\corr{correspondence}
\newcommand\eg{e.g.}
\newcommand\ie{i.e.}
\newcommand\etal{{\em et al}}
\newcommand\dolan
\newcommand\mathsymbol[1]{\mbox{$#1$}}
\newcommand\world
\newcommand\cmter
\newcommand\formalObj[1]{\mbox{\sffamily #1}}
\newcommand\informalObj[1]{\mbox{\textit #1}}
\newcommand\FormalInterpret[1]{\mbox{$\formalObj{i}_f$}}
\newcommand\SubstantInterpret[1]{\mbox{$\informalObj{i}_s$}}
\renewcommand\eg{e.g.}
\newcommand\categoryname[1]{\mathbf{#1}}
\newcommand\gracat{\mbox{$\categoryname{Graphs}$}}
\newcommand\corrce{correspondence}
\newcommand\ovr[1]{\mbox{$\overline{\mathstrut#1}$}}
\newcommand\id{\textsl{id}}
\renewcommand\id{\mbox{$\iota$}}
\newcommand\frar[3]{\ensuremath{#1{:}\;#2\rightarrow #3}} 
\newcommand\flar[3]{\ensuremath{#1\!:#2\leftarrow #3}} 
\newcommand\biarr[3]{\mbox{$#1\!:#2\leftrightarrow #3$}}%
\Leftarrow\Rightarrow{\@cmex7E}{\@cmex7F}
\newlength{\StatArrBody}
\newlength{\NodeFrameThickness}
\newcommand\acorr{\ensuremath{R}}
\newcommand\trmcat{\ensuremath{\mathbf{1}}}
\newcommand\ekk{\ensuremath{\mathpzc{k}}}
\newcommand\ebb{\ensuremath{\mathpzc{b}}}
\newcommand\trivl{\ensuremath{\mathpzc{t}}}
\renewcommand\trivl{\ensuremath{\mathfrak{t}}}
\newcommand\idl{\ensuremath{\mathfrak{id}}}
\newcommand\idlens[2]{\ensuremath{\idl_{#1}(#2)}}
\newcommand\trivlens[2]{\ensuremath{\trivl_{#1}(#2)}}
\newcommand\idlensna{\idlens{n}{\spA}}
\newcommand\idlensnb{\idlens{n}{\spB}}
\newcommand\trivlensna{\trivlens{n}{\spA}}
\newcommand\indb{\ensuremath{\mathsf{b}}}
\newcommand\inda{\ensuremath{\mathsf{v}}}
\newcommand\ltilab[1]{\ensuremath{\stackrel{{#1}}{\Leftarrow}}}
\newcommand\rtilab[1]{\ensuremath{\stackrel{{#1}}{\Rightarrow}}}
\newcommand\spacenam[1]{\ensuremath{\mathbf{#1}}}
\newcommand\spX[1]{\ensuremath{\spacenam{#1}}}
\newcommand\spA{\spX{A}}
\newcommand\spB{\spX{B}}
\newcommand\spC{\spX{C}}
\newcommand\spG{\spX{G}}
\newcommand\spM{\spX{M}}
\newcommand\mspX[1]{\ensuremath{{\boldsymbol{\mathcal{#1}}}}}
\newcommand\mspA{\mspX{A}}
\newcommand\mspR{\mspX{R}}
\newcommand\mspS{\mspX{S}}
\newcommand\spaA{\ensuremath{\spacenam{A}}}
\newcommand\spaB{\ensuremath{\spacenam{B}}}
\newcommand\mspaA{\ensuremath{{\boldsymbol{\mathcal{A}}}}}
\newcommand\xob[1]{{\ensuremath{{#1^\bullet}}}}
\newcommand\xarr[1]{{\ensuremath{{#1^\RHD}}}}
\newcommand\xstar[1]{{\ensuremath{{#1^\bigstar}}}}
\newcommand\xtuple[2]{\ensuremath{({#1_1}...{#1_{#2}})}}
\newcommand\xtupleWOBrack[2]{\ensuremath{{#1_1}...{#1_{#2}}}}
\newcommand\tupleA{\xtuple{A}{n}}
\newcommand\tupleAWOB{\xtupleWOBrack{A}{n}}
\newcommand\vecfont[1]{\ensuremath{\mathbf{#1}}}
\renewcommand\vecfont[1]{\ensuremath{{\cal #1}}}
\renewcommand\nia{\vecfont{A}}
\newcommand\merge{\ensuremath{\diagopername{merge}}}
\newcommand\ppg{\ensuremath{\diagopername{ppg}}}
\newcommand\diagopername[1]{\ensuremath{\mathsf{#1}}}
\newcommand\get[1]{\mbox{\diagopername{get}$_{#1}$}} %
\newcommand\putl[1]{\mbox{\diagopername{put}$_{#1}$}} 
\newcommand\putlback[1]{\ensuremath{\overleftarrow{\putl[1]}}}
\renewcommand\get{\mbox{\diagopername{get}}} %
\renewcommand\putl{\mbox{\diagopername{put}}} 
\renewcommand\putlback{\ensuremath{\overleftarrow{\putl}}}
\renewcommand\putlback{\ensuremath{\putl^\circlearrowleft}}
\newcommand\lawname[1]{\textsf{{#1}}}
\newcommand\lawnamebr[1]{\textsf{{(#1)}}}
\renewcommand\lawnamebr[2]{\ensuremath{\mathsf{{(#1)}}_{#2}}}
\newcommand\lawnameBr[3]{\ensuremath{\mathsf{{(#1)}}_{#2}^{\mathsf{#3}}}}
\newlength{\lawgap}
\newcommand\rlarr{\ensuremath{\rightleftarrows}}
\newcommand\figri{\figref{fig:multiMetamod-uml}} 
\newcommand\figrii{\figref{fig:multimod-uml}}
\newcommand\figriv{\figref{fig:star-comp-example}}
\newcommand\diarefi{\figref{fig:ppgPatterns-simple}}
\newcommand\diarefii{\figref{fig:ppgPatterns-reflect}}
\newcommand\mmput{\ensuremath{\mathbf{put}}}
  \newcommand\mmu{\ensuremath{\mathbf{u}}}
\newcommand\elementToSingleton[1]{\ensuremath{\widehat{#1}}}
\newcommand\etosing[1]{\elementToSingleton{#1}}
\renewcommand\etosing[1]{\ensuremath{#1}}
\newcommand\niuu{\ensuremath{\mathbf{u}}}
\newcommand\thissec{\ref{sec:somesection}}
\newcommand\fakeob{\strut}
\renewcommand\newpi[2]{\hlight{#1}{NewPiece\#\! #2}}
\renewcommand\newpii[3]{\hlightNew{#1}{NewPiece\#\! #2}{#3}}
\newcommand\npmarginn[2]{\mymarginNew{NewSubsect\##1}{#2}}
\renewcommand\newpi[2]{#1} 
\renewcommand\newpii[3]{#1} 
\renewcommand\npmarginn[2]{#1} 
\renewcommand\npmarginn[2]{}
\renewcommand\xstar[1]{\Corrsx{#1}}
\renewcommand\xob[1]{\ensuremath{\mathsf{Ob}(#1)}}
\renewcommand\xarr[1]{\ensuremath{#1%
		^{\footnotesize{\RHD}}}}
	\renewcommand\xarr[1]{\ensuremath{#1%
			^{\mathbf{\rightarrow}}}}
\renewcommand\xarr[1]{\ensuremath{\mathsf{Ar}(#1)}}
\newcommand\Corrsx[1]{\ensuremath{#1^\bigstar}}
\renewcommand\Corrsx[1]{\ensuremath{\mathsf{Corr}(#1)}}
\newcommand\Corrsma{\Corrsx{\mspA}}
\newcommand\corr{\ensuremath{\mathsf{corr}}}
\newcommand\Corr{\ensuremath{\mathsf{Corr}}}
\newcommand\Corrcat{\ensuremath{\mathbf{Corr}}}
\newcommand\Corrx[1]{\Corrsx{#1}}
\newcommand\amdx[1]{\ensuremath{#1^{@}}}
\newcommand\amex[1]{\amdx{#1}}
\newcommand\ameui{\amex{u_i}}
\newcommand\reflectzero{Reflect0}
\newcommand\kputput{\lawname{KPutput}}
\newcommand\kputputlaw{\lawname{\kputput} law}
\newcommand\ppgrx[1]{\ensuremath{\ppg^{#1}}}
\newcommand\ppgr{\ppgrx{\acorr}}
\newcommand\ppgB{\ppgrx{B}}
\newcommand\ppghB{\ppgrx{\hat B}}
\newcommand\spE{\ensuremath{\spacenam{E}}}
\renewcommand\comprfam[2]{\left(\mathstrut{#1}\left|\;{#2}\right.)}
\renewcommand\comprfam[2]{\left(\mathstrut{#1}\,| \;{#2}\right)}
\newcommand\tuplenam[1]{\ensuremath{\pmb{#1}}}
\renewcommand\tuplenam[1]{\ensuremath{{#1}}}
\newcommand\tupX[1]{\ensuremath{\tuplenam{#1}}}
\newcommand\tupA{\tupX{A}}
\newcommand\tupu{\tupX{u}}
\newcommand\xconfig[1]{\ensuremath{\mathsf{#1}}}
\newcommand\Span{\xconfig{Span}}
\newcommand\compUpd[2]{\ensuremath{\mathsf{#1}^{{#2}}}}
\newcommand\seqK{\compUpd{K}{\RHD\!\!\RHD}}
\renewcommand\seqK{\compUpd{K}{\blacktriangleright\!\!\blacktriangleright}}
\newcommand\disjK[1]{\compUpd{K}{\blacktriangle{#1}}}
\newcommand\Kupd{\seqK}
\newcommand\Kdisj[1]{\disjK{#1}}
\newcommand\Kcorr{\compUpd{K}{\bigstar}}
\newcommand{\kl}{\ekk\timm \ell}
\newcommand\exIntro{$\ast$}
\renewcommand\exIntro{\textit{Running example}}
\newcommand\bad{\ensuremath{\dagger}}
\newcommand\prt{\ensuremath{\partial}}
\newcommand\prtbf{\ensuremath{\boldsymbol{\partial}}}
\renewcommand\ta{\ensuremath{\mathsf{t}}}
\renewcommand\so{\ensuremath{\mathsf{s}}}
\newcommand\dbox[1]{\fbox{$#1$}}
\renewcommand\dbox[1]{\ovalbox{$#1$}}
\newcommand\xmi[1]{\ensuremath{M_{#1}}}
\newcommand\mo{\xmi{}}
\newcommand\mi{\xmi{1}}
\newcommand\mii{\xmi{2}}
\newcommand\miii{\xmi{3}}
\newcommand\xai[1]{\ensuremath{A_{#1}}}
\newcommand\yai[1]{\ensuremath{R_{#1}}}
\newcommand\ao{\xai{}}
\newcommand\ai{{\xai{1}}}
\newcommand\aii{{\xai{2}}}
\newcommand\aiii{{\xai{3}}}
\newcommand\rii{{(\yai{12})}}
\newcommand\riii{{(\yai{123})}}
\newcommand\addr{\nmf{Addr}}
\newcommand\name{\nmf{name}}
\newcommand{\Person}{\nmf{Person}}
\newcommand{\Comp}{\nmf{Company}}
\newcommand{\Company}{\nmf{Company}}
\newcommand{\Commute}{\nmf{Route}}
\newcommand{\Route}{\nmf{Route}}
\newcommand\Addr{\nmf{Addr}}
\newcommand\from{\nmf{from}}
\newcommand\lives{\nmf{livesAt}}
\newcommand\loc{\nmf{locAt}}
\newcommand\toattr{\nmf{to}}
\newcommand\Empl{\nmf{Employment}}
\newcommand{\eer}{\nmf{employer}}
\newcommand{\eee}{\nmf{employee}}
\newcommand{\etor}{\nmf{(e2r)}}
\newcommand{\ptop}{\nmf{(p2p)}}
\newcommand{\ctoc}{\nmf{(c2c)}}
\newcommand\namefont[1]{\ensuremath{\mathsf{#1}}}
\newcommand\nmf[1]{\namefont{#1}}
\newcommand{\netw}{\nmf{Network}}
\newcommand{\person}{\nmf{Person}}
\newcommand{\memb}{\nmf{memb}}
\newcommand{\ident}{\nmf{ident}}
\newcommand{\emails}{\nmf{emails}}
\newcommand{\too}{\nmf{to}}
\newcommand\hgg[1]{}
\title{Multiple Model Synchronization with Multiary Delta Lenses with Amendment and K-Putput
\thanks{This is an authors' copy of the article printed in {\em Formal Aspects of Computing 31(5): 611-640 (2019)} with multiple omissions in Sect. 7.1, which make that section practically unreadable. There are also several minor edits. 
}
}
\titlerunning{Multiary Delta Lenses}
{ 
\author{
Zinovy Diskin\inst{1}
\raisebox{0.75ex}{\scriptsize{(\Letter)}} 
Harald K\"{o}nig\inst{2}
and Mark Lawford\inst{1} 
}
\authorrunning{Diskin, K\"{o}nig, Lawford}
\institute{
 		{McMaster University, Hamilton, Canada}
 		\and 
	    {University of Applied Sciences FHDW Hannover, Germany}
	\\
    \email{diskinz@mcmaster.ca, harald.koenig@fhdw.de, lawford@mcmaster.ca}
}
}
\newcommand\facend{  
	\facllncs{\endinput}{}
} 
\newcommand\facenD{  
	\facllncs{\end{document}}{}
}
\newcommand\papertr[2]{#1}  
\begin{document}
\maketitle
\begin{abstract}
	Multiple (more than 2) model synchronization is ubiquitous and important for model driven engineering, but its theoretical underpinning gained much less attention than the binary case. Specifically, the latter was extensively studied by the bx community in the framework of algebraic models for update propagation called lenses. Now we make a step to restore the balance and propose a notion of multiary delta lens. Besides multiarity, our lenses feature {\em reflective} updates, when consistency restoration requires some amendment of the update that violated consistency. We emphasize the importance of various ways of lens composition for practical applications of the framework, and prove several composition results.   
\end{abstract}
\pagestyle{headings}
\section{Introduction} 


\newpii{
Modelling normally results in a set of inter-related models presenting different views of a single system at different stages of development. The former differentiation is usually referred to as ``horizontal'' (different views on the same abstraction level) and the latter as ``vertical'' (different abstraction levels beginning from the most general requirements down to design and further on to implementation). A typical modelling environment in a complex project is thus a collection of models (we will call them {\em local}) inter-related and inter-dependant along and across the horizontal and the vertical dimensions of the network. We will call the entire collection a {\em multimodel}, and refer to its component as to {\em local} models.  
\\
The system integrating local models can exist either materially (\eg, with UML modelling, a single UML model whose views are specified by UML diagrams, is physically stored by the UML tool) or virtually (\eg, several databases integrated into a federal database), or in a mixed way (\eg, in a complex modelling environment encompassing several UML models).  Irrespective of the type of integration (material, virtual, mixed), the most fundamental property of a multimodel is its  {\em global}, or {\em joint, consistency}: if local models do not contradict each other in their viewing of the system, then at least one system satisfying {\em all} local models exists; otherwise, we say local models are (globally) inconsistent.
}{1}{-8cm}

If one of the local models changes and their joint consistency is violated, the related models should also be changed to restore consistency. This task of model \syncon\ 
is obviously of paramount importance for MDE, but its theoretical underpinning is inherently difficult 
and reliable automatic synchronization solutions are rare in practice. 
Much theoretical work partially supported by implementation has been done for the binary case (synchronizing two models) by the bidirectional transformation community (bx), 
specifically, by its TGG sub-community, see, \eg, \cite{hermann2012:concurrent}), and  the {\em delta lens}  sub-community on a more abstract level (delta lenses \cite{me-models11} can be seen as an abstract algebraic interface to TGG based \syncon\ \cite{frank-models11}). 
However, disappointedly for practical applications, the case of multiary \syncon\ (the number of models to be \synced\ is $n > 2$) gained much less attention---cf. the energetic call to the community in a recent Stevens' paper \cite{stevens-models17}.  

The context underlying bx is model transformation, in which one model in the pair is considered as a transform of the other even though updates are propagated in both directions (so called round-tripping). 
Once we go beyond $n=2$, we switch to a more general context of inter-model relations beyond model-to-model transformations. Such situations have been studied in the context of multiview system consistency, see surveys \cite{marsha-isse12,DBLP:journals/corr/KnappM16}, but rarely in the context of an accurate formal basis for update propagation. 
\newpii{
A notable exception is work by Trollmann and Albayrak  \cite{frankT-icmt15,frankT-icmt16,frankT-icmt17}. In the first of these papers, they specify a grammar-based engine for generating consistent multimodels of arbitrary arity $n\ge 2$, with the case $n=2$ being managed by TGG and truly multiary cases $n\ge 3$ are uniformly managed by what they call Graph-Diagram Grammars, GDG.  In paper \cite{frankT-icmt16} they use GDG for building a multiary change propagation \fwk, which is close in its spirit to our \fwk\ developed in the paper but is much more concrete --- we will provide a detailed comparison in the Related work section. 
Roughly, our \fwk\ of multiary delta lenses developed in the paper is to GDG-based update propagation as binary symmetric delta lenses are to TGG-based update propagation, $\mathsf{mxLens/GDG} \approx \mathsf{bxLens/TGG}$, where we refer to multiary update propagation as {\em mx} (contrasting it to binary bx). The latter relationship is described in  \cite{frank-models11}: binary delta lenses appear as an abstract algebraic interface to TGG-based change propagation; at some stage, we want to achieve similar results for mx-lenses and GDG (but not in this paper). 
}{1}{-6cm}
%

 Our contributions to mx are as follows. 
 We show with a simple example (\sectref{sec:example}) an important special feature of multiview modelling: consistency restoration may require not only update propagation to other models but the very update created inconsistency should itself be amended; thus, update propagation should, in general, be {\em reflective}  (even for the case of a two-view system).   
Motivated by the example, in \sectref{sec:mmod} we formally define the notion of a multimodel, and then in \sectref{sec:mlenses}, give a formal definition of a {\em multiary} (symmetric) lens with amendment and state the basic algebraic laws such lenses must satisfy. Importantly, we have a special \lawname{KPutput} law that requires compatibility of update propagation with update composition for a restricted class \Kupd\ of composable update pairs.%

Our major results 
are about lens composition. In \sectref{sec:lego}, we define several operations  over lenses, which produce complex lenses from simple ones: we first consider two forms of parallel composition  in \sectref{sec:lego-para}, and then two forms of sequential composition in Sections \ref{sec:lego-star} and \ref{sec:lego-spans2lens}. 
Specifically, the construct of composing an $n$-tuple of asymmetric binary lenses sharing the same source into a symmetric $n$-ary lens gives a solution to the problem of building mx \syncon\ via bx discussed by Stevens in \cite{stevens-models17}. 

We consider lens composition results crucially important for practical application of the \fwk. If a tool builder has implemented a library of elementary \syncon\ modules based on lenses and, hence, ensuring basic laws for change propagation, then a complex module assembled from elementary lenses will automatically be a lens and thus also enjoys the basic laws. This allows the developer to avoid additional integration testing, which can essentially reduce the cost of \syncon\ software.   

The paper is an essential extension of our FASE'18 paper \cite{me-fase18}. The main additions are i) a new section motivating our design choices, ii) a constrained Putput law (\kputput) and its thorough discussion, including a corresponding extension of the running example, iii) a counterexample showing that invertibility is not preserved by  star composition, iv) two types of parallel composition of multiary lenses, v) Related Work and Future Work sections are essentially extended, particularly, an important subsection about multimodel updates including correspondence updates (categorification) is added.

\section{Background: Design choices for the paper}\label{sec:backgr}

\renewcommand\thissec{\ref{sec:backgr}}
\facllncs{
\renewcommand\mysubsubsection[2]{
	\subsubsection{#2}}
}{}  

In this section, we discuss our design choices for the paper: why we need multiarity, amendments, K-Putput, and why, although we recognize limitations of a \fwk\ only dealing with non-concurrent update scenarios, we still develop their accurate algebraic model in the paper.

\mysubsubsection{1}{Why multiary lenses.} 
 Consider, for simplicity, three models, \ai, \aii, and \aiii, working together (\ie, being models of the same integral system) and being in sync at some moment.  Then one of the models, say, \ai, is updated to state $A'_1$, 
and consistency is violated. To restore consistency, the two other models are to be changed accordingly and we say that the update of model \ai\ is propagated to \aii\ and \aiii. Thus, consistency restoration amounts to having three pairs of propagation operations, $(\ppg_{ij},\ppg_{ji})$, with operation $\ppg_{ij}$ propagating updates of model $A_i$ to model $A_j$, $i\noteq j$, $i,j=1,2,3$.  It may seem that the \syncon\ problem can be managed by building three binary lenses $\ell_{ij}$ -- one lens per a pair of models. 


\begin{wrapfigure}{R}{0.33\textwidth}
\vspace{-4ex}
\centering
    \includegraphics[width=0.25\textwidth]%
                         {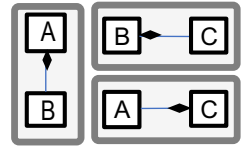}
\caption{Three jointly inconsistent class diagrams \label{fig:tri-example}}
\vspace{-4ex}
\end{wrapfigure}

However, when three models work together, their consistency is a ternary relation often irreducible to binary consistency relations. Figure \ref{fig:tri-example} presents a simple example: three class diagrams shown in the figure are pairwise consistent while the whole triple is obviously inconsistent, \ie, violates a class diagram metamodel constraint (which prohibits composition cycles).

A binary lens can be seen as a couple of Mealy machines (we write  $A_1\rlarr A_2$) sharing a state space (say, $R_{12}$). A ternary lens synchronizing a triple of models can also be seen as a triple of couples of Mealy machines ($A_1\rlarr A_2$, $A_1\rlarr A_3$, $A_2\rlarr A_3$), but they share the same space $R_{123}$ and hence mutually dependant on each other (they would be independent if each couple $A_i\rlarr A_j$would have its own space $R_{ij}$). 
Thus, multiple model \syncon\ is, in general, irreducible to chains of binary lenses and needs a new notion of a multiary lens.  

\mysubsubsection{2}{Why amendments.} 
Getting back to the example, suppose that the updated state $A_1'$ goes beyond the projection of all jointly consistent states to the space $A_1$, and hence consistency cannot be restored with the first model being in state $A'_1$. When we work with two models, such cases could be a priori prohibited by modifying the corresponding metamodel $M_1$ defining the model space so that state $A'_1$ would violate $M_1$. When the number of models to sync grows,  it seems more convenient and realistic to keep $M_1$ more flexible and admitting $A'_1$ but, instead, when the \syncer\ is restoring consistency of all models, an amendment of state $A'_1$ to a state $A''_1$ is also allowed. Thus, update propagation works {\em reflectively} so that not only other models are changed but the initiating update from $A_1$ to $A'_1$ is itself amended and model $A'_1$ is changed to $A''_1$. 
Moreover, in \sectref{sec:reflect} we will consider examples of situations when if even state $A'_1$ can be \synced, a slight amendment still appears to be a better \syncon\ policy. For example, if consistency restoration with $A'_1$ kept unchanged requires deletions in other models, while amending $A'_1$ to $A''_1$ allows to restore consistency by using additions only, then the update policy with amendments may be preferable -- as a rule, additions are preferable to deletions. 

Of course, allowing for amendments may open the Pandora box of pathological \syncon\ scenarios, \eg, we can restore consistency by rolling back the original update and setting $A''_1=A_1$. We would like to exclude such solutions and bound amendments to work like completions of the updates rather than corrections and thus disallow any sort of ``undoing''. To achieve this, we introduce a binary relation \Kupd\ of update compatibility: if an update \frar{u}{A_1}{A'_1} is followed by update \frar{v}{A'_1}{A_1''} and $u\Kupd\,v$, then $v$ does not undo anything done by $u$. Then we require that the original update and its amendment be \Kupd-related. (Relation \Kupd\ and its formal properties are discussed in \sectref{sec:modelspaces}.)

\mysubsubsection{3}{Why K-Putput.} 
An important and desired property of update propagation is its compatibility with update composition. If $u$ and $v$ are sequentially composable updates, and \putl\ is an update propagation operation, then its compositionality means  $\putl(u;v)=\putl(u);\putl(v)$%
\footnote{to make this formula precise, some indexes are needed, but we have omitted them}
 ---hence, the name Putput\ for the law.
There are other equational laws imposed on propagation operations in the lens \fwk\  to guarantee desired \syncon\ properties and exclude unwanted scenarios. Amongst them, Putput is the most controversial: 
{Putput} without restrictions does not hold while finding an appropriate guarding condition -- not too narrow to be practically usable and not too wide to ensure compositionality -- has been elusive (cf. 
\cite{foster07%
	,bpierce-popl11%
	,me-jot11,jr-bx12,me-jss15%
	 ,me-bx17%
}).
The practical importance of Putput follows from the possibilities of optimizing update propagation it opens: if Putput holds, instead of executing two propagations, the engine can executes just one. Moreover, before execution, the engine can optimize the procedure by preprocessing  the composed update $u;v$ and, if possible, converting it into an equivalent but easier manageable form (something like query optimization performed by database engines).
  
A preliminary idea of a constrained Putput is discussed in \cite{jr-bx12} under the name of a {\em monotonic} Putput: compositionality is only required for two consecutive deletions or two consecutive insertions (hence the term monotonic), which is obviously a too strong filter for typical practical applications. 
The idea of constraining Putput based on a compatibility relations over consecutive updates, \eg, relation \Kupd\ above, is much more flexible, and gives the name K-Putput for the law. It was proposed by Orejas \etal\ in \cite{orejas-bx13} for the binary case without amendment and  intermodel correspondences (more accurately, with trivial correspondences being just pairs of models), and we adapt it for the case of full-fledge lenses with general correspondences and amendments. We consider our integration of \Kupd-constrained Putput and amendments to be an important step towards making the lens formalism more usable and adaptable for practical tasks.  
%
%


\mysubsubsection{4}{Why non-concurrent \syncon.} 
In the paper, we will  consider consistency violation caused by a change of only one model, and thus consistency is restored by propagating only one update, while in practice we often deal with several models changing concurrently. If these updates are independent, the case can be covered with one-update propagation \fwk\ using interleaving, but if concurrent updates are in conflict, consistency restoration needs a conflict resolution operation (based on some policy) and goes beyond the \fwk\ we will develop in the paper. One reason for this is technical difficulties of building lenses with concurrency -- we need to specify reasonable equational laws regulating conflict resolution and its interaction with update composition. It would be a new stage in the development of the lens algebra. 

Another reason is that the case of one-update propagation is still practically interesting and covers a broad class of scenarios -- consider a UML model developed by a software engineer. Indeed, different UML diagrams are just different views of a single UML model maintained by the tool, and when the engineer changes one of the diagrams, the change is propagated to the model and then the changed model is projected to other diagrams. Our construct of star-composition of lenses (see \sectref{sec:lego-star} and \figref{fig:starComposition}) models exactly this scenario. Also, if a UML model is being developed by different teams concurrently, team members often agree about an interleaving discipline of making possibly conflicting changes, and the one-update \fwk\ is again useful. Finally, if concurrent changes are a priori known to be independent, this is well modelled by our construct of parallel composition of one-update propagating lenses (see \sectref{sec:lego-para}).

\mysubsubsection{5}{Lens terminology.}
The domain of change propagation is inherently complicated and difficult to model formally. The lens \fwk\ that approaches the task is still under development and even the basic concepts are not entirely settled and co-exist in several versions (\eg, there are {\em strong} and {\em weak invertibility}, several versions of Putput, and different names for the same property of propagating idle updates to idle updates). This results in a diverse collection of different types of lenses, each of which has a subtype of so called {\em well-behaved (wb)} lenses (actually, several such as the notion of being wb varies), which are further branched into different notions of {\em very wb} lenses depending on the Putput version accepted. 
The diversity of lens types and their properties, on the one hand, and our goal to provide accurate formal statements, on the other hand, would lead  to overly wordy formulations. To make them more compact, we use two bracket conventions.   

Square brackets. 
If we say {\em A wb lens is called {\em [weakly]} invertible, when...}, we mean that using adjective {\em weakly} is optional for this paper as the only type of invertibility we consider is the weak invertibility. However, we need to mention it because the binary lens notion that we generalize in our multiary lens notion, is the weak invertibility to be distinguished from the strong one.  Thus, we mention `weakly' in square brackets for the first time and then say just `invertible' (except in a summarizing result like a theorem, in which we again mention [weakly]).   

Round brackets. If a theorem reads {\em A span of (very) wb asymmetric lenses ... gives rise to a (very) wb symmetric lens ...} it mean that actually we have two versions of the theorem: one is for wb lenses, and the other is for very wb lenses.

\section{Example}
\label{sec:example}
We will consider a simple example motivating our \fwk. The formal constructs constituting the multiary delta lesn \fwk\  will be illustrated with the example (or its fragments) and referred to as  {\em Running example}. \newpi{Although the lens \fwk\ is formal, the running example instantiating it, will be presented semi-formally: we will try to be precise enough, but an accurate formalization would require the machinery of graphs with diagram predicates and partial graph morphisms as described in our paper \cite{me-ecmfa17}, and we do not want to overload this paper with formalities even more.}{1}

\subsection{A Multimodel to Play With}
Suppose two data sources, whose schemas (we say metamodels) are shown in \figri\ as class diagrams \mi\ and \mii\  that record  
 employment. The first source is interested in employment of people living in downtown, the second one is focused on software companies  and their recently graduated employees. 
 In general, population of classes \Person\ and \Comp\ in the two sources can be different -- they can even be disjoint, but if a recently graduated downtowner works for a software company, her appearance in both databases is very likely. 
 
 \begin{figure}[t]
\centering
    \includegraphics[
                    width=1.\textwidth%
                 ]%
                 {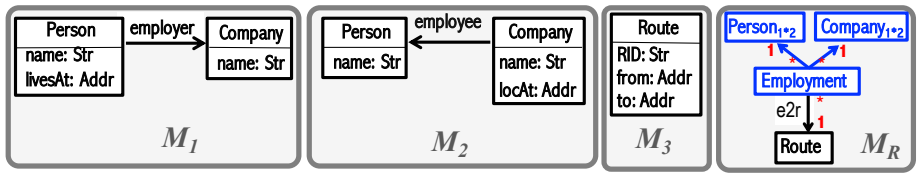}
\caption{Multi-Metamodel in UML\label{fig:multiMetamod-uml}}
\end{figure}
 Now suppose there is an agency investigating traffic problems, which maintains its own data on commuting routes between addresses  as shown by schema \miii. These data should be synchronized with commuting data provided by the first two sources and   computable by an obvious relational join over \mi\ and \mii: {roughly, the agency keeps traceability between the set of employment records and the \corring\ commuting routes and requires their \from\ and \toattr\ attributes to be synchronized  --- below we will specify this condition in detail and explain the forth metamodel $M_R$ \newpi{(see specification $(R_{123})$ on the next page)}{2}. In addition, the agency supervises consistency of the two sources and requires that if they both know a person $p$ and a company $c$, then they must agree on the employment record $(p,c)$: it is either stored by both or by neither of the sources. For this \syncon, it is assumed that persons and companies are globally identified by their names.  Thus, a triple of data sets (we will say models) \ai, \aii, \aiii, instantiating the respective metamodels, can be either consistent (if the constraints described above are satisfied) or inconsistent (if they aren't). In the latter case, we normally want to change some or all models to restore consistency. 
We will call a collection of models to be kept in sync a {\em multimodel}. 

To specify constraints for multimodels in an accurate way, we need an accurate notation.  If \ao\ is a model instantiating metamodel \mo\ and $X$ is a class in \mo, we write $X^\ao$ for the set of objects instantiating $X$ in $A$. Similarly, if \biarr{r}{X_1}{X_2} is an association in \mo, we write $r^\ao$ for the corresponding binary relation over $X_1^{\ao}\times X_2^{\ao}$.%
\footnote{In general, an association $r$ is interpreted as a multirelation $r^A$, but if the constraint [unique] is declared in the metamodel, then $r^A$ must be a relation. We assume all associations in our metamodels are declared to be [unique] by default.} 
For example, \figrii\ presents a simple model \ai\ instantiating \mi\ with $\Person^\ai=\{p_1,p_1'\}$, $\Company^\ai=\{c_1\}$,  $\eer^\ai=\{(p_1,c_1)\}$, and similarly for  attributes, \eg, 
$$\lives^\ai=\{(p_1,a1), (p'_1, a1)\}\subset \Person^\ai \timm\Addr$$ ($\lives^\ai$ and also $\name^\ai$ are assumed to be functions and $\Addr$ is the (model-independent) set of all possible addresses). Two other boxes present models $A_2$ and $A_3$ instantiating metamodels $M_2$ and $M_3$ resp.; we will discuss the rightmost box $R$ later.    
The triple $(\ai, \aii, \aiii)$ is a (state of the) multimodel over the multimetamodel $(\mi,\mii,\miii)$, and we say it is {\em consistent} if the constraints specified below are satisfied. 

Constraint  (C1) specifies mutual consistency of models $A_1$ and $A_2$  in the sense described above:
 
 
\smallskip 
\label{pg:constr}
\facllncs{}{
\noindent \begin{tabular}{l@{\qquad}l}
\multirow{2}{*}{(C1)} & 
}
if $p\in\Person^\ai\cap \Person^\aii$ and $c\in  \Comp^\ai\cap \Comp^\aii$ 
\facllncs{}{
	\\ & 
then $(p,c)\in \eer^\ai$ iff $(c,p)\in \eee^\aii$
\end{tabular}
}

\newcommand\derRoute{\ensuremath{{
			\eer^\ai\cup{(\eee^\aii)}^{-1} 
}}}
\newcommand\aiandii{\ensuremath{{\ai{*}\aii}}}

\medskip
Our other constraints specify consistency between the agency's data on commuting routes and the two data sources. 
We first assume a new piece of data that relates models: a relation $\etor^R$ whose domain is the integral set of employment records $\Empl^\aiandii$ as specified below in \riii: 


\smallskip \label{pg:constr123}
\noindent \begin{tabular}{l@{\qquad}l}
	\multirow{2}{*}{\riii} & 
	$\etor^R \subset \Empl^\aiandii\times\Route^\aiii$
	\\ & where  
	$\Empl^\aiandii\eqdef\derRoute$
\end{tabular}


\smallskip
\noindent \zdnewok{This relation is described in the metamodel $M_R$ in \figri, where blue boxes denote (derived) classes whose instantiation is to be automatically computed (as specified above) rather than is given by the user.}{} 

To simplify presentation, we assume that all employees commute rather than work from home, which means that relation $\etor^R$ is left-total. Another simplifying assumption is that each employment record maps to exactly one commuting route, hence, relation $\etor^R$ is a single-valued mapping. Finally, several people living at the same address may work for the same company, which leads to  different employment records mapped to the same route and thus injectivity is not required. Hence, we have the following intermodel constraint:

\smallskip
\noindent \begin{tabular}{l@{\qquad}l}
	(C2) &  relation $\etor^R$ is a total function 
	${\Empl^\aiandii}\longrightarrow{\Route^\aiii}$
\end{tabular} 

\smallskip
\noindent which is specified in the metamodel $M_R$ in \figri\ by the corresponding multiplicities.  However, the metamodel $M_R$ as shown in \figri, is still an incomplete specification of correspondences between models. 

For each employment record $e\in\Empl^\aiandii$, there are defined the \corring\ person $$p=e.\eee^\aiandii\in\Person^\aiandii\eqdef\Person^\ai\cup\Person^\aii$$ with attribute 
\[
p.\lives^\aiandii=\left\{
\begin{array}{ll}
p.\lives^\ai & \mbox{if $p\in\Person^\ai$,}
\\ [1ex]
\bot & \mbox{otherwise.}
\end{array}  
\right.
\]
We thus assume that the domain \Addr\ is extended with a bottom value/null $\bot$. 
Similarly, we define set $\Company^\aiandii$ with attribute $\loc^\aiandii$ and the \corring\ ``end'' $e.\eer^\aiandii \in \Company^\aiandii$ for any employment record $e$. The latter is of special interest for the agency if both addresses,
$$e.\eee^\aiandii.\lives^\aiandii\mbox{ and } e.\eer^\aiandii.\loc^\aiandii,$$ 
are defined and thus define a certain commuting route to be consistent with its image $e.\etor^R$ in $\Route^\aiii$.

More generally, the consistency of the two sets of commuting routes: that one derived from $A_1$ and $A_2$, and that one stored in $A_3$, can be specified as follows:  

\smallskip
\noindent \begin{tabular}{l@{\qquad}ll}
\multirow{2}{*}{(C3)}  & 
$e.\eee^\aiandii.\lives^\aiandii$ &$\le e.\etor^R.\from^\aiii$
\\
& $e.\eer^\aiandii.\loc^\aiandii$ &$\le e.\etor^R.\toattr^\aiii$ 
\end{tabular}

\smallskip
\noindent where inequality $a_1\le a_2$ holds iff both values are certain and $a_1=a_2$, or both values are nulls, or $a_1$ is a null while $a_2$ is certain.  

Now it is easy to see that multimodel $(A_1, A_2, A_3, R)$ in \figrii\ is ``two-times'' inconsistent: (C1) is violated as both \ai\ and \aii\ know Mary and IBM, and (IBM,Mary)$\in$ $\eee^\aii$ but (Mary, IBM)$\notin\eer^\ai$, and (C2) is violated as \aii\ and $R$ show an employment record $e_2$ not mapped to a route in $\Route^\aiii$. Note also that if we map this record to the route \#1 to fix (C2), then constraint (C3) will be violated. We will discuss consistency restoration in the next subsection, but first we need to finish our discussion of intermodel correspondences. 

\begin{figure}[t]
\centering
    \includegraphics[
                    width=1.025\textwidth,%
                 ]%
                 {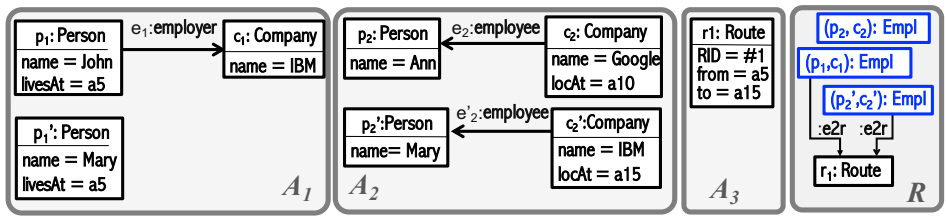}
\caption{Multimodel $\nia^\dagger=(A_1, A_2, A_3, R)$ with $\ptop^R=\{(p_1',p_2')\}$, $\ctoc^R=\{(c_1,c_2')\}$, and $\etor^R=\{(e_1,r_1), ({e'_2}^{-1}, r_1)\}$  
	\label{fig:multimod-uml}}
\end{figure}

Note that correspondences between models have more data than specified in \riii.  Indeed, classes $\Person^\ai$ and $\Person^\aii$ are interrelated by a correspondence  linking persons with the same name, and similarly for \Company\ so that we have two partial injections: 

\smallskip
\label{pg:constr12}
\noindent \begin{tabular}{l@{\qquad}l}
	\rii &  $\ptop^R\subset \Person^\ai \timm \Person^\aii$, 
	        $\ctoc^R\subset \Company^\ai \timm \Company^\aii$,
\end{tabular} 

\smallskip\noindent which are not shown in metamodel $M_R$ (by purely technical reasons of keeping the figure compact and fitting in the page width). 
These correspondence links (we will write corr-links) may be implicit as they can always be restored using names as keys. In contrast, relation \etor\ is not determined by the component model states $A_1,A_2,A_3$ and is an independent piece of data. 
\zdnewwok{Importantly, for given models $A_{1,2,3}$, there may be several different correspondence mappings $\etor^R$ satisfying the constraints. For example, if there are several people living at the same address and working for the same company, all employment record can be mapped to the same route or to several different routes depending on how much carpooling is used.}{}{-2em} 
In fact, multiplicity of possible corr-specifications is a general story; it can happen for relations \ptop\ and \ctoc\ as well if person and company names are not entirely reliable keys. 
Then we need a separate procedure of model matching or alignment that has to establish, \eg,  whether objects $p'_1\in\Person^\ai$ and $p'_2\in\Person^\aii$ both named Mary represent the same real world object. 
	Constraints we declared above implicitly involve corr-links, \eg, formula for (C1)  is a syntactic sugar for  the following formal statement: if $(p_1, p_2)\in \ptop^R$ and 
	$(c_1, c_2)\in\ctoc^R$ with $p_i\inn\Person^{A_i}$, $c_i\inn\Company^{A_i}$ ($i=1,2$),
then the following holds: 
$(p_1,c_1)\in \eer^\ai$ iff $(c_2,p_2)\in \eee^\aii$. A precise formal account of this discussion can be found in \cite{me-ecmfa17}.

Thus, a multimodel is actually a tuple $\nia=(\ai, \aii, \aiii, R)$ where $R$ is a \corrce\ specification (which, in our example, is a collection of correspondence relations \riii, \rii\ over sets involved). 
Consistency of a multimodel is a property of the entire 4-tuple \nia\ rather than its 3-tuple carrier $(\ai, \aii, \aiii)$.


\subsection{\Syncon\ via Update Propagation}
 There are several ways to restore consistency of the multimodel in \figrii\ w.r.t. constraint (C1). We may 
 delete Mary from \ai, or delete her employment with IBM from \aii, or even delete IBM from \aii. We can also change Mary's employment from IBM to Google, which will restore (C1) as \ai\ does not know Google. Similarly, we can delete John's record from \ai\ and then Mary's employment with IBM in \aii\ would not violate (C1).  As the number of constraints and the elements they involve increase, the number of consistency restoration variants grows fast.   

The range of possibilities can be essentially decreased if we take into account the history of creating in\cocy\ and consider not only an inconsistent state $\nia^\bad$ but update \frar{u}{\nia}{\nia^\bad} that created it (assuming that \nia\ is consistent). For example, suppose  that initially model \ai\ contained record (Mary, IBM) (and \aiii\ contained (a1, a15)-commute), and the inconsistency appears after Mary's employment with IBM was deleted in \ai. Then it's reasonable to restore consistency by deleting this employment record in \aii\ too; we say that deletion was propagated from \ai\ to \aii. 
If the inconsistency appears after adding (IBM, Mary)-employment to \aii, then it's reasonable to restore consistency by adding such a record to \ai. Although propagating deletions/additions to deletions/additions is typical, there are non-monotonic cases too. Let us assume that Mary and John  are spouses and live at the same address, and that IBM follows an exotic policy  prohibiting spouses to work together. Then we can interpret addition of (IBM, Mary)-record to \aii\ as swapping of the family member working for IBM, and then (John, IBM) is to be deleted from \ai. 

\zdnewwok{
Now let's consider how updates to and from model \aiii\ may be propagated.  As mentioned above, traceability/correspondence links play a crucial role here. If additions to  \ai\ or \aii\ create a new commute, the latter has to be added to \aiii\ (together with its corr-links) due to constraints (C2) and (C3). In contrast, if a new route is added to \aiii, we may change nothing in $A_{1,2}$ as (C2) does not require surjectivity of \etor\ (but further in the paper we will consider a more intricate policy). If a route is deleted from \aiii, and it is traced via $\etor^R$ to one or several corresponding employments in 
$\Empl^\aiandii$, then they are either deleted too, or perhaps remapped to other routes with the same \from-\toattr pair of attributes if such exist. Similarly, deletions in $\Empl^\aiandii$ may (but not necessarily) lead to the \corring\ deletions in $\Route^\aiii$ depending on the mapping $\etor^R$.  Finally, updating addresses in $A_1$ or $A_2$  is propagated to the \corring\ updates of \from\ and \toattr\ attributes in $A_3$ to satisfy constraint (C3); similarly for attribute updates in $A_3$. 
}{slightly updated}{-5em}

Clearly, many of the propagation policies above although formally correct, may contradict the real world changes and hence should be corrected, but this is a common problem of a majority of automatic \syncon\ approaches, which have to make guesses in order to resolve non-determinism inherent in consistency restoration. 

\subsection{Reflective Update Propagation}\label{sec:reflect}

 An important feature of update propagation scenarios above is that consistency could be restored without changing the model whose update caused inconsistency. However, this is not always desirable. Suppose again that violation of constraint (C1) in multimodel in \figrii\ was caused by adding a new person Mary to \ai, \eg, as a result of Mary's moving to downtown. Now both models know both Mary and IBM, and thus either employment record (Mary, IBM) is to be added to \ai, or record (IBM, Mary) is to be removed from \aii.  Either of the variants is possible, but in our context, adding (Mary, IBM) to \ai\ seems more likely and less specific than deletion (IBM, Mary) from \aii. Indeed, if Mary has just moved to downtown, the data source \ai\ simply may not have completed her record yet. Deletion (IBM, Mary) from
 \aii\ seems to be a different event unless there are strong causal dependencies between moving to downtown and working for IBM. Thus, an update policy that would keep \aii\ unchanged but amend addition of  Mary to \ai\ with further automatic adding her employment for IBM (as per model \aii) seems reasonable. This means that updates can be reflectively propagated (we also say self-propagated). 
 
 Of course, self-propagation does not necessarily mean non-propagation to other directions. Consider the following case: model \ai\ initially only contains (John, IBM) record and is consistent with \aii\ shown in \figrii. Then record (Mary, Google) was added to \ai, which thus became inconsistent with \aii. To restore consistency, (Mary, Google) is to be added to \aii\ (the update is propagated from \ai\ to \aii) and (Mary, IBM) to be added to \ai\ as discussed above (\ie, addition of (Mary, Google) is both amended and propagated). 
 \newpii{Note, however, that in contrast to the previous case, now deletion of the record (IBM, Mary) from \aii\  looks like an equally reasonable scenario of Mary changing her employer. Thus, even for the simple case above, and the more complex cases of model interaction, the choice of the update policy (only amend, only propagate, or both) depends on the context,  heuristics, and tuning the policy to practice. 
 }{1}{-5em}

\newpii{A typical situation that needs an amendment facility is when the changes in interacting models have different granularity. With our simple running example, we can illustrate the point in the following (rather artificial) way. Suppose, again, that the record of Mary working for Google, and her address unknown (\ie, Mary.$\lives=\bot$) is added to model \ai, and propagated to \aii\ as discussed above. Suppose that Google has a strict policy of only hiring those recent graduates who live on Bloor Street in Toronto downtown. Then in Mary's address record, all fields could (and should!) be made certain besides the street number. Hence, adding Mary's employment to model \ai\ should be amended with extending her address with data imposed by model \aii. 
}{2}{-10em}
\newpii{	
For a more realistic example, consider model \aii\ specifying a complex engineering project $P$ in the process of elaboration, while model \ai\ gives its very abstract view -- the budget $B$ of the project. If the budget changes from $B$ to $B'$, the project should also be changed, but it is very likely that the budget of the changed project $P'$ would be $B''$ rather than exactly $B'$. A more general and formal description of this \syncon\ schema can be found in \cite{me-bx17}.
}{2 cont'd}{-2em}
 %


\subsection{General schema}
\facllncs{
\cellW=7.25ex%
\cellH=5.0ex
\begin{figure}
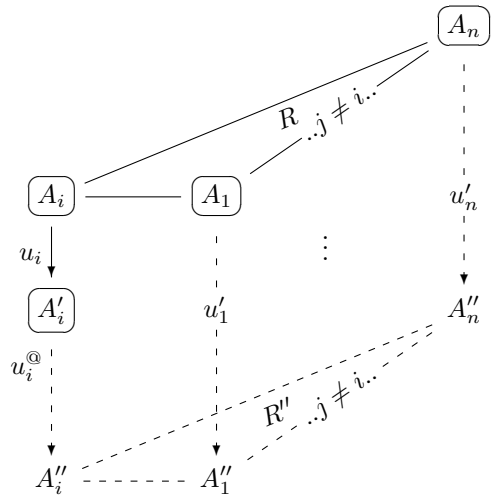

	\centering
\begin{tabular}{c} 
\begin{diagram}[w=1.\cellW,h=1.\cellH] 
&& &&&\dbox{A_n}
\\  
&&&&
\ruLine(5,3)>{\qquad R}
\ruLine(3,3)~{..j\noteq i..}
& 
\\ 
&&&\fakeob&&
\\  
\dbox{A_i} & 
 \rLine                      
  &\dbox{A_1}&&&\dDerupdto~{u'_n}
\\
\dUpdto<{u_i}&  
 &&
 {\qquad\vdots} &&                                         
\\
\dbox{A'_i}&&\dDerupdto~{u'_1} &&& A''_n
\\ 
\dDerupdto<{\amdx{u_i}} &&&&
\ruDashline(5,3)>{\quad R''}
\ruDashline(3,3)~{..j\noteq i..}  
&
\\ 
&&&\fakeob&&
\\ 
A''_i&
\rDashline
& A''_1&&&
\end{diagram}%
\end{tabular}
\caption{Update propagation pattern 
(To distinguish given data from those produced by the operation, the former are shown with framed nodes and solid lines, 
while the latter are non-framed and dashed.)
}
\label{fig:ppgPatterns-reflect}
\end{figure}

}{
\cellW=7.25ex%
\cellH=5.0ex
\begin{wrapfigure}{R}{0.55\textwidth}
\vspace{-3ex}
	\centering
\begin{tabular}{c} 
\begin{diagram}[w=1.\cellW,h=1.\cellH] 
&& &&&\dbox{A_n}
\\  
&&&&
\ruLine(5,3)>{\qquad R}
\ruLine(3,3)~{..j\noteq i..}
& 
\\ 
&&&\fakeob&&
\\  
\dbox{A_i} & 
 \rLine                      
  &\dbox{A_1}&&&\dDerupdto~{u'_n}
\\
\dUpdto<{u_i}&  
 &&
 {\qquad\vdots} &&                                         
\\
\dbox{A'_i}&&\dDerupdto~{u'_1} &&& A''_n
\\ 
\dDerupdto<{\amdx{u_i}} &&&&
\ruDashline(5,3)>{\quad R''}
\ruDashline(3,3)~{..j\noteq i..}  
&
\\ 
&&&\fakeob&&
\\ 
A''_i&
\rDashline
& A''_1&&&
\end{diagram}%
\end{tabular}
\caption{Update propagation pattern 
(To distinguish given data from those produced by the operation, the former are shown with framed nodes and solid lines, 
while the latter are non-framed and dashed.)
}
\label{fig:ppgPatterns-reflect}
\vspace{-3ex}
\end{wrapfigure}

}
 A general schema of update propagation including reflection is shown in \diarefii. We begin with a consistent multimodel $(\ai...\xai{n},R)$%
 \footnote{Here we first abbreviate $(\ai,\ldots,\xai{n})$ by $(\ai...\xai{n})$, and then write $(\ai...\xai{n},R)$ for $((\ai...\xai{n}),R)$. We will apply this style  in other similar cases, and write, \eg, $i= 1...n$  for $i\in\{1,...,n\}$.
 }
 one of which members is updated \frar{u_i}{A_i}{A'_i}. 
 The propagation operation, based on a priori defined propagation policies as sketched above, produces:
 
\noindent a) updates on all other models \frar{u'_j}{A_j}{A''_j}, $1\le j\noteq i \le n$; 
 
\noindent b) an amendment \frar{\amdx{u_i}}{A'_i}{A''_i} to the original update; 
 
\noindent 
c) a new correspondence specification $R''$ 
such that the updated multimodel 

$\qquad(A''_1...A''_n, R'')$ 

\noindent is consistent. 
 
Below we introduce an algebraic model encompassing several operations and algebraic laws formally modelling situations considered so far. 

  
 It is easy to understand that consistency restoration will need \syncon\ pattern in \diarefii\ rather than \diarefi\ when global consistency constraints imply different granularity of changes on the different  sides. E.g., ``small'' change $u_1$ on the $A_1$ side forces one or more``big'' changes $u'_i$ on the sides of the multimodel, which have to be matched by an additional change $u'_1$ on the $A_1$-side. This completion idea can be formally expressed by the condition that if the original update would be equal to the composition \frar{u_1;u'_1}{A_1}{A''_1}, its propagations $u'_j$  to other models would be the same. 
 It is also clear that the patterns in \diarefi\ is a special cases of pattern in \diarefii\ for which the reflection $u'_i$ is the idle (identity) update.

 An important feature of update propagation scenarios above is that consistency could be restored without changing the model whose update caused inconsistency. However, this is not always possible. Consider an update \frar{u_1}{A_1}{A'_1} that consists of adding a new person $p$ to set $A_1(\person)$.  Suppose that address ${\rm x@gm.com}=p.A_1(\ident)$ is listed in $A_2(\addr)$ as the login address of some member $m$ of a network $n\in A_2(\netw)$. Hence, we add the link $p=m$ to span $r'$. Now, as network membership data are optional for persons in $A_1$, they can be omitted in the updated state $A'_1$. However, if network $n$ belongs to $A_1(\netw)$ as well (formally, there is a link $(n1,n)\in r$ for some $n1\in A_1(\netw)$), in order to respect the global constraint $C_{\memb}$, we need to extend model $A'_1$ with the respective link $(n1,p)\in A'_1(\memb)$).  Moreover, if person $m$ is a member of several other networks in the set $A_1(\netw)\cap A_2(\netw)$, all those membership links must also be added to model $A'_1$. Moreover, login addresses for those networks must be added to the set $p1.A'_1(\emails)$ to respect the global constraint $C_\in$. Thus, consistency restoration requires not only propagating $u_1$ to an update on $A_2$, but also an additional {\em reflective} update \frar{u'_1}{A'_1}{A''_1} as shown in diagram \diarefii\ with $i=1$. 

%

Consider the UML class diagrams in Fig.\ref{fig:example}. They show 3 class diagrams. The intention is to keep track of possible car sharing opportunities as follows: $M_1$ may be a model of an authority to specify employment relations of persons. For each person its address is maintained. Several persons may live at the same address, persons may be unemployed, otherwise relation $empl_1$ assigns to a person its set of companies she works for. $M_2$ is a model (probably at a different authority), which keeps track of companies, their resp.\ locations and the set of their employees. Each person must be an employee, several companies may be located at the same address. 

Finally, $M_3$ specifies a (computed) view of data typed over $M_1$ and $M_2$: For each person $p$ living at $a_1$ and being employed at company $c$, which is located at $a_2$, there is an instance of type ''Home$\to$Work'' ($HW$), which logs the fact that there is someone driving from $a_1$ to $a_2$ such that it is possible to recommend joining $p$ to persons with the same journey to the company. Likewise and redundantly, relation ''Work$\to$Home($WH$)'' is computed to recommend the joining someone returning from the company. Note that we do not require the latter relation to be the inverse of the former in general: It may be possible that there is an $HW$-pair $(a_1, a_2)$ without an $WH$-pair $(a_2,a_1)$. {\em We only claim that $HW = WH^{-1}$ after the above described view computation has been carried out.}

Let's consider data $A_1, A_2, A_3$ typed over $M_1$, $M_2$, $M_3$, resp.: Then there due to the described view computation above, there is an obvious consistency requirement for this data: 
\begin{enumerate}
  \item\label{constr:c1} We define constraint \namefont{viewCons} by requiring formally that $A_3$ is computed from $A_1$ and $A_2$:   
  \[\namefont{viewCons}: \;HW = \{(p.livesAt, c.locatedAt) \mid (p,c)\in empl_1\mbox{ and }(p,c)\in empl_2\}\]
  and 
  \[\;WH = \{(c.locatedAt,p.livesAt) \mid (p,c)\in empl_1\mbox{ and }(p,c)\in empl_2\} \]
\end{enumerate}         
But the authorities also agree on two more requirements (we write $A_i(T)$ for all data elements in $A_i$ being typed over class $T$, e.g. $A_1(Company)$ is the set of all companies in $A_1$): 
\begin{enumerate}
  \item[2]\label{constr:c2} Since $M_2$ is supposed to maintain all known companies in the area, each company being contained in $A_1$ shall also be contained in $A_2$: 
  \[\namefont{compCons}: \;A_1(Company) \subseteq A_2(Company)\]
\item[3]\label{constr:c3} The employment relation shall be consistent! I.e.
\[\begin{array}{c}
\namefont{emplCons}: \; (p,c) \in empl_2 \mbox{ and }p\in A_1(Person) \mbox{ implies }  (p,c)\in empl_1 \\
\mbox{ and } \\
(p,c) \in empl_1 \mbox{ implies } (p,c)\in empl_2 \\
\end{array}
\]
The question now arises, how the collection $A_1, A_2, A_3$ of data shall be synchronised, if an update in one of them occurs. One could, \eg, add or delete persons to / from $M_1$ or $M_2$, add / delete companies to / from $M_1$ or $M_2$ or even add or delete information about commuting to / from companies from / to home. It may also be possible, to change $A_3$ (probably becaise some further information about commuting persons has been tracked which could not directly be derived from data $A_1$ and $A_2$).

We consider several scenarios, but, for the sake of simplicity, we narrow the set of possible updates by considering only atomic additions (deletions are similar): We can add single elements: new persons / companies or employment relations in $M_1$ or $M_2$, new commuting relations in $M_3$. We do not consider adding addresses alone to any of the models, since, in this scenario, we assume addresses to belong to either a person or a company. 

This yields the following {\em Update Propagation Policies}: 
\begin{itemize}
  \item Adding a new (still unemployed) person to $A_1$ or adding a new company (initially without employees) to $A_2$ requires no change in the other data.
  \item Adding a new company $comp$ to $A_1$ possibly requires an update w.r.t. constraint \namefont{compCons}: If there is no $c'\in A_1(Company)$ with $c'.cname = c.cname$, add $c$ to $A_1$.
  \item Adding a new person $p$ to $A_2$ requires no change. 
  \item Adding a new employment relation to either $A_1$ or $A_2$ requires adding it to $A_2$ / $A_1$, resp. and requires to recompute the view $A_3$.
  \item Adding a new element $hw$ of type $HW$ to $A_3$ (we only consider $HW$, updates of $WH$ are similar). We have to distinguish 4 cases: 
  \begin{itemize}
    \item $hw.from\in A_1(Address)$ and $hw.to\not\in A_2(Address)$: This is only possible if all persons living at $hw.from$ are unemployed (we denote this person set with $livesAt^{-1}(hw.from)$). In this case one has to determine from further analysis (there is further information available, \eg, the new address $hw.to$ from which the choice maybe narrowed, but probably manual activities have to be taken into account) which of these persons became employed. For future reference, let's call the operation of finding this person {\em newEmployment}. It assigns to a nonempty person set $P$ an employment relation $(p,c)\in empl_1$ with $p\in P$. Now we have to add $(p,c)\eqdef newEmployment(livesAt^{-1}(hw.from))$ to $A_1$ and to $A_2$ together with company $c$ in $A_2$, if it was not already present, such that $c.locatedAt = hw.to$. Moreover, $A_3$ has to be recomputed.   
    \item $hw.from\not\in A_1(Address)$ and $hw.to\in A_2(Address)$:  Then the set $hw.to.locatedAt^{-1}.employs$, i.e.\ all persons $p$ working for companies located at $hw.to$ must not intersect $A_1(Person)$, because otherwise, by \namefont{emplCons}, if such a $p\in A_1(Person)$, then $(p,c)\in empl_1$ for all $c\in hw.to.locatedAt^{-1}$, which means that $hw$ was already in $A_3$ by \namefont{viewCons}. Hence, we have to update as follows: add $hw.from$ to $A_1$ together with a new person $p$ with unknown (still to be determined) $p.pname$ and $p.livesAt = hw.from$. Then apply $newEmployment(\{p\})$ (see above) and update $A_2$ accordingly.
    \item  $hw.from\not\in A_1(Address)$ and $hw.to\not\in A_2(Address)$: Add $hw.from$ to $A_1$, again together with a new person $p$ with name still to be determined, and add $newEmployment(\{p\})$ to $A_1$ and $A_2$. 
    \item  $hw.from\in A_1(Address)$ and $hw.to\in A_2(Address)$, Then all $p\in hw.from.livesAt^{-1}$ must be unemployed, otherwise, by \namefont{viewCons}, $hw$ was already in $A_3$. Thus we can again determine $newEmployment(hw.from.livesAt^{-1})$ and update $A_1$ and $A_2$ accordingly.
  \end{itemize}
{\em Furthermore, in all four cases, an additional update of $A_3$ is needed to keep everything consistent, namely adding $(hw.to, hw.from)$ to $WH$. Later we will call these updates ''reflective''.}    
\end{itemize}   
\end{enumerate}         
{\color{red}Remark for later use: This is a ternary lens $\ell$. It is wb (w.r.t. Stabilty and Reflect 1/2), can not check weak invertibility!

Star composition can well be explained if we assume that $A_1$ and $A_2$ are XML-Files that have to be synchronised with relational database contents $B_1$, $B_2$ storing this data, as well. $B_3 = A_3$ is an xls-sheet.

An asymmetric lens arises from taking the colimit over the colimited model $M^+$ (can informally be described): Get is projection, put is update propagation according to the just described policies. They satisfy PutGetPut. They can be combined to recover the ternary lens $\ell$. It remains to discuss preservation of wb-properties in this example.}  

\section{
Multimodel spaces
}\label{sec:mmod}

In this section we begin to build a formal \fwk\ for delta lenses: model spaces are described as categories whose objects are models and arrows are updates, which carry several additional relations and operations. We also abstractly define correspondences between models and our central notion of a (consistent) multimodel. We will follow an established terminological tradition (in the lens community) to give, first, a name to an algebra without any equational requirement, and then call an algebra satisfying certain equations {\em well-behaved}.   
\zdmoddok{A delta-based mathematical model for bx  is well-known under the name of delta lenses; below we will say just {\em lens}. There are two main variants: asymmetric lenses, when one model is a view of the other and hence does not have any private information, and symmetric lenses, when both sides have their private data not visible on the other side \cite{me-jot11,me-models11,bpierce-popl11}. In the next sections we will develop a \fwk\ for generalizing the idea for any $n\ge 2$ and including reflective updates.}{}{goes to \sectref{sec:lego}??}{-2em}


\subsection{Background: Graphs, (co)Spans, and Categories}

\label{sec:notation} 
We reproduce well-known definitions to fix our notation.
 A {\em (directed multi-)graph} $G$ consists of a set \xob{G} of {\em nodes} and a set \xarr{G} of {\em arrows} equipped with two functions \frar{\so,\ta}{\xarr{G}}{\xob{G}} 
 that give arrow $a$ its {\em source} $\so(a)$ and {\em target} $\ta(a)$ nodes. We write \frar{a}{N}{N'} if $\so(a)=N$ and $\ta(a)=N'$, and \frar{a}{N}{\_} or \frar{a}{\_}{N'} if only one of these conditions is given. 

 {\renewcommand\xarr[1]{#1}
 Expressions $\xarr{G}(N,N')$, $\xarr{G}(N,\_)$, $\xarr{G}(\_,N')$ denote sets of, resp.,  all arrows from $N$ to $N'$, all arrows from $N$, and all arrows into $N'$. 
}
 
 A pair of arrows \frar{a_i}{N}{N'_i}, $i=1,2$, with a common source is called a {\em [binary] span}  with node $N$ its {\em head} or {\em apex}, nodes $N'_i$ {\em feet}, and arrows $a_i$ {\em legs}. 
\hkmodno{Dually, a pair of arrows \flar{a_i}{N}{N'_i}, $i=1,2$, with a common target is called a {\em (binary) cospan} with apex, feet, and legs, 
defined similarly.}{}{Not needed!} 

A {\em [small] category} is a graph, whose nodes are called {\em objects}, arrows are associatively composable,  and every object has a special {\em identity} loop, which is the unit of the composition. In more detail, given two consecutive arrows \frar{a_1}{\_}{N} and \frar{a_2}{N}{\_}, we denote the composed arrow by $a_1;a_2$. The identity loop of node $N$ is denoted by $\id_N$, and equations $a_1;\id_N=a_1$ and $\id_N;a_2=a_2$ are to hold. 
We will denote categories by bold letters, say, \spA, and often write $A\in{\spA}$ rather than $A\in\xob{\spA}$ for its objects. 

A {\em functor} is a mapping of nodes and arrows from one category to another, which respects sources and targets as well as identities and composition. Having a tuple/family of categories $\spA=(\spA_1...\spA_n)$, their {\em product} is a category $\prod\!\spA=\spA_1\timm...\timm\spA_n$ whose objects are tuples $\tupA=(A_1...A_n)\in\xob{\spaA_1}\timm...\timm\xob{\spaA_n}$, and arrows from $(A_1...A_n)$ to $(A'_1...A'_n)$ are tuples of arrows $\tupu=(u_1...u_n)$ with \frar{u_i}{A_i}{A'_i} for all $i=1...n$.   

\hkmodok{Given a category \spC, we can build another category $\Span(\spC)$ as follows\ldots}{}{Would be too much notation}

\facend

\subsection{Model Spaces and updates}
\label{sec:modelspaces}
Basically, a {\em model space} is a category, whose nodes are called {\em model states} or just {\em models}, and arrows are {\em (directed) deltas} or {\em updates}. For an arrow \frar{u}{A}{A'}, we treat $A$ as the state of the model before update $u$, $A'$ as the state after the update, and $u$ as an update specification.  Structurally, it is a specification of correspondences between $A$ and $A'$. Operationally, it is an edit sequence (edit log) that changed $A$ to $A'$. The formalism does not prescribe what updates are, but assumes that they form a category, \ie, there may be different updates from state $A$ to state $A'$; updates are composable; and idle updates \frar{\id_A}{A}{A} (doing nothing) are the units of the composition. 
\hkmodok{
It is also useful to have a more concrete notion of a model space as based on a category \spE\ of models and model embeddings, and then define an update \frar{u}{A}{A'} as a span of embeddings \flar{u_l}{A}{K}  and \frar{u_r}{K}{A'} where $K$ is a submodel of $A$ consisting of all elements preserved by $u$ so that embeddings $u_l$ and $u_r$ specify the deleted and the inserted elements resp. Then a model space \spA\ appears as the category $\Span(\spE)$ for some given \spE. Two ``extreme'' cases for the latter are a) the category of graphs and their inclusions, $\gracat_\subset$, which does not distinguish between objects/references and primitive values/attributes, and b) the Object-Slot-Value model developed in \cite[Sect.3]{me-gttse09} , which does make the distinction between objects and values but does not consider references/links between objects. Integrating a) and b) gives us the category of attributed graphs \cite{gtbook06}. Some of the notions that we axiomatically define below can be given formal definitions for a concrete choice of the category \spE, and we refer to paper \cite{orejas-bx13}, in which some of these definitions can be found for $\spE=\gracat_\subset$.
}{A prominent example of model spaces is the category of graphs where updates are encoded as (certain equivalence classes of) binary spans between them. They are heavily used in the theory of Graph Transformations \cite{gtbook06}. In this way an update $u:A\to A'$ can be a deletion or an addition or a combination of both.}{Thus updates as spans are at least mentioned informally. The formal treatment of the span cat with equivalences and so on is too much here! }
\par
We require every model space $\spA$ to be endowed with two additional constructs of update compatibility. 


\subsubsection{Sequential compatibility of updates} 
We assume a family $(\Kupd_A)_{A\inn\xob{\spaA}}$ of binary relations 
{\renewcommand\xarr[1]{{#1}}
$\Kupd_{A}\subset\xarr{\spaA}(\_,A)\timm\xarr{\spaA}(A,\_) $
}
indexed by objects of \spaA, and 
specifying {\em non-conflicting} or {\em compatible} consecutive updates. Intuitively, an update 
$u$ into $A$ is compatible with update 
$u'$ from $A$, if $u'$ does not revert/undo anything done by $u$, \eg, it does not delete/create objects created/deleted by $u$, or re-modify attributes modified by $u$. For example, one could add Mary's employment at IBM ($\frar{u}{A_1}{A_1'}$) and subsequently add Ann ($\frar{u'}{A_1'}{A_1''}$) to $A'_1$ yielding a pair $(u,u')\in \Kupd_{A_1'}$ (see \cite{orejas-bx13} for a detailed discussion). Later we will specify several formal requirements to the compatibility (see \defref{def:wbspace} below). 


{
\renewcommand\ldots{}
\cellW=7.25ex%
\cellH=5.0ex
\begin{figure} \centering
	\centering
\begin{equation}\label{eq:mergeDef}
\begin{diagram}[w=1.\cellW,h=1.\cellH] 
	&&&\dbox{A}&
	\\  
	&&\ldTo(3,2)^{u_1}\ldots^{\hspace{6ex}--\Kdisj{}--}\hspace{2ex}
	   \ldTo(1,2)_{u_2}&&
	  \\ 
	\dbox{A'_1}&\ldots\hspace{10ex}[\merge] & \dbox{A'_2}&\dDerupdto>{u_1{+}u_2} &
	\\  
	&\rdDerupdto(3,2)_{\widetilde{u_1}}&&\ldots\rdDerupdto(1,2)^{\widetilde{u_2}}&
		\\ 
	&&&A''&
	\\
\end{diagram}%
\end{equation}
\caption{Update merging}\label{fig:updMerge}
\end{figure}
}

\subsubsection{Concurrent compatibility of updates and their merging} 
Intuitively, a pair of updates $u=(u_1, u_2)$ from a common source $A$ (\ie, a span) as shown in \figref{fig:updMerge} diagram \eqref{eq:mergeDef} is called {\em concurrently compatible}, if it can be performed in either order leading to the same result -- an update \frar{u_1+u_2}{A}{A''}. Formally, in the case of concurrent compatibility of $u_1$ and $u_2$, we require the existence of update \frar{u_1+ u_2}{A}{A''} 
and updates $\frar{\tilde{u_i}}{A_i'}{A''}$ such that $u_1;\tilde{u_1} = u_2;\tilde{u_2}=u_1+u_2$. Then we call updates $u_i$ {\em mergeable}, \zdmodok{complement updates $\tilde{u_1},\tilde{u_2}$}{update $u_1+u_2$}{}  their {\em merge}, and
updates $\tilde{u_1},\tilde{u_2}$ {\em complements}, and write $(\tilde{u_1},\tilde{u_2})=\merge_A(u_1,u_2)$ or else $\tilde{u_i}=\merge_{A,i}(u_1,u_2)$. We will also denote the model $A''$ by $A'_1+_A A'_2$. For example, for model $A_1$ in Fig.\ref{fig:multimod-uml}, we can concurrently delete John's and add Mary's employments with IBM, or concurrently add two Mary's employments, say, with IBM and Google. But deleting Mary from the model and adding her employment with IBM are not concurrently compatible. Similarly, in $A_3$, updating addresses of different routes, or updating the \from\ and \toattr\ attributes of the same route are concurrently compatible, but deleting a route and changing its attributes are incompatible (we will also say, {\em in conflict}). 
%
%
We denote the set of all mergeable spans with apex $A$ by $\Kdisj{}_A$.

The definition of concurrent compatibility is a generalization of the notion of parallel independence of graph transformation rules \cite{gt-EEPT-2006}. Below we will elaborate further on the interplay of sequentially and concurrently compatible pairs.

Now we can define model spaces.
\begin{defin}[Model Spaces] \label{def:mspace}
	A {\em model space} is a tuple 
	\[\spaA=(|\spaA|, \Kupd, \Kdisj{}, \merge)\]
	of the following four components. The first one is a category |\spaA| (the {\em carrier}) of {\em models} and {\em updates}. We adopt a notational convention to omit the bars $|$ and denote a space and its carrier category by the same symbol \spA. 
	The second component is a family 
	\[\Kupd=\comprfam{\Kupd_A\subset \spA(\_,A)\timm\spaA(A,\_)}{A\inn\xob{\spA}}\]	
of {\em sequential compatibility} relations  for sequential update pairs as described above. The third and forth component are tightly coupled:
	\[\Kdisj{}=\comprfam{\Kdisj{}_A\subset \spA(A, \_)\timm\spA(A,\_)}{A\inn\xob{\spA}}\]
is a family of {\em concurrently compatible} or {\em mergeable} spans of updates, and 
\[\merge =\comprfam{ \merge_A}
	{A\inn\xob{\spA}}\]
is a family of merge operations as shown in \eqref{eq:mergeDef}. The domain of operation $\merge_A$ is exactly the set $\Kdisj{}_A$. 
We will denote the components of merge by $\merge_{A,i}$ with $i=1,2$ and omit index $A$ if they are clear from the context or not important. Writing $\merge_A(u_1, u_2)$ implicitly assumes $(u_1, u_2) \in \Kdisj{}_A$. \qed
\end{defin}
\begin{defin}[Well-behaved Model Spaces]\label{def:wbspace}
	\noindent 
{\renewcommand\&{\wedge}
\begin{center}
\begin{tabular}{l@{\quad}l}
	\lawnamebr{\Kupd\ and~ Id}{} &	
	For all $u\inn{\spA}(\_,A)$, $u'\inn{\spaA}(A,\_)$: $(u,\id_A), (\id_A,u')\inn\Kupd_A$, 
	\\ [1.5ex]
		&	
For any three consecutive updates $u,v,w$, we require: 
	\\  
		\lawnamebr{\Kupd\Kupd }{1}&
			$(u\Kupd\,vw$ $\&$ $v\Kupd w)$ imply 
			$(u\Kupd v$ $\&$ $uv\, \Kupd w)$,
			\\
		\lawnamebr{\Kupd\Kupd }{2} &
		 $(uv\,\Kupd w$ $\&$ $u\Kupd v)$ imply $(v\Kupd w$ $\&$ $u\Kupd\,vw)$,
			 \\
			 & where composition is denoted by concatenation ($uv$ for $u;v$ etc). 
	\\ [1ex]
\lawnamebr{\Kdisj{}~and ~\Kupd}
{} &
For all $(u_1,u_2)\in\Kdisj{}_A$: $(u_i, \widetilde{u_i})\in\Kupd_{A'_i}$, $i=1, 2$,
where $\widetilde{u_i}=\merge_{A,i}(u_1, u_2)$
		\\ [1.5ex]
	\multirow{2}{*}{\lawnamebr{MergeSym}{}}& For all $\frar{u_1}{A}{A'_1}$, $\frar{u_2}{A}{A'_2}$, if $(\widetilde{u_1}, \widetilde{u_2})=\merge_A(u_1, u_2)$,
	\\ & then $(\widetilde{u_2}, \widetilde{u_1})=\merge_A(u_2, u_1)$ 
	\\ [1.5ex]
	\lawnamebr{MergeId}{}& For all $\frar{u}{A}{A'}$: $\merge_A(u,\id_A) = (\id_{A'},u)$    
\end{tabular}
\end{center} \qed 
}  
\end{defin}
The first condition is the already discussed natural property for sequential compatibility. The pair of conditions below it requires complementing updates not to revert anything done by the other update. The last three conditions are obvious requirements to the operation \merge. 
\par
We assume all our model spaces to be wb and, as a rule, will omit explicit mentioning. Each of the metamodels $M_1$..$M_3$ in the running example gives rise to a (wb) model space of its instances as discussed above.

\facend
\subsection{Correspondences and Multimodels}
\zdmodtok{This is the newer non-functorial version of the section. To switch to the functorial one -- switch from multimodSpaces-last to multimodSpaces-new}{}{}
We will work with families of model spaces indexed by a finite set $I$, whose elements can be seen as space {\em names}. To simplify notation, we will assume that $I=\{1,\ldots,n\}$ \zdmoddno{although ordering will not play any role in our \fwk.}{}{Importantly!!}{-2ex} Given a tuple of model spaces $\spA_1, \ldots, \spA_n$, we will refer to objects and arrows of the product category $\prod\spA=\spaA_1\timm\cdots\timm\spaA_n$ as {\em tuple models}  and {\em tuple updates},  
 and denote them by letters without indexes, \eg, \frar{\tupu}{\tupA}{\tupA'} is a family \frar{u_i}{A_i}{A'_i}, $i=1..n$.  
We will call components of tuple models and updates their {\em feet}. We also call elements of a particular space $\spA_i$ {\em foot} models and {\em foot} updates. 

\zdmoddtok{For this paper, we are interested in a special class of tuple updates, whose all feet besides one are identities, \ie, updates of the type $(\id_{A_1}..u_i..\id_{A_n})$ for some $i\le n$; we will call such updates {\em singular}. 
Given a tuple model $A=(A_1...A_n)$, any foot update \frar{u_i}{A_i}{A'_i} gives rise to a unique singular update \frar{\ovr{u_i}}{A}{\ovr{A'_i}}, 
where $\ovr{A'_i}=(A_1..A'_i..A_n)$ and $\ovr{u_i}=(\id_{A_1}..u_i..\id_{A_n})$ (and any singular update is such). Note also that any tuple update $u=u_1...u_n$ can be represented as a sequential composition $\ovr{u_1}...\ovr{u_n}$ of singular updates (taken in any order). 
}{}{not sure but i seems we don;t need it}{-4em}

\begin{defin}[Multispaces, Alignment, Consistency]
\label{def:mac}
	Let $n\ge 2$ be a natural number. An {\em n-ary multi-space \zdmodtok{(with alignment)}{}{}}  is a triple $\mspA=(\spA, \Corr, \prt)$ with the following components. The first one is a tuple of (wb) model spaces $\spA = (\spaA_1, \ldots, \spaA_n)$ called the {\em boundary} of \mspA. 
	The other two components is a class \Corr\ of elements called {\em (consistent) correspondences} or {\em corrs}, and a family of {\em boundary} mappings $\prt=\comprfam{\frar{\prt_i}{\Corr}{\xob{\spA_i}}}{i=1..n}$.  
	
	\newcommand\prtbold{\ensuremath{\boldsymbol{\prt}}}
	A corr $R\in\Corr$ is understood as a correspondence specification interrelating models $A_i=\prt_iR$; the latter are also called {\em R's feet}, and we say that models $A_i$ are aligned via $R$. We write $\prt R$ for the tuple $(\prt_1R...\prt_nR)$. When we consider several multispaces and need an explicit reference, we will write \prtbold \mspA\ for the boundary \spA\ of the entire multispace, and \Corrx{\mspA} for its class of corrs. 

\newpii{
Given a model tuple  $A=(A_1...A_n)\in\xob{(\prod\spA)}$, we write  
	$\Corrx{A}$ for the set \compr{R\in\Corr}{\prt R=A%
	}, 
and $\Corrx{A_i}$ for the class  \compr{R\in\Corr}{\prt_i R=A_i}; 
thus, $\Corrx{A}=\bigcap_{i=1..n}\Corrx{A_i}$.}{1}{-3em}
\footnote{With this line of notation, the entire class \Corrsma\ could be  denoted by \Corrx{\spA}.} 
	
Although in this paper all corrs are considered consistent by default, we will often mention their consistency explicitly to recall the context and to ease comparison with similar but a bit more general \fwk s, in which inconsistent corrs are also considered \cite{me-models11,me-jss15}. \qed

\end{defin}
\zdmodtok{the new def above is in the section "lensesFinal/multimodSpaces-last" while the previuous functorial def is in section ".../multimodSpaces" or maybe ".../multimodSpaces-new".}{}{}
\begin{defin}[Multimodels]\label{def:mmodel}	
  A {\em (consistent) multimodel} over a multispace \mspA\ 
  is a couple $\nia=(A, R)$ of a model tuple $\tupA=(A_1..A_n)$\ with a (consistent) corr $R\in\Corrx{\tupA}$ relating the component models. 
\newpii{A {\em multimodel update} \frar{\mmu}{\nia}{\nia'} is a pair $(u, (R,R'))$, whose first component is a tuple update 
$u=\comprfam{\frar{u_i}{\prt_iR}{\prt_iR'}}{i=1..n}$, and the second component is a pair of the old and the new corrs. Identity updates are pairs $(\id, (R,R))$, whose tuple component consists of identities $\id_{\prt_iR}$, $i=1..n$, only. It is easy to check that so defined multimodels and their updates determine a category that we denote by \mspR. 
}{2}{-6.5em}
	\qed
\end{defin}
\begin{remark}\label{rem:corr-upd}\newpii{
	The notions of a corr and a multimodel are actually synonyms: any corr $R$ is simultaneously a multimodel $(\prt R,R)$, and any multimodel $(A,R)$ is basically just a corr $R$ as the feet part of the notion is uniquely restored by setting $A=\prt R$, thus, $\Corr\cong\xob{\mspR}$. We can extend the equivalence to arrows too by defining corr updates exactly as we defined multimodel updates. This would make class \Corr\ into a  category  $\Corrcat$ isomorphic to $\mspR$. Using two symbols and two names for basically the same notion is, perhaps, confusing but we decided to keep them to keep track of the historical use of the terminology. The choice of the word depends on the context: if we focus on the $R$ component of pair $(A,R)$, we say ``corr'', if we focus on the $A$ component, we say ``multimodel''.
}{3: the Remark is new}{-5em}

\newpii{
	Thus, for this paper, multimodel updates are basically tuple updates $u$ while their corr-component $(R,R')$ is trivial (co-discrete in the categorical jargon) and so we actually will not need the category \Corrcat\ (or \mspR) explicitly declared. However, the notion will be useful when we will discuss categorification of the \fwk\ in Future Work \sectref{sec:future}.1, and in Related Work \sectref{sec:related}. 
}{3 cont'd}{-6em}
\end{remark}

%

\begin{example}
The \exIntro\ of Sect.\ref{sec:example} 
gives rise to a 3-ary multimodel space. For $i=1..3$, space $\spA_i$ consists of all models instantiating metamodel $M_i$ in  Fig.\ref{fig:multiMetamod-uml} and their updates. Given a model tuple $A=(A_1, A_2, A_3)$, a consistent corr $R\in \Corrx{A}$ is given by a triple of relations $\ptop^R$, $\ctoc^R$, and $\etor^R$ (the first is specified by formula \riii\ on p.~\pageref{pg:constr123} and the other two by \rii\ on p.~\pageref{pg:constr12}) such that the intermodel constraints (C1-3) are satisfied.
If we rely on person and company names as keys, then relations $\ptop^R$, $\ctoc^R$ are derived from foot models $A_1$ and $A_2$, but if these keys are not reliable, then the two relations are independent components of the multimodel. Relation $\etor^R$ is always an independent component. 

\zdmoddtok{
Realignment (for foot, and hence all) updates is defined as follows. Derivable relations are simply recomputed for the target model tuple after a local update $u_i$. The case of non-derivable relations is more interesting. Suppose, for example, that \ptop\ is an independent relation, and a new person was added to, say, model $A_1$. Nothing changes in the correspondences as deciding whether this new person in $A'_1$ is identical to an existing person in $A_2$  cannot be done automatically in the absence of keys. If a person is deleted from $A_1$, and this person appeared in $\ptop^R$, then the \corring\ pair is to be deleted from $\ptop^R$ yielding an updated relation $\ptop'$.   
In a more refined setting, relation $\ptop^R$ can be encoded as a binary span between $A_1$ and $A_2$, and an update \frar{u_1}{A_1}{A'_1} is also a binary span (relation), cf.\ the introductory remarks in Sect.~\ref{sec:modelspaces}. Then realignment amounts to relation composition (denoted  by $\Join$): $\ptop'=\ptop^R\Join \restr{u_1}{\Person}$ where \restr{u_1}{\Person} is the part of the update affecting set $\Person^{A_1}$.
Two other correspondence relations, \ctoc\ and \etor, can be processed similarly, \eg, $\etor'=\etor^R\Join \restr{u_1}{\Empl}$. 
Thus, we \hkmodok{obtain}{can automatically compute}{important!} the entire new correspondence $R'$ by defining  $\ptop^{R'}=\ptop'$, $\ctoc^{R'}=\ctoc'$, and $\etor^{R'}=\etor'$. Finally, we set $u_1^\bigstar(R)=R'$. Functoriality of $\_^\bigstar$ follows from the associativity of relational composition: the idea is given by the following schema 
	//
$(u_1;v_1)^\bigstar(R)=(u_1;v_1)\Join R= (u_1\Join v_1)\Join R=u_1\Join(v_1\Join R)=u_1^\bigstar(v_1^\bigstar(R))$. 
//
An accurate proof needs an accurate definition of all components, see \cite{DBLP:conf/ecmdafa/KonigD17} 
}{}{ALIGN}{-20em} 

\par
\hkmodok{
Consistency is determined by intermodel constraints (C1-3) considered on p.~\pageref{pg:constr}, and the multimodel shown in \figref{fig:multimod-uml} is inconsistent. To make it consistent, 
we can add, \eg,  to \ai\ an \eer-link connecting Mary to IBM, add to \aiii\ a commute with attributes $\from=a1$ and $\too=a15$, and relate it to John's and Mary's employment with IBM. 
}{}{This piece seems to be misplaced here. Moreover, it is redundant (with parts of Chapter 2)}
\end{example}

\section{Update Propagation and Multiary (Delta) Lenses}
\label{sec:mlenses}

Update policies described in \sectref{sec:example} can be extended to cover propagation of all updates $u_i$, $i\in 1...3$ according to the pattern  in \figref{fig:ppgPatterns-reflect}. \zdnewwok{This is a non-trivial task, but after it is accomplished, we obtain a \syncon\ \fwk, which we algebraically model as an algebraic structure called a {\em (very) well-behaved lens}. In this term,  {\em lens} refers to a collection of diagram operations defined over a multispace of models, each of which takes a configuration of models, corrs and updates, and returns another configuration of models, corrs and updates --- then we say that the operation {\em propagates} updates. A lens is called {\em well-behaved}, if its propagation operations satisfy a set of algebraic laws specified by equations. This terminological discipline goes back to the first papers, in which lenses were introduced \cite{bpierce-popl05}. We define and discuss well-behaved lenses in the next \sectref{sec:wblens}. Additionally, in \sectref{sec:vwblens}, we discuss yet another important requirement to a reasonable \syncon\ \fwk: compatibility of update propagation with update composition, which is specified by the most controversial amongst the lens laws -- the (in) famous Putput. We define a suitably constrained version of the law and call it Kputput, and call a well-behaved lens satisfying the KPutput law {\em very} well-behaved (again following the terminological tradition of \cite{bpierce-popl05}). }{}{-2em} 

\subsection{Well-behaved lenses}\label{sec:wblens}	

\begin{defin}[Symmetric lenses]\label{def:multi-lenses}
	An $n$-ary {\em symmetric lens} is a pair $\ell=(\mspaA, \ppg)$ with \mspaA\ an $n$-ary multimodel called the {\em carrier} of $\ell$, and  $\ppg=\comprfam{\ppg_i}{i=1..n}$ a family of operations of the following arities. Operation $\ppg_i$ takes a consistent corr $R$ with boundary $\prt R=\tupleA$,
	and a foot update \frar{u_i}{A_i}{A'_i} as its input, and returns three data items (a,b,c) specified below. 
	
	(a) an $(n-1)$-tuple of updates \frar{u'_j}{A_j}{A''_j} with $1\le j\not= i\le n$; 
	
	\newpii{
	(b) an {\em amendment} \frar{\amdx{u_i}}{A'_i}{A''_i} to the original update so that we have a new  {\em reflective} update \frar{u'_i=u_i;\amdx{u_i}}{A_i}{A''_i}
\\
	\hspace{2cm} These data define a tuple update \frar{u'}{A}{A''=(A''_1,..,A''_i,..,A''_n)}. 
}{1}{-2em}	
    
    (c) a new corr $R'\in \Corrx{A''}$.

	\noindent In fact, operation $\ppg_i$ completes a (local) foot update $u_i$ to a (global) update of the entire multimodel \frar{\mathbf{u}}{\mathcal{A}}{\mathcal{A'}}, whose  components are $(u_j')_{j\ne i}$, $u'_i=u_i;\amdx{u_i}$, and the pair $(R,R')$ 
	(see also  \figref{fig:ppgPatterns-reflect}). 
	\qed
\end{defin}
\zdnewok{Note that all \ppg\ operations are only defined for consistent corrs and return consistent corrs. The latter requirement is often formulated as a special lens law (often called \lawname{Correctness}) but in our \fwk,  it is embedded in the arity of propagation operations.}{important} 

%
%
{\bf Notation.} If the first argument \acorr\ of operation \ppg$_i$ is fixed, the corresponding family of unary operations (whose only argument is $u_i$) will be denoted by $\ppgr_i$. By taking the $j$th component of the multi-element result, we obtain single-valued unary operations $\ppgr_{ij}$ producing, resp. updates 
$\frar{u'_j=\ppgr_{ij}(u_i)}{A_j}{A''_j}$
for all $j\not=i$ (see clause (a) of the definition) while $\ppgr_{ii}$ returns the amendment $\amex{u_i}$. 
We also have operation $\ppgr_{i\star}$ returning 
a new consistent corr $R''=\ppgr_{i\star}(u_i)$ according to (c).
\begin{defin}[Closed updates]
	Given a lens $\ell=(\mspaA,\ppg)$ and a corr $R\in \Corr{\tupleA}$, we call an update \frar{u_i}{A_i}{A'_i} $R$-{\em closed}, if $\ppgr_{ii}(u_i)=u_i$ 
	(\ie, $\amdx{u_i}=\id_{A_i'}$). 
	An update $u_i$ is {\em closed} if it is $R$-closed for all $R\in\Corrx{A_i}$. Lens $\ell$ is called {\em closed} 
	at foot $\spaA_i$, if all updates in $\spA_i$ \ 
	are $\ell$-closed. \qed
\end{defin} 
\zdmodtok{there was an idea to write here a piece explaining why writing a full lens for the running example is so laborious and actually not needed. Some pieces are commented in the source }{}{}
%
\begin{defin}[Well-behaved lenses]\label{def:wblens}
	A lens $\ell=(\mspaA,\ppg)$ is called {\em well-behaved (wb)} if the following laws hold for all $i=1..n$, $A_i\in\xob{\spaA_i}$, 
	\zdmodtok{$R\in \Kcorr _{A_i}\subset \Corr{A_i}$}{$R\in  \Corrx{A_i}$}{} and \frar{u_i}{A_i}{A'_i}, cf. \figref{fig:ppgPatterns-reflect}
	
	\lawgap=1ex


\smallskip\noindent
\begin{tabular}{l@{\quad}l}
	\lawnamebr{Stability}{i}&
	$\ppgr_{ij}(\id_{A_i})=\id_{A_j}$\mbox{for all $j=1...n$, and } $\ppg^R_{i\star}(\id_{A_i}) = R$
	\\[\lawgap]
	\lawnamebr{Reflect1}{i} &	
	$(u_i, \amex{u_i})\in \Kupd_{A_i'}$ 
	\\[\lawgap]
	\lawnamebr{Reflect2}{ij} &
	$\ppgr_{ij}(u_i;\amex{u_i}) = \ppgr_{ij} (u_i)$ for all $j\noteq i$
	\\  [\lawgap]  
	\lawnamebr{Reflect3}{i} &
	$\ppgr_{ii}(u_i;\amex{u_i})=\id_{A''_i}$ 
	where $A''_i$ is the target model of $\amex{u_i}$,
\end{tabular}
\\ [\lawgap]
\newpii{where in laws \lawnamebr{Reflect1{-}3}{}, \amex{u_i} stands for $\ppgr_{ii}(u_i)$}{1}{-0.5em}
\end{defin}
\zdmoddtok{\lawname{Hippocr} law says that if realignment returns a consistent corr, nothing should be done.}{\lawname{Stability} says that lenses do nothing voluntarily.}{HIPPO}{-1em}
 \lawname{Reflect1} says that amendment works towards ``completion'' rather than ``undoing'', and  \lawname{Reflect2-3} are idempotency conditions to ensure the completion indeed done. 

\zdmodtok{Stability as Corollary of Hippocr is commented in the source}{}{HIPPO}
%
%

\begin{defin}[{\sf Invertibility}] \label{def:invert}
	A wb lens is called {\em [weakly] invertible}, if it satisfies the following law for any $i$, update \frar{u_i}{A_i}{A'_i} and $R\in \corr{A_i}$:
	
	\noindent \begin{tabular}{l@{\quad}l}
		\lawnamebr{Invert}{i}
		& for all $j\noteq i$: 
		$\ppgr_{ij}(\ppgr_{ji}(\ppgr_{ij}(u_i))) = \ppgr_{ij}(u_i)$	
	\end{tabular}
	\qed 	
\end{defin}
This law deals with ``round-tripping'': operation $\ppgr_{ji}$ applied to 
update $u_j=\ppgr_{ij}(u_i)$ results in update $\hat{u_i}$ equivalent to $u_i$ in the sense that $\ppgr_{ij}(\hat{u_i})=\ppgr_{ij}(u_i)$  (see \cite{me-models11} for a motivating discussion). 
\zdmodno{}{Hmm, \\
	\noindent \begin{tabular}{l@{\quad}l}
		\lawnamebr{Invert{-}strong}{i}
		& for all $j\noteq i$ and all $k$: 
		$\ppgr_{ik}(\ppgr_{ji}(\ppgr_{ij}(u_i))) = \ppgr_{ik}(u_i)$	
	\end{tabular}
}{I just recognized that the much stronger version on the left would perhaps be more reasonable: to think about }
\begin{example}[Trivial lenses \trivlensna]\label{ex:TerminalLens} 
	A category consisting of one object and one (necessary identity) arrow is called {\em terminal}. All terminal categories are isomorphic; we fix one, whose object is denoted by $1$ while the category is denoted by \trmcat\ (bold 1). The terminal category \trmcat\ gives rise to a unique {\em terminal space} with  $\Kupd_1 = \{(id_1,id_1)\}$, $\Kdisj{}_1=\{\id_1, \id_1\}$ and $\merge(\id_1, \id_1)=\id_1$.
	
	Any model space \spA\ gives rise to the following {\em trivial} $n$-ary lens \trivlensna. The first foot space $\spA_1=\spA$ while for all $j=2..n$, $\spA_j=\trmcat$.   
	Tuple models are uniquely determined by their first foot, and given such a model $(A,1...1)$, the set of corrs is the singleton $\{\etosing{A}\}$ with $\prt_1\etosing{A}=A$ and $\prt_{2..n}\etosing{A}=1$, and this only corr is considered consistent. Hence, all mutlimodels are consistent and update propagation is not actually needed. For any \spA, lens $\trivl_n(\spA)$ is a wb, invertible lens closed 
	at all its feet in a trivial way.\qed  
\end{example}

The next example is more interesting.
\begin{example}[Identity Lenses \idlensna]\label{ex:idLens} 
	Let $\spaA$ be an arbitrary model space. It generates an $n$-ary lens \idlensna\ as follows. The carrier $\mspaA$ has $n$ identical feet spaces: $\prtbf_i\mspaA=\spaA$ for all $i=1..n$. The corr set $\Corr{A}$ for a tuple model $ A=(A_1..A_n)$ is the singleton $\{\etosing{A}\}$ with $\prt_i\etosing{A}=A_i$; this corr is consistent iff $A_1=A_2={...}=A_n$. 
	All updates are propagated to themselves (hence the name {\em identity lens}). Obviously, \idlensna\ is a wb, invertible lens closed 
	at all its feet. \qed 
\end{example}

\subsection{Very well-behaved lenses}\label{sec:vwblens}

We consider an important property of update propagation---its compatibility with update composition. A simple compositionality law would require that the composition $u_i;v_i$ of two consecutive foot updates \frar{u_i}{A_i}{A'_i}, \frar{v_i}{A'_i}{B_i}, is propagated into the composition of propagations: 
$$\ppgr_{ij}(u_i;v_i)=\ppgr_{ij}(u_i);\ppgrx{R''}_{ij}(v_i)$$
 with $R''$ being the corr provided by the first propagation, $R''=\ppgr_{i\star}(u_i)$. It is however well known that such a simple law (called PutPut) often does not hold. 

\begin{figure}[t]
\centering
    \includegraphics[
                    width=0.65\textwidth,%
                 ]%
                 {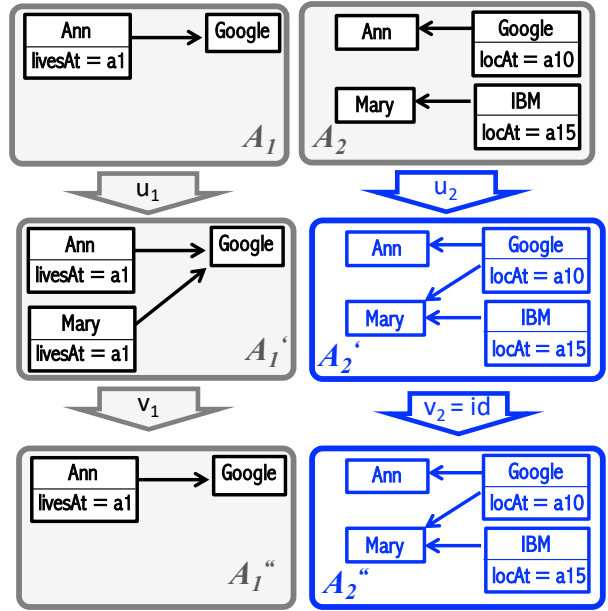}
\caption{Unconstrained Putput is violated - Example 1 \label{fig:ex-putput1}}
\end{figure}
Figure~\ref{fig:ex-putput1} presents a simple example \newpii{(we use a more compact notation, in which values of the attribute \name\ are used as OIDs -- the primary key attribute idea)}{1}{-2em}.  At the initial moment, the binary multimodel $(\ai, \aii)$ with the (implicit) corr given by name matching is consistent: the only intermodel constraint (C1) is satisfied. Update $u_1$ adds a new employment record to model \ai, constraint (C1) is violated, and to restore consistency, a new employment is added to \aii\ by update $u_2=\ppgr_{12}(u_1)$ (the new propagated model and update are shown with blue lines and blank).  Then  update $v_1$ deletes the record added by $u_1$, but as the resulting multimodel $(A''_1, A'_2)$  remains consistent, nothing should be done \newpii{(if we follow the Hippocraticness principle in bx introduced by Stevens \cite{stevens-sosym10})}{2}{-2ex} and thus $v_1$ is propagated to identity, $v_2=\ppgrx{R''}_{12}(v_1) =\id_{A'_2}$. Now we notice that the composition $(u_1;v_1)$ is identity, and hence is to be propagated to identity, \ie, $\ppgr_{12}(u_1;v_1) = \id_{A_2}$, while $\ppgr_{12}(u_1);\ppgrx{R''}_{12}(v_1) =u_2;v_2=u_2 \noteq \id_{A_2}$, and $A''_1\noteq A''_2$.%
\footnote{\label{fnote:oids}
		In more detail, equality of models $A_1$ and $A''_1$ is a bit more complicated than shown in the figure. Objects Ann in \ai\ and Ann in $A''_i$ will actually have different OIDs, but as OIDs are normally invisible, we consider models up to their isomorphism w.r.t. OID permutations that keep attribute values unchanged. Hence, models \ai\ and $A''_i$ are isomorphic and become equal after we factorize models by the equivalence described above.
	}

 Obviously, the violation story will still hold if models \ai\ and \aii\ are much bigger and updates $u_1$ and $v_1$ are parts of bigger updates (but Mary should not appear in \ai).  

\begin{figure}[t]
\centering
    \includegraphics[width=0.9\textwidth]%
                 {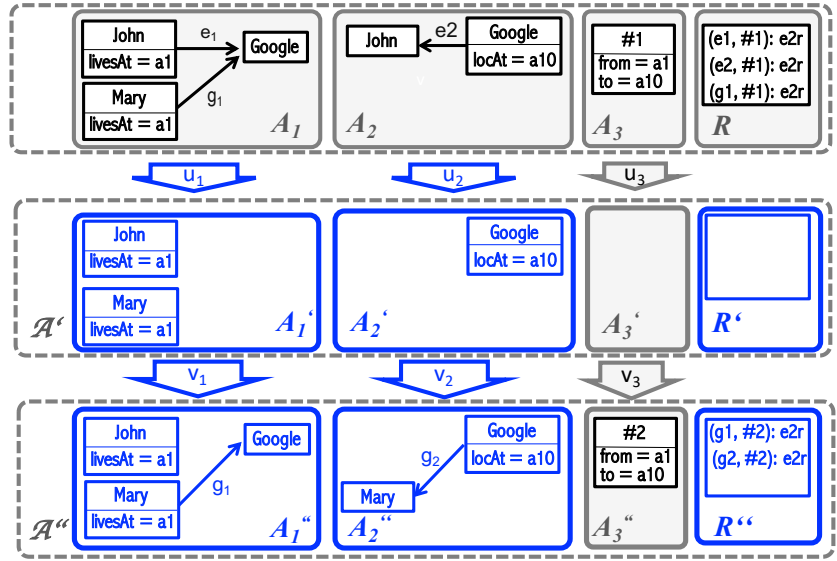}     
\caption{Unconstrained Putput is violated--Example 2 \label{fig:ex-putput2}}
\end{figure}
Figure~\ref{fig:ex-putput2} presents a more interesting ternary example. Multi-model $\nia$ is consistent, but update $u_3$ deletes route \#1 and violates constraint (C2). In this case, the most direct propagation policy would delete all employment records related to \#1\ but, to keep changes minimal, would keep people in $A_1$ and companies in $A_2$ as specified by model $\nia'$ in the figure. Then a new route is inserted by update $v_3$  (note the new OID), which has the same \from\ and \toattr\ attributes as route \#1. There are multiple ways to propagate such an insertion. \newpii{The simplest one is to do nothing as the multimodel $(A'_1, A'_2, A''_3, R'=\varnothing)$ is consistent (recall that mapping \etor\ is not necessarily surjective).  However, we can assume a more intelligent (non-Hippocratic) policy}{1}{-2em} that looks for people with address a1 in the first database (and finds John and Mary), then checks somehow (\eg, by other attributes in the database schema $M_1$ which we did not show) if some of them are recent graduates in IT (and, for example, finds that Mary is such) and thus could work for a company at the address a10 (Google). Then such a policy would propagate update $v_3$ to $v_1$ and $v_2$ as shown in the figure. Now we notice that the composition of updates $u_3$ and $v_3$  is an identity%
\footnote{see the previous footnote~\ref{fnote:oids}},
and thus $u_3;v_3$ would be propagated to the identity on the multimodel \nia. However, $\nia''\noteq\nia$ and Putput fails. (Note that it would also fail if we used a simpler policy of propagating update $v_3$ to identity updates on $A'_1$ and $A'_2$.)

A common feature of the two examples is that the second update $v$ fully reverts the effect of the first one, $u$, while their propagations do not fully enjoy such a property due to some ``side effects''. In fact, Putput is violated if update $v$ would just partially revert update $u$. Several other examples and a more general discussion of this phenomenon that forces Putput to fail can be found in \cite{me-bx17}. However, it makes sense to require compositional update propagation for two {\em sequentially compatible} updates in the sense of \defref{def:mspace}; we call the respective version of the law \kputputlaw\ with 'K' recalling the sequential compatibility relation \Kupd. To manage update amendments, we will also need to use update merging operator \merge\ as shown in \figref{fig:putput}

\begin{defin}[Very well-behaved lenses] \label{def:putput}
	A wb lens is called {\em very well behaved (vwb)}, if it satisfies the following \kputputlaw\ for any $i$, corr $R\in \corr{A_i}$, and updates \frar{u_i}{A_i}{A'_i}, \frar{v_i}{A'_i}{B_i} (see \figref{fig:putput}).

\newpii{	
	Let $\ameui=\ppgr_{ii}(u_i)$. Suppose that  $(u_i,v_i)\in\Kupd_{A_i'}$  and  $(v_i,\ameui)\in\Kdisj{}_{A_i'}$, 
	then the merge $(\tilde v_i, \widetilde{\amex{u_i}}) = \merge(v_i, \ameui)$ is defined (see \figref{fig:putput}) and the following equalities are required to hold:
\\	
	\smallskip
	\noindent \begin{tabular}{l@{\quad}ll}
		\multirow{2}{*}{\lawnamebr{\kputput}{j\noteq i}}  & 
			$\ppgr_{ij}(u_i;v_i)=
		\ppgr_{ij}(u_i);\ppg^{R''}_{ij}(\widetilde{v_i})$, 
		\\ [0.5ex]	& 
		and $(\ppgr_{ij}(u_i),\ppg^{R''}_{ij}(\widetilde{v_i}))\in \Kupd_{A_j''}$,
				\\ [0.5ex] 
			\lawnamebr{\kputput}{ii}	&
		$\amex{(u_i;v_i)} = \widetilde{\amex{u_i}};\amex{(\widetilde{v_i})}$
			\end{tabular}
		}{1: notation is updated by using \amex{\_} symbols}{-9em}
\\[2ex]
see the dashed arrows in \figref{fig:putput}, which depict the propagation of the composed updates. \qed 	
\end{defin}
\begin{figure}
\[
\xymatrix{ 
& \spA_i &&&& \spA_j && \\
&A_i\ar[d]_{u_i}="u_i"\ar@{..}^R[rrrr]& && & A_j\ar[dd]_{\ppgr_{ij}(u_i)}="ppg1" \ar@{-->}@/^3pc/[dddd]^{\ppgr_{ij}(u_i;v_i)} &&\\
&A_i'\ar[dl]_{v_i}="v_i"\ar[dr]^{\ameui
}="ui'"
\ar@{}|{[\merge]}[dd]& && &  &&\\
B_i\ar[dr]|{~~~\widetilde{\ameui}:=\merge_1(v_i,\ameui)}\ar@{-->}@/_2.5pc/[ddr]|{\amex{(u_i;v_i)}%
} &&A_i''\ar[dl]^(0.4){\widetilde{v_i}%
	:=\merge_2(v_i,\ameui)}\ar@{..}^{R''}[rrr] && &  A_j''\ar[dd]_{\ppg^{R''}_{ij}(\widetilde{v_i})}="ppg2"\ar@{}|{=}[r] &&\\
& B_i+_{A_i'}A_i''\ar@{}|(0.7){=}[l]\ar[d]%
^{\amex{\widetilde{v_i}} 
} 
& && &  \\
& A_i'''\ar@{..}^{R'''}[rrrr]& && & A_j''' &&\\
\ar@{-}@/_1.2pc/^{ \in \Kupd} "u_i";"v_i"
\ar@{-}@/_0.5pc/_{ \in \Kdisj{}} "v_i";"ui'"
\ar@{-}@/_1pc/^(0.3){ \in \Kupd_{A_j''}} "ppg1";"ppg2"
}
\]
\caption{\lawnamebr{\kputput}{}-Law\label{fig:putput} (for update $x$, expression $\amex{x}$ stands for the amendment $\ppg^{R(x)}_{ii}(x)$ with $R(x)$ being the corr at the source of $x$ -- it labels the corresponding horizontal dotted line)} 
\end{figure}
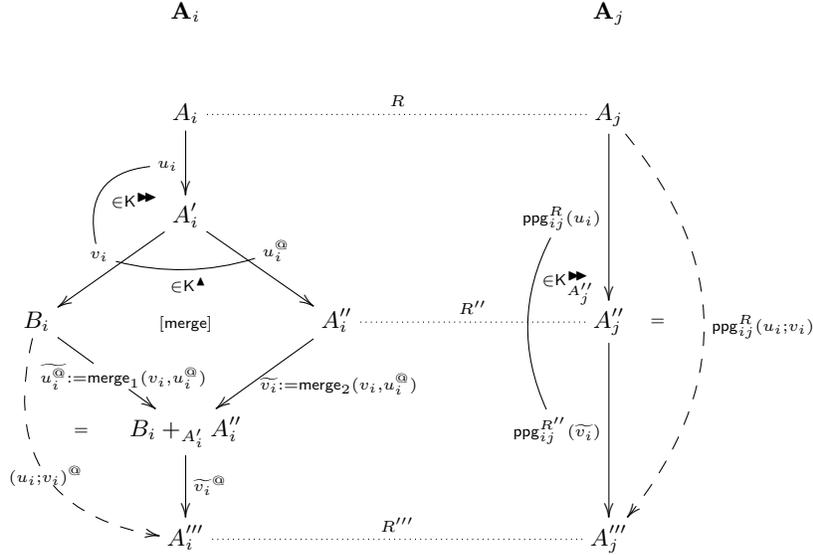

\begin{corollary}[Closed vwb lenses] For a vwb lens as defined above, 
	if update $u_i$ is $R$-closed (\ie, $\ameui=\id_{A'}$), then the following equations hold:

\newpii{ 
		\smallskip
	\noindent \begin{tabular}{l@{\quad}ll}
		\multirow{2}{*}{\lawnameBr{\kputput}{j\noteq i}{closed}}  & 
		$\ppgr_{ij}(u_i;v_i)=
		\ppgr_{ij}(u_i);\ppg^{R''}_{ij}({v_i})$, 
		\\ [0.5ex]	& 
		and $(\ppgr_{ij}(u_i),\ppg^{R''}_{ij}({v_i}))\in \Kupd_{A_j''}$,
		\\ [0.5ex] 
		\lawnameBr{\kputput}{ii}{closed}	&
		$\amex{(u_i;v_i)} = \amex{v_i}$ 
	\end{tabular}
}{2}{-5em}
\end{corollary}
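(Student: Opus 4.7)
The plan is to derive the closed-case equations as an immediate specialization of the general \kputputlaw\ of Def.~\ref{def:putput}. The two hypotheses needed to invoke that law are $(u_i,v_i)\in\Kupd_{A'_i}$, which is already assumed, and $(v_i,\ameui)\in\Kdisj{}_{A'_i}$. Under closedness, $\ameui=\id_{A'_i}$, so the second hypothesis reduces to $(v_i,\id_{A'_i})\in\Kdisj{}_{A'_i}$, which is free of charge: the \lawname{MergeId} axiom in Def.~\ref{def:wbspace} asserts that $\merge_{A'_i}(v_i,\id_{A'_i})$ is defined, and mergeability is exactly the domain of $\merge$.

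Next I would read off the merge explicitly. By \lawname{MergeId} we have $\merge_{A'_i}(v_i,\id_{A'_i})=(\id_{B_i},v_i)$; translating to the tilde-labels used in Fig.~\ref{fig:putput} (using \lawname{MergeSym} if necessary to align orientations), the two complements collapse to $\widetilde{v_i}=v_i$ and $\widetilde{\amex{u_i}}=\id_{B_i}$, while the merge apex $A'''_i$ coincides with $B_i$ and $A''_i=A'_i$.

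It then remains to substitute these values into the two equations of Def.~\ref{def:putput}. Equation \lawnamebr{\kputput}{j\noteq i} becomes $\ppgr_{ij}(u_i;v_i)=\ppgr_{ij}(u_i);\ppg^{R''}_{ij}(v_i)$, and its accompanying sequential-compatibility clause $(\ppgr_{ij}(u_i),\ppg^{R''}_{ij}(\widetilde{v_i}))\in\Kupd_{A_j''}$ transports verbatim to the stated closed form. Equation \lawnamebr{\kputput}{ii} becomes $\amex{(u_i;v_i)}=\widetilde{\amex{u_i}};\amex{(\widetilde{v_i})}=\id_{B_i};\amex{v_i}=\amex{v_i}$ using only the identity-unit law of the category $\spaA_i$.

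The step I expect to be the main obstacle is the bookkeeping around the tilde notation: the two arguments of $\merge$ play symmetric roles, and one must carefully match the orientation used in Fig.~\ref{fig:putput} against the orientation of \lawname{MergeId} as stated in Def.~\ref{def:wbspace} before one can conclude $\widetilde{v_i}=v_i$ (and not $\id_{B_i}$). Once that alignment is fixed, the corollary is pure substitution.
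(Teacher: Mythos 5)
Your derivation is correct and is exactly the intended one: the paper states this corollary without proof, and specializing \lawnamebr{\kputput}{} via \lawname{MergeId} (which, read against the labelling in \figref{fig:putput}, yields $\widetilde{v_i}=v_i$ and $\widetilde{\amex{u_i}}=\id_{B_i}$, so that both equations follow by substitution and the unit law) is all that is needed. The only cosmetic slip is calling the merge apex $A'''_i$ --- in \figref{fig:putput} the apex is $B_i+_{A_i'}A_i''$ (which indeed collapses to $B_i$ in the closed case), while $A'''_i$ is the target of the further amendment $\amex{\widetilde{v_i}}$ --- but this does not affect the argument.
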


\section{Compositionality of Model \Syncon: Playing Lego with Lenses}\label{sec:lego}
We study how lenses can be composed. Assembling well-behaved and well-tested small components into a bigger one is \zdmodok{the quintessence}{a cornerstone}{} of software engineering.
If a mathematical theorem guarantees that desired properties carry over from the components to the composition, additional integration tests for checking these properties are no longer necessary. This makes lens composition results practically important. 
\par
In Sect.~\ref{sec:lego-para}, we consider parallel composition of lenses, which is easily manageable. Sequential composition, in which different lenses share some of their feet, and updates propagated by one lens are taken and propagated further by one or several other lenses, is much more challenging and considered in the other two subsections. In \sectref{sec:lego-star}, we consider ''star-composition'', cf.\ Fig.~\ref{fig:starComposition} and show that under certain additional assumptions (very) well-behavedness carries over from the components to the composition. However, invertibility does not carry over to the composition -- we shows this with a counterexample in \sectref{sec:invert}. In \sectref{sec:lego-spans2lens}, we study how (symmetric) lenses can be assembled from asymmetric ones and prove two easy theorems on the property preservation for such composition. 
\par
\newpii{
Since we now work with several lenses, we need a notation for lens' components. Given a lens $\ell = (\mspaA, \ppg)$, we write $\spA^\ell, \spA(\ell)$ or $\prtbf\ell$ for $\prtbf\mspA$, $\Corr^\ell$ or $\Corrx{\ell}$ for \Corrx{\mspA}, and $\prt^\ell_i(R)$ for the $i$-th boundary of corr $R$.  Propagation operations of the lens $\ell$ are denoted by $\ell.\ppg_{ij}^{R}$, $\ell.\ppg_{i\star}^{R}$. We will often identify an 
aligned multimodel $(A, R)$ and its corr $R$ as they are mutually derivable (see Remark~\ref{rem:corr-upd} on p.\pageref{rem:corr-upd}). 
}{3}{-2.5cm}

We will also need the notion of lens isomorphism.
\begin{defin}[Isomorphic lenses]\newpii{
	Two $n$-ary lenses $\ell$ and $\ell'$ are {\em isomorphic}, $\ell\cong\ell'$, if 
\\	
	(a) their feet are isomorphic via a family of iso\mor\ functors \frar{f_i}{\spA^\ell_i}{\spA^{\ell'}_i}; 
\\	
	(b) their classes of corrs are isomorphic via bijection \frar{f_\Corr}{\Corr^\ell}{\Corr^{\ell'}} commuting with boundaries: $\prt^{\ell'}_i(f_\Corr(R))=f_i(\prt^\ell_i(R))$ for all $i$; 
\\	
	(c) their propagation operations are compatible with isomorphisms above: for any foot update \frar{u_i}{A_i^\ell}{B_i^\ell} and corr $R\in\Corr^\ell(A_i)$ for lens $\ell$, we have $$f_j(\ell.\ppgr_{ij}(u_i))=\ell'.\ppg^{f_\Corr(R)}_{ij}(f_i(u_i)).$$ That is, two composed mappings, one propagates $u_i$ with lens $\ell$ then maps the result to $\ell'$-space, and the other maps $u_i$ to $\ell'$-space and then propagates it with lens $\ell'$, produce the same result. }{4}{-14em}
\end{defin}

\medskip
\subsection{Parallel Lego: Lenses working in parallel
}\label{sec:lego-para}

\newpii{
	We will consider two types of parallel composition. 
	The first is chaotic (co-discrete in the categorical parlance).  Suppose we have several clusters of \synced\ models, \ie, models within the same cluster are \synced\ but models in different clusters are independent. We can model such situations by considering several lenses $\ell_1,...,\ell_k$, each one working over its own multimodel space $\mspA_i$. Although mutually independent w.r.t data, clusters are time-related and it makes sense to talk about multimodels $\nia_1^t,...,\nia_k^t$ coexisting at some time moment $t$: such a tuple of \mmodel s can be seen as  the state of the multi-multimodel $\nia_1\times\ldots\times\nia_k$ at moment $t$. As model clusters are data-independent, we can propagate tuples of updates $(u_1,\ldots,u_k)$ --- one update per cluster, to other such tuples. For example, if we have a ternary lens $\ekk$ with feet $\spA_i$, $i=1..3$, and a binary lens $\ell$ with feet $\spB_j$, $j=1,2$, any pair of updates 
	$(u_i, v_j)\in\spA_i\timm\spB_j$  can be propagated to pairs $(u_{i'},v_{j'})\in\spA_{i'}\timm\spB_{j'}$ with $i'=1..3$ and $j'=1,2$ by the two lenses working in parallel: lens $\ekk$ propagates $u_i$ and $\ell$ propagates $v_j$. The resulting \syncon\ can be seen as a six-ary lens with feet $\spA_i\times\spB_j$. 
	A general construction that composes lenses $\ell_1,\ldots,\ell_k$ of arities $n_1,\ldots,n_k$ resp., into a product lens $\ell_1\times\ldots\times\ell_k$ of arity $n_1\times n_2\times...\times n_k$ 
	is described in \sectref{sec:lego-para}.1
}{1}{-12em}

\newpii{	
	Our second construct of parallel composition is for lenses of the same arity working in a strongly coordinated way. Suppose that our traffic agency has several branches in different cities, all structured in a similar way, \ie, over metamodels $M_i$, $i=1,2,3$ in \figri. However, now we have families of models $A^x_i$ with $x$ ranging over cities. Suppose also a strong discipline of coordinated updates, in which all models of the same type, \ie, with a fixed metamodel index $i$ but different city index $x$, are updated simultaneously. Then global updates are tuples like $\comprfam{\frar{u^x_1}{A^x_1}{A'^x_1}}%
	               {x\in \mathrm{Cities}}$ or $\comprfam{\frar{u^x_2}{A^x_2}{A'^x_2}}%
	               {x\in \mathrm{Cities}}$. Such tuples can be propagated componentwise, \ie, city-wise, so that we have a global ternary lens, whose each foot is indexed by cities. Thus, in contrast to the chaotic parallel composition, the arity of the coordinated composed lens equals to the arity of components.              
	               We will formally define the construct in \sectref{sec:lego-para}.2. 
}{1 cont'd }{-15em}

\newlength\mygap
\mygap=2ex
\newcommand\gapp{\\[5pt]}
\newcommand\iekk{\ensuremath{{i_\ekk}}}
\newcommand\iell{\ensuremath{{i_\ell}}}
\newcommand\jekk{\ensuremath{{j_\ekk}}}
\newcommand\jell{\ensuremath{{j_\ell}}}

\renewcommand\thissec{\ref{sec:lego-para}}
\renewcommand\mysubsubsection[1]{\par\vspace{-2ex}%
	\subsubsection{\thissec.#1}}
\papertr{
\renewcommand\npmarginn[2]{#1} 
}
{
\renewcommand\npmarginn[2]{#1}
}
\mysubsubsection{1. Chaotic Parallel Composition.}
\begin{defin}\label{def:para-comp-chaotic}
	Let $\ekk$ and $\ell$ be two lenses of arities $m$ and $n$. We first choose the following two-dimensional enumeration of their product $mn$: any number $1\le i \le mn$ 
	is assigned with two natural numbers as specified below:
	\begin{equation}
	i\mapsto (\iekk, \iell) =
	\begin{cases}
	(1,i) \text{ if }  1\le i \le n, 
	\\
	(2,i)  \text{ if }  n < i \le 2n,
	\\
	\ldots
	\\
	(m,i)  \text{ if }  (m{-}1)n < i \le mn,
	\end{cases}
	\end{equation}
Of course, we could choose another such enumeration but its only effect is reindexing/renaming the feet while \syncon\ as such is not affected. 
Now we define the {\em chaotic 
		parallel composition} of $\ekk$ and $\ell$ as the $m\timm n$-ary lens $\kl$ with 
{\renewcommand\item{}
	\begin{tabular}{ll}
		{Boundary spaces:} &
	$\prtbf_i^{\kl} = (\prtbf^\ekk_\iekk, \prtbf^\ell_\iell)$ 
	\gapp
		Corrs: &$\Corrx{\kl} = \Corrx{\ekk}\times\Corrx{\ell}$ with boundaries 
		\\ &		
		$\prt_i^{(\kl)}(Q,R)= (\prt_\iekk^\ekk(Q), \prt_\iell^\ell(R))$ for all $i$
		\gapp
 Operations: & Given an update $\niuu_i$ at foot $i$ of lens $\kl$, \ie, a pair of updates
 \\ 
 & $(u_\iekk,v_\iell)$ with \frar{u_\iekk}{A_\iekk}{A'_\iekk}, \frar{v_\iell}{B_\iell}{B'_\iell}, 
 \\ & and corrs  $Q\in\Corr^\ekk(A_{\iekk})$, $R\in\Corr^\ell(B_{\iell})$, we define  
 \\ & $(\kl).\ppg_{ij}^{(Q,R)}(\niuu_i) \eqdef 
 (\ekk.\ppg_{\iekk\jekk}^Q(u_\iekk), \,\ell.\ppg_{\iell\jell}^R(v_\iell))$
 \\ & 
		$(\kl).\ppg_{i\star}^{(Q,R)}(\niuu_i) \eqdef
 (\ekk.\ppg_{\iekk\star}^Q(u_\iekk), \,\ell.\ppg_{\iell\star}^Q(v_\iell)).$
	\end{tabular}}
	Furthermore, for any models $A\in\xob{\spA}$ and $B\in\xob{\spB}$, relations $\Kupd_{\!A\times B}$ and $\Kdisj{}_{A\times B}$ are the obvious rearrangement of elements of $\Kupd_{A}\times \Kupd_{B}$ and $\Kdisj{}_{A}\times \Kdisj{}_{B}$. 
	\qed
\end{defin}
\begin{lemma} If $\ekk$ and $\ell$ are (very) wb (and invertible), then $\kl$ is (very) wb (and invertible).
\end{lemma}
\proof All verifications can be carried out componentwise.\qed
\begin{lemma}
	Chaotic parallel composition is associative up to isomorphism: $(\ekk\times\ell)\times\ell'\cong\ekk\times(\ell\times\ell')$
\end{lemma}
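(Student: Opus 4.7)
The plan is to unwind both sides of the alleged isomorphism using \defref{def:para-comp-chaotic} and observe that every component (feet spaces, corrs, propagation operations) is constructed by Cartesian product, which is associative up to the canonical associator $\alpha_{X,Y,Z}\colon (X\times Y)\times Z \to X\times (Y\times Z)$. The isomorphism of lenses will then be induced componentwise by $\alpha$ applied at the appropriate types.

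First I would fix lenses $\ekk$, $\ell$, $\ell'$ of arities $m$, $n$, $p$ respectively, and make explicit the two induced enumerations of $\{1,\ldots,mnp\}$: on the left-hand side an index $i$ decomposes as $((i_\ekk,i_\ell),i_{\ell'})$ via the two-step rule from \defref{def:para-comp-chaotic}, while on the right it decomposes as $(i_\ekk,(i_\ell,i_{\ell'}))$. Up to reindexing of feet (which is irrelevant for \syncon, as observed in \defref{def:para-comp-chaotic}), the identity map on $\{1,\ldots,mnp\}$ serves as the required bijection between foot indices. I would then define the candidate isomorphism $(f_i, f_\Corr)$ as follows: $f_i$ is the canonical associator of product categories $(\prtbf_{i_\ekk}^\ekk\times\prtbf_{i_\ell}^\ell)\times\prtbf_{i_{\ell'}}^{\ell'}\to\prtbf_{i_\ekk}^\ekk\times(\prtbf_{i_\ell}^\ell\times\prtbf_{i_{\ell'}}^{\ell'})$, and $f_\Corr$ is the associator on the triple product of corr-classes $((Q,R),S)\mapsto (Q,(R,S))$.

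The second step is to verify the three isomorphism conditions in turn. Compatibility of boundaries reduces, after unfolding $\prt^{(\kl)\times\ell'}_i$ and $\prt^{\ekk\times(\ell\times\ell')}_i$, to the equation $((\prt_{i_\ekk}^\ekk Q,\prt_{i_\ell}^\ell R),\prt_{i_{\ell'}}^{\ell'}S) \mapsto (\prt_{i_\ekk}^\ekk Q,(\prt_{i_\ell}^\ell R,\prt_{i_{\ell'}}^{\ell'}S))$, which is exactly what the associator $\alpha$ does. Compatibility of the propagation operations $\ppg_{ij}$ and $\ppg_{i\star}$ is checked by expanding both sides using the defining formulas in \defref{def:para-comp-chaotic} applied twice: each side reduces to the triple $(\ekk.\ppg_{i_\ekk j_\ekk}^Q(u_{i_\ekk}),\,\ell.\ppg_{i_\ell j_\ell}^R(v_{i_\ell}),\,\ell'.\ppg_{i_{\ell'}j_{\ell'}}^S(w_{i_{\ell'}}))$ bracketed in the corresponding way, so the two sides agree modulo $\alpha$. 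The compatibility relations $\Kupd$ and $\Kdisj{}$ on product spaces are also defined componentwise, so the associator preserves and reflects them automatically.

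I do not expect a substantial obstacle here: the result is a standard kind of ``associativity up to isomorphism'' statement that holds because the whole construction is functorial in each argument and built from Cartesian products. The only mild difficulty is keeping the double-index bookkeeping readable; I would address this by introducing once and for all the abbreviations $i\mapsto(i_\ekk,i_\ell,i_{\ell'})$ on the left and on the right, and then doing the verifications in a single displayed equation per lens-isomorphism condition.
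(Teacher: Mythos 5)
Your proposal is correct and follows exactly the route the paper takes: the paper's own proof is the one-line remark ``Straightforward based on associativity of the Cartesian product,'' and your write-up is simply a careful elaboration of that same idea, inducing the lens isomorphism componentwise from the canonical associator on products of feet, corrs, and operations. Nothing is missing; the only content beyond the paper's version is the explicit index bookkeeping, which you handle appropriately.
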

\proof Straightforward based on associativity of the Cartesian product.\qed

The two lemmas imply
\begin{theorem}[Chaotic Parallel Composition] Let $\ell_1,...,\ell_k$ be a tuple of  lenses of arities $n_i$, $i=1..k$. Then $n_1\timm...\timm n_k$-ary lens $\ell_1\timm...\timm\ell_k$ is defined [up to isomorphism], and it is
	(very) wb (and invertible) as soon as all lenses $\ell_i$ are such.  \qed
\end{theorem}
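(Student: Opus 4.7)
The plan is to proceed by induction on $k$, the number of lens factors, reducing the $k$-ary statement to the binary case already handled by the two preceding lemmas. The base case $k=1$ is trivial (a lens is the chaotic product of itself), and the case $k=2$ is exactly the content of the first lemma. For the inductive step, assuming the claim for all tuples of length $<k$, I would define the product
\[
\ell_1\timm\ldots\timm\ell_k \;\eqdef\; \ell_1\timm(\ell_2\timm\ldots\timm\ell_k),
\]
where the inner product is given by the induction hypothesis and the outer product is a binary composition as in \defref{def:para-comp-chaotic}. The inductive hypothesis delivers an $n_2\timm\ldots\timm n_k$-ary lens that is (very) wb (and invertible) whenever $\ell_2,\ldots,\ell_k$ are, so applying the first lemma to the binary composition $\ell_1\timm(\ell_2\timm\ldots\timm\ell_k)$ yields an $n_1\timm\ldots\timm n_k$-ary lens with all three properties preserved componentwise.

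To justify the ``up to isomorphism'' clause, I would invoke the second lemma repeatedly. Any two parenthesizations of $\ell_1\timm\ldots\timm\ell_k$ can be transformed into one another by a finite sequence of reassociation isomorphisms, composing into a single lens isomorphism between the two product lenses. Composition of isomorphisms is an isomorphism, so the choice of parenthesization is immaterial up to $\cong$. Formally, this is the standard MacLane-style coherence argument: associativity of a binary operation up to iso, together with the fact that the iso is natural in its components, implies that all iterated products are canonically isomorphic.

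The main subtlety --- not an obstacle but a point worth an explicit check --- is that being (very) wb and being (weakly) invertible are invariant under lens isomorphism in the sense of the preceding definition. This follows immediately from condition (c) of the definition of isomorphic lenses, which transports propagation operations along the boundary functors $f_i$ and the corr bijection $f_\Corr$; each defining equation of the laws \lawnamebr{Stability}{i}, \lawnamebr{Reflect1{-}3}{}, \lawnamebr{Invert}{i}, and \lawnamebr{\kputput}{}, together with membership in the relations $\Kupd$ and $\Kdisj{}$, is preserved by the functor components $f_i$ (since they are iso\mor s of model spaces and hence preserve composition, identities, and the compatibility relations). Putting the three ingredients together --- the binary preservation lemma, the associativity lemma, and isomorphism-invariance of the lens laws --- completes the induction and establishes the theorem.
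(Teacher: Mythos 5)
Your proposal is correct and follows exactly the route the paper takes: the paper derives the theorem directly from the two preceding lemmas (componentwise preservation for the binary product, and associativity up to isomorphism), which is precisely your induction on $k$ spelled out. The only addition you make --- explicitly checking that the lens laws are invariant under lens isomorphism --- is a detail the paper leaves implicit, and your justification of it via condition (c) of the isomorphism definition is sound.
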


\newcommand\parsym{\ensuremath{|\!\,|}}
\renewcommand\kl{\ensuremath{\ekk\parsym\ell}}
\mysubsubsection{2. Coordinated Parallel Composition.}
\begin{defin}\label{def:para-comp}
	Let $\ekk$ and $\ell$ be two lenses of the same arity $n$. Their {\em coordinated parallel composition} is the $n$-ary lens $\kl$ with 
	\begin{itemize}
		\item Boundary spaces: $\prtbf_i^{\kl} = (\prtbf_i^\ekk, \prtbf_i^\ell)$ for all $1\le i \le n$, 
		\item Corrs: $\Corrx{\kl} = \xstar{\ekk}\times\xstar{\ell}$ with boundaries $\partial_i^{(\kl)}= (\partial_i^\ekk, \partial_i^\ell)$ for all $i$
		\item Operations: 
		If $\niuu_i=(\frar{u_i}{A_i}{A'_i}, \frar{v_i}{B_i}{B'_i})$ is an update at the $i$-th foot $(A_i, B_i)$ of lens $\kl$, 
		and $Q\in \Corr^{\ekk}(A_i)$, $R\in\Corr^{\ell}(B_i)$ are corrs, then \[(\kl).\ppg_{ij}^{(Q,R)}(\niuu_i) \eqdef (\ekk.\ppg_{ij}^Q(u_i), \ell.\ppg_{ij}^R(v_i))\]
		and 
		\[(\kl).\ppg_{i\star}^{(Q,R)}(\niuu_i) \eqdef (\ekk.\ppg_{i\star}^Q(u_i), \ell.\ppg_{i\star}^R(v_i)).\]
	\end{itemize}
	Furthermore, for any models $A\in\xob{\spA}$ and $B\in\xob{\spB}$, relations $\Kupd_{\!A\times B}$ and $\Kdisj{}_{A\times B}$ are the obvious rearrangement of elements of $\Kupd_{A}\times \Kupd_{B}$ and $\Kdisj{}_{A}\times \Kdisj{}_{B}$.  
	\qed
\end{defin}
\begin{lemma} If $\ekk$ and $\ell$ are (very) wb (and invertible), then $\kl$ is (very) wb (and invertible).
\end{lemma}
\begin{proof}
	All verifications can be carried out componentwise. \qed
\end{proof}
\begin{lemma}
	Coordinated parallel composition is associative up to isomorphism: $(\ekk\parsym\ell)\parsym\ell'\cong\ekk\parsym(\ell\parsym\ell')$
\end{lemma}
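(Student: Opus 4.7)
The plan is to exhibit an explicit isomorphism component by component and verify that it commutes with all the structure, relying on the fact that every piece of data in a coordinated parallel composition is defined componentwise, so everything reduces to the ordinary associativity of the Cartesian product.

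First I would fix the bijective ``rebracketing'' maps. For each foot $i=1,\dots,n$ define the functor
\[
f_i \colon \spA_i^{(\ekk\parsym\ell)\parsym\ell'} \longrightarrow \spA_i^{\ekk\parsym(\ell\parsym\ell')},
\qquad ((a,b),c) \longmapsto (a,(b,c)),
\]
acting identically on arrows (which are componentwise triples of updates). This is an isomorphism of categories because Cartesian product of categories is associative up to the standard iso, and it clearly preserves identities, composition, the relations $\Kupd$ and $\Kdisj{}$, and the operation $\merge$, since all of these were defined coordinatewise in Def.~\ref{def:para-comp}. Similarly define the bijection on corrs
\[
f_{\Corr} \colon \Corrx{(\ekk\parsym\ell)\parsym\ell'} \longrightarrow \Corrx{\ekk\parsym(\ell\parsym\ell')},
\qquad ((Q,R),S) \longmapsto (Q,(R,S)).
\]

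Next I would check the three conditions in the definition of lens isomorphism. Compatibility of $f_{\Corr}$ with boundaries is immediate: unfolding the definition of $\prtbf_i^{\parsym}$ twice on each side gives $\prt_i^{(\ekk\parsym\ell)\parsym\ell'}((Q,R),S) = ((\prt_i^\ekk Q,\prt_i^\ell R),\prt_i^{\ell'} S)$, and applying $f_i$ gives exactly $\prt_i^{\ekk\parsym(\ell\parsym\ell')}(f_{\Corr}((Q,R),S))$. Compatibility with propagation operations is equally direct: for an update $((u_i,v_i),w_i)$ at foot $i$ and corr $((Q,R),S)$, both
\[
f_j\!\left(\big((\ekk\parsym\ell)\parsym\ell'\big).\ppgr_{ij}((u_i,v_i),w_i)\right)
\quad\text{and}\quad
\big(\ekk\parsym(\ell\parsym\ell')\big).\ppg_{ij}^{f_{\Corr}((Q,R),S)}(f_i((u_i,v_i),w_i))
\]
unfold, using Def.~\ref{def:para-comp} applied twice on each side, to the same triple $(\ekk.\ppg_{ij}^Q(u_i),\,\ell.\ppg_{ij}^R(v_i),\,\ell'.\ppg_{ij}^S(w_i))$ up to rebracketing; the same holds for the $\ppg_{i\star}$ components.

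I do not expect any genuine obstacle here: since the product of sets and of categories is associative up to the canonical iso, and the coordinated parallel composition pushes all definitions through this product coordinatewise, the proof is essentially bookkeeping. The only point worth being careful about is to state the isomorphism as a family of data $(f_i,f_{\Corr})$ in the sense of the lens isomorphism definition, rather than treating it as a mere equality of underlying sets; once this is done, each verification is a one-line unfolding.
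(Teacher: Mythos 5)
Your proof is correct and follows exactly the route the paper takes: the paper's own proof is the one-line remark ``Straightforward based on associativity of the Cartesian product,'' and your argument simply fills in the bookkeeping (the rebracketing functors $f_i$, the bijection $f_{\Corr}$, and the componentwise verification of boundaries and propagation) that the paper leaves implicit.
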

\proof Straightforward based on associativity of the Cartesian product.\qed

The two lemmas imply
\begin{theorem}[Coordinated Parallel Composition] Let $\ell_1,...,\ell_k$ be a tuple of  lenses of the same arity $n$.  Then $n$-ary lens $\ell_1\parsym...\parsym\ell_k$ is defined [up to isomorphism], and it is
	(very) wb (and invertible) as soon as all lenses $\ell_i$ are such.  \qed
\end{theorem}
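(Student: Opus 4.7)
The plan is to derive the theorem as a direct corollary of the two preceding lemmas by induction on $k$, the number of lenses being composed. The heavy lifting --- verifying well-behavedness, the \kputput\ law, and invertibility --- is already done by the binary case lemma, and associativity up to isomorphism lets us unambiguously extend the binary operation to an arbitrary finite family.

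First I would set up the induction. For the base case $k=1$, the ``composition'' is just $\ell_1$ itself, and the claimed properties trivially carry over. For the inductive step, assume the theorem holds for all tuples of length $k-1$. Given $\ell_1,\ldots,\ell_k$, one defines (up to isomorphism) the $k$-fold composition as $(\ell_1\parsym\cdots\parsym\ell_{k-1})\parsym\ell_k$, where the inner parenthesisation yields, by the inductive hypothesis, an $n$-ary lens enjoying whichever of the properties (wb, vwb, invertibility) all $\ell_i$ have. Applying the binary composition lemma to this lens and $\ell_k$ yields an $n$-ary lens with the same properties. The associativity lemma guarantees that any alternative parenthesisation of $\ell_1\parsym\cdots\parsym\ell_k$ is isomorphic to this one, so the construction is well-defined up to isomorphism and the notation $\ell_1\parsym\cdots\parsym\ell_k$ is unambiguous in this sense.

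It remains to check that isomorphism of lenses preserves each property in the statement (well-behavedness, very well-behavedness, and weak invertibility). This is a routine verification directly from the definition of isomorphic lenses: the defining equations for \lawname{Stability}, \lawname{Reflect1-3}, \lawname{Invert}, and \lawname{KPutput} are all equalities between compositions of propagation operations and update operations, all of which are transported along the boundary isomorphisms $f_i$ and the corr bijection $f_\Corr$ by the compatibility condition (c) in the definition of isomorphic lenses.

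I expect no real obstacle: the main content of the theorem is already encapsulated in the two lemmas immediately preceding it. The only mildly subtle point is that ``defined up to isomorphism'' requires checking that the lens properties we care about are invariant under lens isomorphism, but since each law is an equation between propagation operations and each of these operations is preserved by the isomorphism (both on the corr component and on the foot spaces), this is immediate.
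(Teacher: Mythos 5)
Your proposal is correct and follows exactly the route the paper takes: the paper simply states that the two preceding lemmas (componentwise preservation of the properties under binary composition, and associativity up to isomorphism) imply the theorem, which is precisely the induction you spell out. Your additional observation that the lens laws must be invariant under lens isomorphism is a sensible point the paper leaves implicit, and your justification of it via condition (c) of the isomorphism definition is sound.
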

%
%


\subsection{Sequential Lego 1: Star Composition}\label{sec:lego-star}

\newcommand\stepp[1]{
	{\bf Step #1:}}
\renewcommand\stepp[1]{
	{\newline\indent  \bf Step #1:}}

\cellW=7.ex%
\cellH=7.ex
\begin{figure}[h]
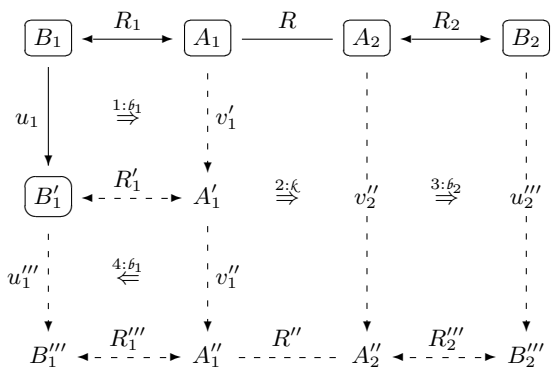

	\centering
	\begin{diagram}[w=1.\cellW,h=1.\cellH]
		\dbox{B_1} & \rCorrto^{R_1} &\dbox{A_1}&\rLine^R&\dbox{A_2}&\rCorrto^{R_2}&\dbox{B_2}
		\\ 
		\dTo<{u_1}&\rtilab{1{:}\ebb_1}&\dDerupdto>{v'_1}&& &&
		\\ 
		\dbox{B'_1}&\rDercorrto^{R'_1}&A'_1&\rtilab{2{:}\ekk}
		             &\dDerupdto~{v''_2}&\rtilab{3{:}\ebb_2}&\dDerupdto~{u'''_2}
		\\ 
		\dDerupdto<{u'''_1}&\ltilab{4{:}\ebb_1}&\dDerupdto>{v''_1}&&&&
		\\ 
		B'''_1&\rDercorrto^{R'''_1}&A''_1&\rDashline^{R''} &A''_2&\rDercorrto{R'''_2}& B'''_2
	\end{diagram}
\caption{Running example via lenses
\label{fig:star-comp-example}}%
\end{figure}
\subsubsection{Running Example Continued.} Diagram in \figriv\ presents a refinement of our example, which explicitly includes relational storage models $B_ {1,2}$ for the two data sources. We assume that object models $A_{1,2}$ are simple projective views of databases $B_{1,2}$: data in $A_i$ are copied from $B_i$ without any transformation, while additional tables and attributes that $B_i$-data may have are excluded from the view  $A_i$.   Synchronisation of bases $B_i$ and their views $A_i$ can be realized by simple {\em constant-complement} lenses $\ebb_i$, $i=1,2$ (see, \eg, \cite{jrw-mscs12}), such that consistent corrs $R_i\in \xstar{\ebb_i}(B_i,A_i)$ (in fact traceability mappings) are exactly those for which the projection of $B_i$ yields $A_i$.

 Finally, let $\ekk$ be a lens \syncing\ models $A_1,A_2,A_3$ as described in \sectref{sec:example}, and $R\in\xstar{\ekk}(A_1,A_2,A_3)$ be a corr for some $A_3$ not shown in the figure.  

Consider the following update propagation scenario. 
Suppose that at some moment we have consistency $(R_1,R,R_2)$ of all five models, \zdnewwok{models $A_{1,2,3}$ are as shown in \figref{fig:multimod-uml} except that model $A_1$ (and the base $B_1$) do not have any data about Mary.}{!!}{-2em} Then model $B_1$ is updated with \frar{u_1}{B_1}{B'_1} that, say, adds to $B_1$ a record of Mary working for Google. 
Consistency is restored with a four-step propagation procedure shown by double-arrows labeled by $x{:}y$ with $x$ the step number and $y$ the lens doing the propagation. %
\stepp{1} lens $\ebb_1$ propagates update $u_1$ to $v'_1$ that adds (Mary, Google) to view $A_1$ with no amendment to $u_1$ as $v'_1$ is just a projection of $u_1$, thus, $B'_1=B''_1$. Note also the updated traceability mapping $R'_1\in \xstar{\ebb_1}(B'_1,A'_1)$
\stepp{2}  lens $\ekk$ propagates $v'_1$ to $v''_2$ that adds (Google, Mary) to $A_2$, and amends $v'_1$ with $v''_1$ that adds (Mary, IBM) to $A'_1$ to satisfy constraint (C1); a new consistent corr $R''$ is also computed.  
\stepp{3} lens $\ebb_2$ propagates $v''_2$ to $u'''_2$ that adds Mary's employment by Google to $B_2$ with, perhaps, some other specific relational storage changes not visible in $A_2$. We assume no amendment to $v''_2$ as otherwise access to relational storage would amend application data. Thus we have a consistent corr $R'''_2$ as shown.
\stepp{4} lens $\ebb_1$ maps update $v''_1$ (see above in Step 2) backward to  $u'''_1$ that adds (Mary, IBM) to $B'_1$ so that $B'''_1$ includes both (Mary, Google) and (Mary, IBM) and a respective consistent corr $R'''_1$ is provided.  
There is no amendment for $v''_1$ by the same reason as in Step 3. 

Thus, all five models in the bottom line of \figriv\ ($A''_3$ is not shown) are mutually consistent and all show that Mary is employed by IBM and Google. \Syncon\ is restored, and we can consider the entire scenario as propagation of $u_1$ to $u'''_2$ and its amendment with $u'''_1$ so that finally we have a consistent corr $(R'''_1,R'',R'''_2)$ interrelating $B'''_1, A''_3, B'''_2$. Amendment $u'''_1$  is compatible with $u_1$ as nothing is undone and condition $(u_1,u'''_1)\in\Kupd_{B'_1}$ holds;  the other two equations required by \lawname{Reflect2-3} for the pair $(u_1,u'''_1)$ also hold. For our simple projection views, these conditions will hold for other updates too, and we have a well-behaved propagation from $B_1$ to $B_2$ (and trivially to $A_3$). Similarly, we have a wb propagation from $B_2$ to $B_1$ and $A_3$. Propagation from $A_3$ to $B_{1,2}$ is non-reflective and done in two steps: first lens $\ekk$ works, then lenses $\ebb_i$ work as described above (and updates produced by $\ekk$ are $\ebb_i$-closed). Thus, we have built a wb ternary lens \syncing\ spaces $\spaB_1, \spaB_2$ and $\spaA_3$ by joining lenses $\ebb_1$ and $\ebb_2$ to the central lens $\ekk$.

\subsubsection{Discussion.} Reflection is a crucial aspect of lens composition. The diagram below describes the scenario above as a transition system and shows that Steps 3 and 4 can be performed in either order. 
\begin{figure}[h]
\centering
	\vspace{-4ex}
	\begin{diagram}[w=2.ex,h=2ex]
		&&&&&&\bullet&&
		\\ 
		&&&&&\ruImplies^{3}&&\rdImplies^{4}&
		\\ 
		\bullet&\rDoubleto_{1}&\bullet&\rDoubleto_{2}&\bullet&&&&\bullet
		\\ 
		&&&&&\rdImplies_{4}&&\ruImplies_{3}&
	     \\ 
	&&&&&&\bullet&&
	\end{diagram}
%
\vspace{-4ex}
\end{figure}

It is the non-trivial amendment created in Step 2 that causes the necessity of Step 4, otherwise Step 3 would finish consistency restoration (with Step 4 being an idle transition ). 
On the other hand, if update $v''_2$ in \figriv\ would not be closed for lens $\ebb_2$, we would have yet another concurrent step complicating the scenario. Fortunately for our example with simple projective views, Step 4 is simple and provides a non-conflicting amendment, but the case of more complex views beyond the constant-complement class needs care and investigation. Below we specify a simple situation of lens composition with reflection a priori excluded, and leave more complex cases for future work.       

\begin{figure}[h]
\vspace{-2ex}
\centering
    \includegraphics[width=0.385\textwidth]%
    {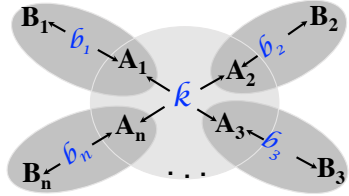} 
\caption{Star Composition}\label{fig:starComposition}
\vspace{-4ex}
\end{figure}
\subsubsection{Formal Definition of Star Composition}\label{def:star-comp}
Suppose we have an $n$-ary lens $\ekk=(\mspaA,\ekk.\ppg)$ with $\mspaA$ based on the model space tuple $\spA = (\spA_1\ldots \spA_n)$,  and for every $i\le n$ a binary lens $\ebb_i =(\spaB_i, \spaA_i, \ebb_i.\ppg)$, with the second model space $\spaA_i$ being the $i$th model space of $\ekk$ (see Fig.\ref{fig:starComposition}, where $\ekk$ is depicted in the center and $\ebb_i$ are shown as ellipses adjoint to $\ekk$'s feet). We also assume that the following {\em Junction Conditions} holds for all $i\le n$:
\begin{equation}\label{eq:junction}
\tag*{\lawnamebr{Junction}{i}}
\mbox{
	\begin{tabular}{l}
   	All updates propagated to $\spaA_i$ by lens $\ebb_i$ are $\ekk$-closed,
	\\ [0.5ex] 
	and all updates propagated to $\spaA_i$ by lens $\ekk$ are $\ebb_i$-closed.
	\end{tabular}
}
\end{equation}

Below we will write the sixtuple of operations $\ebb_i.\ppg^{\acorr_i}$ as the family  \[\comprfam{\ebb_i.\ppg^{\acorr_i}_{xy}}{x\in\{\spA,\spB\}, y\in\{\spA,\spB,\star\}}.\] Likewise we write $\prt^{\ebb_i}_{x}$ with $x\in\{\spaA, \spaB\}$ for the boundary functions of lenses $\ebb_i$.

The above configuration gives rise to the following $n$-ary lens $\ell$. The carrier is the tuple of model spaces $\spaB_1...\spaB_n$ together with their already contained compatible consecutive updates and mergeable updates, resp. Let $B = (B_1\ldots B_n)$ be a model tuple in the carrier, then we define 
\[\xstar\ell(B) = \{(R, R_1...R_n) \mid \exists A = (A_1\ldots A_n) \inn \prod\xob{\spA}: R \in \xstar{\ekk}(A), R_i\in \xstar{\ebb_i}(B_i, A_i)\}\] 
\par 
\zdmoddtok{Note that the disjointment requirement of corr sets in Def.~\ref{def:mspace} yields uniqueness of $A$ in the definition of $\xstar{\ell}(B)$. Thus the realignment function $\xstar{\ell}(u_1\ldots u_n)$ of an update tuple consisting of local updates $\frar{u_i}{B_i}{B_i'}$ can uniqueley be defined: It takes a corr tuple $(R, R_1...R_n)$ and assigns to it the tuple
	\\  
$(R, \xstar{\ebb_1}(u_1,\id_{A_1})...\xstar{\ebb_n}(u_n,\id_{A_n}))$,
\\
i.e.\ it aligns according to idle updates of unique $(A_1\ldots A_n)$ in lens $\ekk$.
}{}{ALIGN}{-2em}
\par

[Consistent] corrs are exactly the tuples $(R, R_1...R_n)$, in which $R$ and all $R_i$ are [consistent] corrs in their respective multispaces. This yields $\prt_i^{\ell}(R, R_1...R_n) = \prt^{\ebb_i}_{\spaB}R_i$ (see Fig.\ref{fig:starComposition}). Propagation operations are defined as compositions of consecutive lens' executions as described below (we will use the dot notation for operation  application and write $x.\mathsf{op}$ for $\mathsf{op}(x)$, where $x$ is an argument). 
 
 Given a model tuple $(B_1...B_n)\in \spB_1\timm...\timm \spB_n$, a corr $(R, R_1...R_n)$, and update \frar{v_i}{B_i}{B'_i} in 
 $\xarr{\spB_i}$,
 we define, first for $j\noteq i$, 
 
 $$
 v_i.\;\ell.\ppg^{(R, R_1...R_n)}_{ij}\;\eqdef\;
 v_i.(\ebb_i.\ppg_{\spB\spA}^{R_i}).
 (\ekk.\ppg^R_{ij}).
 (\ebb_j.\ppg^{R_j}_{\spA\spB}),
 $$
 
 \noindent and then 
 $v_i.\,\ell.\ppg^{(R, R_1...R_n)}_{ii}\;\eqdef\; v_i.\,\ebb_i.\ppg_{\spB\spB}^{R_i}$.
 Note that all internal amendments to $u_i=v_i.(\ebb_i.\ppg_{\spB\spA}^{R_i})$ produced by $\ekk$, and to $u'_j=u_i.(\ekk.\ppg_{ij}^{R})$
 produced by $\ebb_j$, are identities due to the Junction conditions. This allows us to set corrs properly and finish propagation with the three steps above: 
 $v_i.\; \ell.\ppg^{(R,R_1...R_n)}_{i\star}\;\eqdef\;
 (R', R'_1...R'_n)
 $
  where 
 $R'= u_i.\,\ekk.\ppg^R_{i\star}$,  
 $R'_j=u'_j.\,\ebb_j.\ppg^{R_j}_{\spA\star}$ for $j\noteq i$, and $R'_i=v_i.\;\ebb_i.\ppg^{R_i}_{\spB\star}$.
 We thus have a lens $\ell$ denoted by $\ekk^\star(\ebb_1, \ldots, \ebb_n)$. \qed

\begin{theorem}[Star Composition]\label{th:star-comp}
Given a star configuration of lenses as above. Let all underlying model spaces be well-behaved, cf. Def.~\ref{def:wbspace}, and \zdmodtok{. Let lens $\ekk$ fulfill \lawname{Hippocr} and let}{}{} the Junction conditions hold. Then the following holds: 
If lens $\ekk$ and all lenses $\ebb_i$ are (very) wb, then $\ekk^\star(\ebb_1, \ldots, \ebb_n)$ is also (very) wb. 
\zdmodtok{I removed the refined verison of the theorem and replaced it with a simpler one below. The refined version is commneted in the source}{}{}
\end{theorem}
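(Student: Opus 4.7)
The plan is to verify the lens laws for $\ell=\ekk^\star(\ebb_1,\ldots,\ebb_n)$ one by one, in each case unfolding the definition of the star composition from \S\ref{def:star-comp} into the three successive applications of $\ebb_i$, $\ekk$, and the $\ebb_j$'s, and then invoking the corresponding law on each component. The Junction conditions play the role of ``glue'' that collapses intermediate amendments to identities, which is what keeps the three-stage decomposition clean. Throughout, given $v_i\colon B_i\to B'_i$ and $(R,R_1,\ldots,R_n)\in\xstar{\ell}(B)$, I write $u_i=v_i.\ebb_i.\ppg^{R_i}_{\spB\spA}$ for the result lifted to $A_i$, $u'_j=u_i.\ekk.\ppg^R_{ij}$ for its propagation across $\ekk$, and $v'_j=u'_j.\ebb_j.\ppg^{R_j}_{\spA\spB}$ for its descent to $B_j$; these are the three intermediate updates that the star construction chains together.

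\paragraph*{Stability and Reflect1--3.} For \lawnamebr{Stability}{i}, plugging $v_i=\id_{B_i}$ into the chain yields $u_i=\id_{A_i}$ by $\ebb_i.\lawnamebr{Stability}{}$, then $u'_j=\id_{A_j}$ by $\ekk.\lawnamebr{Stability}{}$, and finally $v'_j=\id_{B_j}$ by $\ebb_j.\lawnamebr{Stability}{}$; the corresponding $\star$-components return $R_i,R,R_j$ respectively, so the whole corr is preserved. For \lawnamebr{Reflect1}{i}, the amendment produced by $\ell$ is literally $\ebb_i.\ppg^{R_i}_{\spB\spB}(v_i)$, so $(v_i,\widetilde{v_i})\in\Kupd_{B'_i}$ follows directly from $\ebb_i.\lawnamebr{Reflect1}{}$. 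For \lawnamebr{Reflect2}{ij} with $j\neq i$, I first apply $\ebb_i.\lawnamebr{Reflect2}{\spB\spA}$ to conclude $(v_i;\widetilde{v_i}).\ebb_i.\ppg_{\spB\spA}=u_i$; after this the remaining two stages of the chain produce exactly the same $u'_j$ and $v'_j$ as for the input $v_i$. For \lawnamebr{Reflect3}{i}, the amendment of the composed update is $\ebb_i.\ppg_{\spB\spB}(v_i;\widetilde{v_i})$, which is $\id$ by $\ebb_i.\lawnamebr{Reflect3}{}$. In each case the Junction conditions are not strictly needed because the reflection laws live inside $\ebb_i$, but they are what makes the \emph{definition} of $\ell.\ppg$ in \S\ref{def:star-comp} legitimate (the $\ekk$- and downstream $\ebb_j$-stages produce no amendment on $A$-feet).

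\paragraph*{KPutput (the main obstacle).} Given two consecutive updates $v_i\colon B_i\to B'_i$, $w_i\colon B'_i\to B''_i$ with $(v_i,w_i)\in\Kupd_{B'_i}$ and $(w_i,\widetilde{v_i})\in\Kdisj{}_{B'_i}$, the plan is a three-layer cascade. First, apply \lawname{KPutput} to $\ebb_i$: this yields $\ebb_i.\ppg_{\spB\spA}(v_i;w_i)=u_i;\ebb_i.\ppg^{R'_i}_{\spB\spA}(\widetilde{w_i})$ together with the required $\Kupd$-compatibility on $A_i$, where $\widetilde{w_i}$ is the appropriate merge component. Second, because the Junction condition forces $u_i$ to be $\ekk$-closed (so its $\ekk$-amendment is identity), the $\ekk$-stage of the cascade satisfies the hypotheses of the \emph{closed} variant of \lawname{KPutput} stated just after Def.~\ref{def:putput}: applying it gives $\ekk.\ppg_{ij}((u_i;\bar{w}_i))=u'_j;\ekk.\ppg^{R''}_{ij}(\bar{w}_i)$ for the appropriately propagated $\bar{w}_i$, with the merge collapsing because one of its arguments is an identity. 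Third, the Junction condition on the $\ekk$-output makes every $u'_j$ closed for $\ebb_j$, so the closed \lawname{KPutput} for $\ebb_j$ then delivers the final equation for the $B_j$-foot. Reassembling the three steps gives $\ell.\ppg_{ij}(v_i;w_i)=\ell.\ppg_{ij}(v_i);\ell.\ppg^{\text{new}}_{ij}(\widetilde{w_i})$ and the analogous amendment equation at foot $i$.

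\paragraph*{Where the difficulty concentrates.} The routine parts are the unfolding and the bookkeeping of which corr labels each operation uses. The genuinely delicate part is the KPutput argument: one has to verify that the $\Kupd$- and $\Kdisj{}$-hypotheses are transported correctly from the $B_i$-foot up to the $A_i$-foot (so that $\ekk$'s KPutput is applicable), and then from $A_j$ back down to $B_j$ (so that $\ebb_j$'s KPutput is applicable). Both transports rely on the model-space laws \lawnamebr{\Kupd\ and~Id}{}, \lawnamebr{\Kupd\Kupd}{1,2} and \lawnamebr{\Kdisj{}~and~\Kupd}{} from Def.~\ref{def:wbspace}, together with the conclusions of $\ebb_i$'s and $\ekk$'s own \lawname{Reflect1} and \lawname{KPutput}, which yield exactly the $\Kupd$-membership the next stage needs as its input hypothesis. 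Once these compatibilities are tracked, the Junction conditions eliminate all ``interior'' amendments so that the three $\widetilde{(-)}$ merges in the cascade reduce to a single meaningful merge at the $B_i$-foot, and the equations line up.
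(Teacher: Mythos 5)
Your proposal is correct and follows essentially the same route as the paper's own proof: unfold the star composition into its three component stages, use the Junction conditions to collapse the interior amendments, derive Stability and Reflect1--3 from the corresponding laws of $\ebb_i$ (since the composed lens's amendment is exactly $\ebb_i$'s), and establish KPutput by reducing the $\ekk$- and $\ebb_j$-stages to the closed form of the law via the Junction conditions. Your write-up is in fact more explicit than the paper's about how the $\Kupd$- and $\Kdisj{}$-hypotheses are transported between stages, which the paper only asserts in one sentence.
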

\begin{proof}\zdmoddtok{
For part \ref{th:star-comp1} we need to show that Hippocraticness carries over. Let therefore an update tuple $\overline{u_i} = (\id_{A_1}\ldots u_i\ldots \id_{A_n})$ with $u_i:B_i\to B_i'$ and a consistent corr tuple $(R, R_1\ldots R_n)$ be given. Let the aligned corr $\xstar{\ell}(\overline{u_i})\in \Kcorr$ be consistent. Since it differs from the old one only for the $i$-th component, this means that $\ebb_i.\ppg^{R_i}_{BA}(u_i) = \id_{A_i}$ by hippocraticness of $\ebb_i$. Since $\ekk$ and the other $\ebb_j$ fulfill hippocraticness, they are  stable by corollary \ref{cor:1}, hence further propagations remain identies, as desired.   
}{}{HIPPO}{-2em} 

Fulfilment of \lawname{Stabiliy} is obvious.
\lawname{Reflect1-3} follow immediately from the definition of $\ell.\ppg_{ii}$ above, since the first step of the above propagation procedure already enjoy sequential compatibility and idempotency by \lawname{Reflect1{-}3} for $\ebb_i$. This proves the wb part of the theorem.
\par 
Now we prove that the composed lens is very wb if the components are such. If all model spaces are well-behaved, then operation \merge\ preserves identities, cf. Def.~\ref{def:wbspace}. Then \lawname{KPutPut} for any lens $\ekk$ reduces to the simplified form $\ppg_{ij}^R(u_i;v_i) = \ppg_{ij}(u_i);\ppg_{ij}(v_i)$, if $u_i$ and $v_i$ are $\ekk$-closed. The Junction condition and the fact that operation $\ppg$ preserves the $\Kupd$-property, guarantee that propagation of composed updates from $\spA_i$ to $\spA_j$ and further to $\spB_j$ is not disturbed by reflective updates. Hence, \lawname{KPutPut} carries over from $\ebb_i$ to $\ekk^\star(\ebb_1, \ldots, \ebb_n)$.  
\qed
\end{proof}

\subsection{Lens composition and Invertibility} \label{sec:invert}
\renewcommand\thissec{\ref{sec:invert}}
\renewcommand\mysubsubsection[1]{\par\vspace{-1ex}%
	\subsubsection{\thissec.#1}}

Unfortunately, even if all component lenses are invertible, the composed star-lens is not necessarily such as we will show in the next subsection. In subsection 6.3.2, we discuss a seemingly counter-example to this negative result 
and show that the state-based setting for update propagation can be confusing.   
\mysubsubsection{1 Counter-example.} Consider a class of simple model spaces, whose objects (called models) are natural numbers plus some fixed symbol $\bot$ denoting an undefined number. 
Thus, for a space \spA\ from this class, we assume $\xob{\spA}=\{\bot\}\cup \spA!$ with $\spA!=\{A_1,A_2,{...}\}$  being a set of natural numbers: $A_i=\{i\}$.  Models form the set $\spA!$ are called {\em certain} while models $\bot$ are {\em uncertain}. Updates are all possible pairs of models: $\xarr{\spA}=\xob{\spaA}\timm\xob{\spA}$ (such categories are often called co-discrete or chaotic), that is,
\renewcommand\Cup{~\cup~} 
$$\xarr{\spA}=\{(\bot,\bot)\}
\Cup \{\bot\}\timm \spA!
\Cup \spA!\timm \{\bot\}  
\Cup \spA!\timm \spA!     
$$ 
Now consider three model spaces of the type specified above: space $\spB_1$ with certain models $B_1=\{0,2,6\}$, space $\spA$ with certain models $A=\{2,5,6\}$ and space $\spB_2$ with $B_2=
\{4,7\}$. For all these spaces, mergeable pairs are only those with at least one identitity and the same necessarily holds for $\Kupd$, hence all spaces are well-behaved (cf. \defref{def:wbspace}).

   Suppose we have a lens $\ebb_1$ over spaces $\spB_1$ and $\spA$ (see \figref{fig:counterex}). The set of corrs for a pair $(b,a)\in \xob{\spB_1}\timm \xob{\spA}$ is the singleton set $\{(b,a)\}$. A corr is consistent, if and only if $b,a$ are both certain or both uncertain, \ie, consistency amounts to equal certainty. \zdmodtok{Alignment is obvious (because all corr sets are singletons).}{}{} Then to restore consistency, updates of type $B_1\timm\{\bot\}$ are to be mapped to similar updates of type $A\timm\{\bot\}$, and updates of type $B_1\timm B_1$ can be mapped to the corresponding identity updates on the $A\timm A$ as they do not destroy consistency. The propagation operation from \spA\ to $\spB_1$ of the types above are defined similarly. 
   
    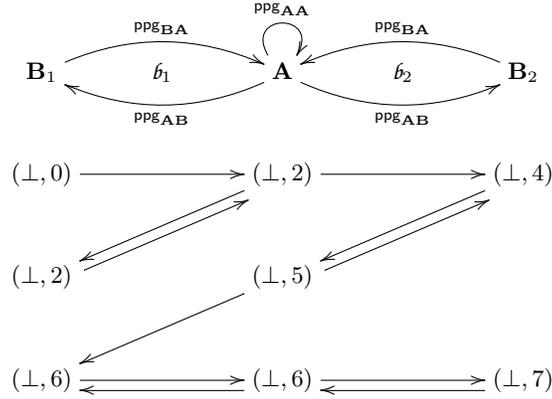
\begin{figure}
\[\xymatrix{
\spB_1 \ar@/^1pc/[rr]^{\ppg_{\spB\spA}}&\ebb_1& 
\spA \ar@/_1pc/[rr]_{\ppg_{\spA\spB}}\ar@(ul,ur)^{\ppg_{\spA\spA}}
	\ar@/^1pc/[ll]^{\ppg_{\spA\spB}}
&\ebb_2& \spB_2\ar@/_1pc/[ll]_{\ppg_{\spB\spA}} \\
(\bot,0) \ar[rr]&& (\bot,2)\ar[dll]\ar[rr] && (\bot,4)\ar[dll] \\
(\bot,2) \ar@<-1.ex>[urr] && (\bot,5)\ar@<-1.ex>[urr]\ar[dll] &&  \\
(\bot,6) \ar[rr]&& \ar@<1.ex>[ll] (\bot,6)\ar[rr] && (\bot,7)\ar@<1.ex>[ll] \\
}
\]
\caption{Schematic description of the counterexample}\label{fig:counterex}
\end{figure}
   Now we need to define how to propagate updates of the type $\{\bot\}\timm B_1$ over the corr $\{(\bot,\bot) \}$. 
 
 This is specified in \figref{fig:counterex} by arrows going from the left-most column of updates to the middle column. The idea is to find in \spA\ the nearest model so that updating $\bot$  to 0 or 2 in $\spB_1$ goes to updating $\bot$ to 2 in \spA, while updating $\bot$ to 6 goes to updating $\bot$ to 6. In the similar way, propagation of updates from $\{ \bot \}\timm A$ to $\{\bot\}\timm B_1$ is defined as shown by arrows from the middle to the left column. This defines lens $\ebb_1$, and it is easy to see it is well-behaved and, moreover, invertible. The latter is established by the direct examination of all possible $\ppg$-compositions, \eg, 
 
 \smallskip
 $(0,\bot).\ppg^{(0,2)}_{\spB\spA}
 .\ppg^{(0,2)}_{\spA\spB}
 =(0,\bot)$ (even strong invertibility holds)
 
 \medskip
  $(0,2).\ppg^{(0,2)}_{\spB\spA}
 .\ppg^{(0,2)}_{\spA\spB}
 =(0,0)\ne (0,2)$ but $(0,2).\ppg^{(0,2)}_{\spB\spA}=(0,0).\ppg^{(0,2)}_{\spB\spA}$
 
 \medskip 
 $(\bot,0).\ppg^{(\bot,\bot)}_{\spB\spA}
 	.\ppg^{(\bot,\bot)}_{\spA\spB}=(\bot,2)\ne (\bot,0)$
 	but
 	$(\bot,0).\ppg^{(\bot,\bot)}_{\spB\spA}  
 	= (\bot,2).\ppg^{(\bot,\bot)}_{\spB\spA}$
 
 \medskip	
\noindent (the last example is specific for the lens $\ebb_1$ specified in \figref{fig:counterex}). Similarly, we define a wb lens $\ebb_2$ as shown in \figref{fig:counterex} and check it is also invertible.

Space $\spA$ in the middle can be extended to a trivially wb and invertible identity lens $\ekk = \idlensna$ with $n=2$ (cf.\ Example \ref{ex:idLens}). 

Now, as the Junction condition trivially holds for the triple $(\ebb_1,\ekk,\ebb_2)$, we obtain a star lens $\ell:=\ekk^\star(\ebb_1, \ebb_2)$ composed from wb invertible lenses. However, this lens is not invertible as the following computation shows:  
\[
(\bot,0).(\ell.\ppg_{12}).(\ell.\ppg_{21}).(\ell.\ppg_{12}) = (\bot,7) \ne (\bot,4) = (\bot,0).(\ell.\ppg_{12})
\]
where we omitted the upper scripts $(\bot,\bot)$ near \ppg-symbols (recall that $\ell.\ppg_{12}$ is defined as the sequential composition 
$(\ebb_1.\ppg_{\spB\spA}).
(\ekk.\ppg_{\spA\spA}).
(\ebb_2.\ppg_{\spA\spB})$ and similarly for $\ell.\ppg_{12}$).
\begin{theorem} 
	Star-composition does not preserve invertibility
\end{theorem}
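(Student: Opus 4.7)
The plan is to prove the theorem by exhibiting a concrete counterexample, essentially formalizing the construction described in Section 6.3.1. I will first fix three simple model spaces $\spB_1$, $\spA$, $\spB_2$ of the ``chaotic'' type whose objects are natural numbers augmented with a distinguished uncertain element $\bot$, and in which every pair of objects forms a unique update. Because mergeability is restricted to pairs containing an identity, and $\Kupd$ is similarly trivial, all three spaces are automatically well-behaved in the sense of Definition~\ref{def:wbspace}. The concrete certain objects will be $\{0,2,6\}$ in $\spB_1$, $\{2,5,6\}$ in $\spA$, and $\{4,7\}$ in $\spB_2$.

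Next I would define two binary lenses $\ebb_1 : \spB_1 \rlarr \spA$ and $\ebb_2 : \spA \rlarr \spB_2$ whose corr sets are singletons $\{(b,a)\}$ and whose consistent corrs are exactly those pairs in which both components are certain or both are uncertain. For propagating updates that preserve or restore the certainty pattern, the definitions are forced; the only real design choice concerns propagation from a consistent uncertain corr $(\bot,\bot)$ when a certain value appears on one side. Here I would adopt the ``nearest value'' rule depicted in Figure~\ref{fig:counterex}: in $\ebb_1$, the $\spB_1$-updates $\bot \to 0$ and $\bot \to 2$ both map to $\bot \to 2$ on $\spA$, while $\bot \to 6$ maps to $\bot \to 6$; in $\ebb_2$, the $\spA$-update $\bot \to 5$ maps to $\bot \to 4$, while $\bot \to 2$ and $\bot \to 6$ map to $\bot \to 4$ and $\bot \to 7$ respectively. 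The middle lens is chosen to be the identity lens $\ekk = \idl_2(\spA)$ of Example~\ref{ex:idLens}, which is trivially very well behaved and invertible.

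I would then verify three things by direct case analysis on the (finite) tables of updates: (i) both $\ebb_1$ and $\ebb_2$ satisfy the well-behavedness laws of Definition~\ref{def:wblens} and the invertibility law \lawnamebr{Invert}{i} of Definition~\ref{def:invert}; (ii) the Junction condition~\eqref{eq:junction} holds, which is immediate because $\ekk$ is the identity lens and all its propagations are identities; and (iii) by the Star Composition Theorem~\ref{th:star-comp} the induced lens $\ell = \ekk^{\star}(\ebb_1,\ebb_2)$ is well defined and well-behaved. The bulk of work here is the enumeration in (i), but each verification is a finite check on the arrow tables drawn in Figure~\ref{fig:counterex}.

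The main obstacle, and the punchline, is to pinpoint a round-trip that violates \lawnamebr{Invert}{i} after composition. Propagating the $\spB_1$-update $\bot \to 0$ through $\ell.\ppg_{12}$ first takes it via $\ebb_1$ to the $\spA$-update $\bot \to 2$, which $\ekk$ propagates to itself and $\ebb_2$ then sends to $\bot \to 4$ in $\spB_2$. Propagating $\bot \to 4$ back through $\ell.\ppg_{21}$ routes it via $\ebb_2$ to $\bot \to 5$ (the nearest certain $\spA$-value), then via $\ebb_1$ to $\bot \to 6$ in $\spB_1$. A final forward trip $\ell.\ppg_{12}$ applied to $\bot \to 6$ yields $\bot \to 6$ in $\spA$ and $\bot \to 7$ in $\spB_2$. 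Thus
\[
(\bot,0).(\ell.\ppg_{12}).(\ell.\ppg_{21}).(\ell.\ppg_{12}) \;=\; (\bot,7)
\;\neq\; (\bot,4) \;=\; (\bot,0).(\ell.\ppg_{12}),
\]
contradicting \lawnamebr{Invert}{1} at the $\spB_1$-foot of $\ell$. Since $\ebb_1, \ekk, \ebb_2$ are all invertible while $\ell$ is not, this establishes the theorem. \qed
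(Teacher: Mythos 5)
Your proposal is correct and takes essentially the same route as the paper's own proof: the same three chaotic model spaces with certain values $\{0,2,6\}$, $\{2,5,6\}$ and $\{4,7\}$, the same nearest-value lenses $\ebb_1,\ebb_2$ joined via the identity lens $\ekk=\idlensnb$ with the Junction condition holding trivially, and the same failing round-trip computation $(\bot,0).(\ell.\ppg_{12}).(\ell.\ppg_{21}).(\ell.\ppg_{12})=(\bot,7)\neq(\bot,4)=(\bot,0).(\ell.\ppg_{12})$. Nothing further is needed.
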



	\mysubsubsection{2 Invertibility and (binary) state-based symmetric lenses with complement (ssc-lenses): a long standing confusion.}

	The example above may seem to be contradicting to paper \cite{bpierce-popl11}, where symmetric lenses are studied in the state-based setting as ssc-lenses. In that paper, an invertibility law called {\em round-tripping} is required for any ssc-lens, and it is proved that sequential composition of such lenses is again an ssc-lens and hence enjoys round-tripping. In paper \cite{me-models11}, we show that an ssc-lens is nothing but a symmetric delta lens over co-discrete model spaces (see also \cite{jr-bx16}), \ie, exactly a lens of the type we have considered in the counter-example above. Moreover, in the star-composition instance we have considered, lens $\ekk=\idlens{2}{\spA}$ plays a dummy role and, in fact, we have dealt with sequential composition of two ssc-lenses $\ebb_1;\ebb_2$ as defined in \cite{bpierce-popl11}. Then, how could it happen that our composed lens does not satisfy (even weak) invertibility, while the corresponding result in \cite{bpierce-popl11} asserts that the composed lens must satisfy the seemingly stronger roundtripping law?  
	
	The source of confusion is the state-based setting for update propagation, in which a law that looks like demanding strong invertibility is actually a simple \lawname{Stability} law demanding identity preservation.
	 Indeed, in \cite[Def.2.1 on p.2]{bpierce-popl11}, they define a binary symmetric lens over state spaces $X$ and $Y$ with propagation operations 
	\begin{equation}\label{eqLputLR}
	\frar{\ppg_{XY}}{X\times\Corr}{Y\times\Corr} \mbox{ and } \flar{\ppg_{YX}}{X\times \Corr}{Y\times\Corr} 
	\end{equation} (they call elements of set \Corr\ complements rather than corrs, but as it is shown in \cite{me-models11,jr-bx16}, the two notions are equivalent). The law (PutRL)  called {\em round-tripping} is defined thusly: for any states $x$ and $y$ (in our notation,  $A'_1$ and $A'_2$) and complements $c,c'$ (\ie, our corrs, $R,R'$), the following condition holds (\cite[Def.2.1 on p.2]{bpierce-popl11}):
	\begin{equation}\label{eq:false-round}
	\ppg_{XY}(x,c) = (y,c') \mbox{ implies } \ppg_{YX}(y,c')=(x,c')
	\end{equation}

\begin{figure}
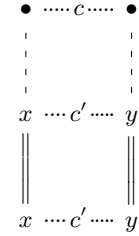

	\begin{minipage}{0.735\textwidth}
			Figure \ref{fig:false-round} specifies the story described by the two equations in \eqref{eq:false-round} diagrammatically: the upper square specifies data of the left equation, and the lower square specifies the right equation. Now it is seen that the right equation specifies the Stability law of delta lenses, while the left equation is used to ensure that states $x$ and $y$ are related by complement/corr $c'$. 
		The left equation is needed as the state-based lens \fwk\ does not have primitives to say that states $(x,y)$ are consistent via corr $c'$ (which is done in delta lenses by boundary functions $\prt_i$), and thus they encode this fact equationally, which together with the right equation creates a false impression of a round-tripping law. Indeed, equations \eqref{eq:false-round} have nothing to do with roundtripping: to describe the latter, we should specify how the result of $\ppg_{YX}(y,c)$ is related to $x$ rather than how $\ppg_{YX}(y,c')$ is related to $x$. 
	\end{minipage}
\hspace{1ex}
\begin{minipage}{0.215\textwidth}
		\vspace{-6ex}
		\centering
	\begin{diagram}[small]
		\bullet &\rDots~{c} &\bullet
		\\
		\dDashline &&\dDashline
		\\
		x &\rDots~{c'}& y
		\\
		\dDLine &&\dDLine
		\\
		x &\rDots~{c'}& y
  	\end{diagram} 
		\caption{``Round- tripping'' story}\label{fig:false-round}
		\vspace{-1ex}
\end{minipage}
\vspace{-4ex}
\end{figure}
%
The main problem is not in that Stability is called Round-tripping, the problem is that the issue of an actual round-tripping law and its  preservation under lens composition is not stated in the state-based lens \fwk\ because it is (fallaciously) considered solved! The counter-example presented above makes the problem even more challenging as even weak round-tripping is not preserved under lens composition. 	
	%

\subsection{Sequential Lego 2: Assembling $n$-ary Lenses from Binary Lenses}\label{sec:lego-spans2lens}
This section shows how to assemble $n$-ary (symmetric) lenses from binary asymmetric lenses modelling view computation \cite{me-jot11}. As the latter is a typical bx, the well-behavedness of asymmetric lenses has important distinctions from well-behavedness of general (symmetric mx-tailored) lenses.
\begin{defin}[Asymmetric Lens, cf. \cite{me-jot11}]
	An {\em asymmetric lens (a-lens)} is a tuple $\ebb^{\preccurlyeq}=(\spaA,\spaB,\get, \putl)$ with \spaA\ a model space called the {\em (abstract) view}, \spaB\ a model space called the {\em base}, \flar{\get}{\spaA}{\spaB} a functor (read ``get the view''), and \putl\ a family of operations $\comprfam{\putl^B}{B\in\xob{\spaB}}$ (read ``put the view update back'') of the following arity. Provided with a view update \frar{v}{\get(B)}{A'} at the input, operation $\putl^B$ outputs a base update \frar{\putl^B_\indb(v)=u'}{B}{B''}  
	and an amendment to the view update, $\putl^B_\indb(v)$ or \frar{\amex{v}}{A'}{A''}. 

A view update \frar{v}{\get(B)}{A'} is called {\em closed} if 
$\amex{v}=\id_{A'}$, and an a-lens is {\em closed} if all its view updates are closed.  \qed
\end{defin}
The following is a specialization of Def.~\ref{def:wblens}, in which consistency between a base $B$ and a view $A$ is understood as equality $\get(B)=A$.

{ 
	\renewcommand\ppgB{\ensuremath{\putl^B}}
	\begin{defin}[Well-behavedness]\label{def:wbalens}
		An a-lens is {\em well-behaved (wb)} if it satisfies the following laws for all $B\in\xob{\spaB}$ and \frar{v}{\get(B)}{A'}
			\\[1ex]
		\lawgap=1ex
		\noindent \begin{tabular}{l@{\quad}l}
			\lawnamebr{Stability}{
			}	& 
			if $v=\id_{\get(B)}$, then $\ppgB_\indb(v) = \id_B$ and  
			$\amex{v}=\id_{\get(B)}$
							\\  [\lawgap] \lawnamebr{\reflectzero}{
			} &	
			if $v=\get(u)$ for some \frar{u}{B}{B'}, then $\amex{v}=\id_{A'}$
			\\  [\lawgap] \lawnamebr{Reflect1}{
			} &	
			$(v, \amex{v})\in \Kupd_{A'}$ 
			\\  [\lawgap] \lawnamebr{Reflect2}{
			} &
			$\ppgB_{\indb}(v;\amex{v}) = \ppgB_{\indb}(v)$
			\\ [\lawgap] \lawnamebr{PutGet}{
				} & 
			$v.\putl_\indb^B.\get =v;\amex{v} $ %
			\footnotesize{(the dot notation is used to highlight the name of the law).}   
			\end{tabular} \\
		\rule{0mm}{0mm}\qed
	\end{defin}
}

%

In contrast to the general lens case, a wb a-lens features \lawname{\reflectzero} --- a sort of a self-Hippocratic law regulating the necessity of amendments: the latter is not allowed as soon as consistency can be restored without amendment. Another distinction is inclusion of an invertibility law \lawname{PutGet} into the definition of well-behavedness: 
\lawname{PutGet}{} together with \lawname{Reflect2} provide (weak) invertibility: $\putl^B_\indb(\get(\putl^B_\indb(v)))=\putl^B_\indb(v)$.   
\lawname{Reflect3} is omitted as it is implied by \lawname{Reflect0} and \lawname{PutGet}.  

Any a-lens $\ebb^{\preccurlyeq} =(\spaA,\spaB,\get, \putl)$ gives rise to a binary symmetric lens $\ebb$. Its carrier consists of model spaces $\spaA$ and $\spB$, for which we assume $\Kupd$ and $\Kdisj{}$ to be defined such that the spaces are well-behaved, cf.\ Def.~\ref{def:wbspace}. The set of corrs of a pair $(A,B)\in \xob{\spA}\times \xob{\spB}$ is the singleton $\{\etosing{(A,B)}\}$ with $\prt_\spA\etosing{(A,B)}=A$ and $\prt_\spB\etosing{(A,B)}=B$, 
\zdmodtok{realignment is thus trivial,}{}{ALIGH} and consistent corrs are exactly those \etosing{(A,B)}, for which $A = \get(B)$.

\renewcommand\etosing[1]{\elementToSingleton{#1}}
For a consistent corr $\etosing{B}={(\get(B), B)}$, 
we need to define six operations $\ebb.\ppghB_{\_\_}$. Below we will write the upper index as $B$ rather than $\hat B$ to ease the notation. For a  view update \frar{v}{A}{A'} and the corr \etosing{B}, we define 
 $$
\ppgB_{\spA\spB}(v)=\putl^B_\indb(v){:}\, B\to B'', \;
\ppgB_{\spA\spA}(v)=\putl^B_\inda(v){:}\, A'\to A'', \;
\ppgB_{\spA\star}(v)=\etosing{B''}
$$
 The condition $A''=\get(B'')$ for $\ebb^\preccurlyeq$ means that \etosing{B''} is again a consistent corr with the desired boundaries. For a base update \frar{u}{B}{B'} and the corr \etosing{B}, we define
$$\ppgB_{\spB\spA}(u)=\get(u),\;
\ppgB_{\spB\spB}(u)=\id_{B'},\; 
\ppgB_{\spB\star}(u)=\etosing{B'}
$$ Functoriality of $\get$ yields consistency of $\widehat{B'}$.

\begin{defin}[Very well-behaved a-lenses] \label{def:putput_a}
		A wb a-lens is called {\em very well behaved (very wb)}, if the \corring\ binary symmetric lens is such.  
		cf.  Def.~\ref{def:putput}.
Of course, this definition restricts the behaviour only for the $\putl$-part of the a-lens (hence the name of the law), since functor $\get$ is compositional anyway. \qed
\end{defin}

\begin{lemma}\label{lemma:asymm_symm} Let $\ebb^\preccurlyeq$ be a (very) wb a-lens and $\ebb$ the corresponding symmetric lens. Then all base updates of $\ebb$ are closed, and $\ebb$ is (very) wb and invertible. 
 \end{lemma}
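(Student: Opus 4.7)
\noindent\textit{Proof plan.} The plan is to dispatch the three assertions in sequence, since they build on each other. The closedness of base updates is immediate from the construction: by definition $\ebb.\ppgB_{\spB\spB}(u) = \id_{B'}$ for every base update \frar{u}{B}{B'}, so no base update ever carries a non-trivial amendment. Well-behavedness (Def.~\ref{def:wblens}) will then be verified direction by direction. The base-to-view direction ($i=\spB$, $j=\spA$) is essentially trivial: \lawname{Stability} holds because $\get$ is a functor, while \lawname{Reflect1}--\lawname{Reflect3} hold vacuously since the base amendment is always identity.

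The view-to-base direction ($i=\spA$, $j=\spB$) is where the a-lens laws are invoked. \lawname{Stability} transfers directly from \lawname{Stability} of $\ebb^\preccurlyeq$; \lawname{Reflect1} is a literal restatement of the a-lens \lawname{Reflect1}; and the view-to-base component of \lawname{Reflect2} is the a-lens \lawname{Reflect2}, while the view-to-view component amounts to showing $\ppgB_{\spA\spA}(v;\amex{v}) = \id_{A''}$, which is \lawname{Reflect3}. I expect this last equation to be the only genuinely non-trivial one, and I will derive it by combining \lawname{PutGet} and \lawname{\reflectzero}: by a-lens \lawname{Reflect2}, $\putl^B_\indb(v;\amex{v}) = \putl^B_\indb(v)$; by \lawname{PutGet} applied to $v$, $\get(\putl^B_\indb(v)) = v;\amex{v}$; hence $v;\amex{v}$ lies in the image of $\get$ (witnessed by $\putl^B_\indb(v)$), so \lawname{\reflectzero} forces $\amex{v;\amex{v}} = \id_{A''}$, as required.

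For (weak) invertibility, I would check the two round-trips. Starting from a view update \frar{v}{A}{A'}, \lawname{PutGet} gives $\ppgB_{\spB\spA}(\ppgB_{\spA\spB}(v)) = \get(\putl^B_\indb(v)) = v;\amex{v}$, and a further view-to-base step together with a-lens \lawname{Reflect2} yields $\putl^B_\indb(v;\amex{v}) = \putl^B_\indb(v) = \ppgB_{\spA\spB}(v)$. Starting from a base update \frar{u}{B}{B'}, applying \lawname{\reflectzero} to $\get(u)$ and then \lawname{PutGet} yields $\get(\putl^B_\indb(\get(u))) = \get(u); \amex{\get(u)} = \get(u)$, which is exactly $\ppgB_{\spB\spA}(\ppgB_{\spA\spB}(\ppgB_{\spB\spA}(u))) = \ppgB_{\spB\spA}(u)$.

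Finally, very well-behavedness will be handled as follows. For the view side, Def.~\ref{def:putput_a} literally defines a very wb a-lens to be one whose associated symmetric lens satisfies \kputput, so nothing further is needed. For the base side, functoriality of $\get$ gives $\get(u;w) = \get(u);\get(w)$, and since the base amendment is always identity the merge step in the \kputput{} law degenerates, so the law holds routinely. The one genuine obstacle, as noted above, is the derivation of \lawname{Reflect3} for the view side; once that is in hand, the remaining verifications are straightforward translations between a-lens data and the symmetric-lens construction.
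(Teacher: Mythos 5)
Your proof is correct and follows essentially the same route as the paper's: closedness of base updates by construction of $\ppg_{\spB\spB}$, well-behavedness transferred law-by-law from $\ebb^\preccurlyeq$, invertibility via \lawname{PutGet} with \lawname{\reflectzero} in one direction and \lawname{PutGet} with \lawname{Reflect2} in the other, and the vwb claim immediate from Def.~\ref{def:putput_a}. Your explicit derivation of the symmetric \lawname{Reflect3} from \lawname{PutGet} and \lawname{\reflectzero} is exactly the observation the paper makes in the remark following Def.~\ref{def:wbalens}, just spelled out in more detail.
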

 \begin{proof}
 Base updates are closed by the definition of $\ppg_{\spB\spB}$. Well-behavedness follows from wb-ness of $\ebb^\preccurlyeq$. Invertibility has to be proved in two directions: $\ppg_{\spaB\spaA};\ppg_{\spaA\spaB};\ppg_{\spaB\spaA} = \ppg_{\spaB\spaA}$ follows from \lawnamebr{PutGet}{} and \lawnamebr{Reflect0}, the other direction follows from \lawnamebr{PutGet}{} and \lawnamebr{Reflect2}, see the remark after Def.\ref{def:wbalens}. The ''vwb''-implication follows directly from Def.\ref{def:putput_a}. 
 \qed
 \end{proof}
\begin{remark}[A-lenses as view transformation engines]\newpii{
	In terms of Varr\'o \etal\ \cite{dvarro-models18}, a wb a-lens can be seen as an abstract algebraic model of a bidirectional view transformation engine. This engine is assumed to be a) consistent: it is a basic lens law that propagation must always produce a consistent corr, b) incremental, as lenses propagate changes rather than create the target models from scratch, c) validating, as the result of propagation is always assumed to satisfy the target metamodel constraints (either by considering the entire model space to be only populated by valid models, or by introducing a subclass of valid models and require the result of the propagation to get into that subclass). Reactiveness, \ie, whether the engine executes on-demand or in response to changes, is beyond the lens \fwk. To address this and similar concerns, we need a richer \fwk\ of lenses augmented with an {\em organizational} structure introduced in \cite{me-jss15} (see also \cite{me-grand17} for a concise presentation).}{1}{-15em}
\end{remark}

  \begin{theorem}[Lenses from Spans]\label{thm:LensesFromSpans} An $n$-ary span of (very) wb asymmetric lenses $\ebb^{\preccurlyeq}_i=(\spaA_i, \spaB,$ $\get_i, \putl_i)$, $i=1..n$ with a common base \spaB\ of all $\ebb_i^{\preccurlyeq}$ gives rise to a (very) wb symmetric lens denoted by $\Sigma_{i=1.. n}^\spB\ebb_i^{\preccurlyeq}$. 
\end{theorem}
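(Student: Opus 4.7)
The plan is to give a direct construction of the symmetric lens $\ell=\Sigma_{i=1..n}^{\spB}\ebb_i^{\preccurlyeq}$ and then verify each law by reducing it to a property of one of the $\ebb_i^{\preccurlyeq}$ together with the functoriality of the $\get_j$'s. Concretely, the feet of $\ell$ are $\spA_1,\ldots,\spA_n$; the class of corrs is $\Corrx{\ell}=\xob{\spB}$, with boundary functions $\prt_i(B)\eqdef\get_i(B)$; every such corr is declared consistent (there is no freedom, since $A_i=\get_i(B)$ is forced). For a foot update $u_i\colon A_i\to A_i'$ with $A_i=\get_i(B)$, define
\[
  \ell.\ppg_{ii}^{B}(u_i)\eqdef \putl_i^{B}{}_{\inda}(u_i),\qquad
  \ell.\ppg_{ij}^{B}(u_i)\eqdef \get_j(\putl_i^{B}{}_{\indb}(u_i))\ \ (j\neq i),\qquad
  \ell.\ppg_{i\star}^{B}(u_i)\eqdef B'',
\]
where $\putl_i^{B}{}_{\indb}(u_i)\colon B\to B''$. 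Note that $\get_j(B'')=A_j''$, so the boundaries of the new corr $B''$ agree with the targets of the propagated updates; this is the arity requirement and already embeds correctness.

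The first step is to verify Stability and the three Reflect laws of Def.~\ref{def:wblens}. Stability follows from the Stability law of $\ebb_i^{\preccurlyeq}$ together with functoriality of $\get_j$ (which sends $\id_B$ to $\id_{A_j}$). Reflect1 for $\ell$ is literally Reflect1 of $\ebb_i^{\preccurlyeq}$, since the amendment produced by $\ell$ at foot $i$ is exactly the view-side amendment produced by $\putl_i^{B}$. For Reflect2 with $j\neq i$, apply Reflect2 of $\ebb_i^{\preccurlyeq}$ to obtain $\putl_i^{B}{}_{\indb}(u_i;\amex{u_i})=\putl_i^{B}{}_{\indb}(u_i)$, then apply $\get_j$. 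For Reflect3, set $u'\eqdef\putl_i^{B}{}_{\indb}(u_i)\colon B\to B''$; by PutGet we have $\get_i(u')=u_i;\amex{u_i}$, and then by the \lawname{\reflectzero}\ law of $\ebb_i^{\preccurlyeq}$ (applied to the view update $u_i;\amex{u_i}$, which is now of the form $\get_i(u')$ based at $B$), putting it back produces identity amendment. This yields the wb part.

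The main obstacle will be the very-wb part, \ie\ verifying \kputput\ for $\ell$. Assume foot updates $u_i\colon A_i\to A_i'$ and $v_i\colon A_i'\to B_i$ with $(u_i,v_i)\in\Kupd_{A_i'}$ and $(v_i,\amex{u_i})\in\Kdisj{}_{A_i'}$; let $(\tilde v_i,\widetilde{\amex{u_i}})=\merge(v_i,\amex{u_i})$. The plan here is to translate the hypothesis into a \kputput\ situation for the \emph{binary} symmetric lens $\ebb_i$ associated with $\ebb_i^{\preccurlyeq}$ via Lemma~\ref{lemma:asymm_symm}: by Def.~\ref{def:putput_a}, that lens is vwb, so the composed propagation $\ebb_i.\ppg^{B}{}_{\spA\spB}(u_i;v_i)=\ebb_i.\ppg^{B}{}_{\spA\spB}(u_i)\,;\,\ebb_i.\ppg^{B''}{}_{\spA\spB}(\tilde v_i)$ holds in $\spB$, and the amendments compose as required by \lawnamebr{\kputput}{ii}. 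Post-composing both sides with the functor $\get_j$ gives \lawnamebr{\kputput}{j\neq i} for $\ell$ (functoriality preserves the equality and preserves the $\Kupd$-property by the \lawname{\Kupd~and~\Kupd} clauses of Def.~\ref{def:wbspace} applied in $\spA_j$ via the assumption that $\get_j$ respects the well-behaved structure). The amendment equation \lawnamebr{\kputput}{ii} for $\ell$ is the very same equation for $\ebb_i^{\preccurlyeq}$. This completes the proof; any bookkeeping regarding compatibility of $\merge$ with $\get_j$ reduces to the assumption that our model spaces are wb in the sense of Def.~\ref{def:wbspace}.
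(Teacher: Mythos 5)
Your construction is exactly the lens the paper builds, but your verification takes a more elementary route. The paper does not re-prove the laws at all: it observes that the span, with each $\ebb_i^{\preccurlyeq}$ converted to a binary symmetric lens $\ebb_i$ via Lemma~\ref{lemma:asymm_symm}, is an instance of the star composition of \sectref{sec:lego-star} with central lens $\ekk=\idlensnb$; the Junction conditions hold because base updates are $\ebb_i$-closed (Lemma~\ref{lemma:asymm_symm}) and identity-lens updates are trivially closed, so Theorem~\ref{th:star-comp} delivers (very) well-behavedness in two lines. Unfolding that star composition gives precisely your operations $\ppg_{ij}^{B}=\get_j\circ\putl_i^{B}{}_{\indb}$ and $\ppg_{ii}^{B}=\putl_i^{B}{}_{\inda}$, so the two proofs produce the same lens; yours simply redoes the law-checking from the a-lens axioms instead of inheriting it. What your route buys is self-containedness and a cleaner derivation of \lawname{Reflect3} from \lawname{\reflectzero} and \lawname{PutGet} (the paper leaves this implicit in the remark after Def.~\ref{def:wbalens}); what it costs is duplication of the bookkeeping that Theorem~\ref{th:star-comp} already encapsulates. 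One small correction: the fact that $\get_j$ preserves the $\Kupd$-relation is not a consequence of the \lawnamebr{\Kupd\Kupd}{1,2} clauses of Def.~\ref{def:wbspace} (those govern consecutive updates within one space, not functors between spaces); it follows instead from very-well-behavedness of the binary lens $\ebb_j$ in the base-to-view direction, where \lawnamebr{\kputput}{} explicitly requires the propagated pair to remain $\Kupd$-related --- which is available to you by hypothesis via Def.~\ref{def:putput_a}, so the gap is only in the citation, not in the argument.
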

\begin{proof}
An $n$-ary span of a-lenses $\ebb_i^{\preccurlyeq}$ (all of them interpreted as symmetric lenses $\ebb_i$ as explained above) is a construct equivalent to the star-composition of Def.~\ref{def:star-comp}.3, in which lens $\ekk=\idlensnb$ (cf. Ex.~\ref{ex:idLens}) and peripheral lenses are lenses $\ebb_i$. The junction condition is satisfied as all base updates are $\ebb_i$-closed for all $i$ by Lemma \ref{lemma:asymm_symm}, and also trivially closed for any identity lens. The theorem thus follows from Theorem~\ref{th:star-comp}. The ``very wb''-part follows directly from Def.~\ref{def:putput_a}. Note that a corr in $\Corrx{\Sigma_{i=1.. n}^\spB\ebb_i^{\preccurlyeq}}$ 
 is nothing but a single model $B\in\xob{\spaB}$ with boundaries being the respective $\get_i$-images.
\qed
\end{proof}
The theorem shows that combining a-lenses in this way yields an $n$-ary symmetric lens, whose properties can automatically be inferred from the binary a-lenses.

\begin{figure}
\centering
    \includegraphics[width=0.5\textwidth]%
                 {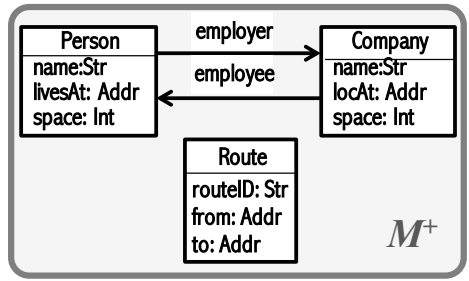}
\caption{Merged Metamodel\label{fig:mergedMetamod-uml}}
\vspace{-4ex}
\end{figure}
\noindent {\em Running example.} Figure~\ref{fig:mergedMetamod-uml} shows a metamodel $M^+$ obtained by merging the three metamodels $M_{1,2,3}$ from \figri\ without loss and duplication of information. In addition, for persons and companies, the identifiers of their  original model spaces can be traced back via attribute ''space'' 
(\Commute-objects are known to appear in space $\spA_3$ and hence do not need such an attribute). As shown in \cite{me-ecmfa17}, any consistent multimodel $(A_1...A_n, R)$ can be merged into a comprehensive model $A^+$ instantiating $M^+$. Let 
$\spB$ be the space of such together with their comprehensive updates \frar{u^+}{A^+}{A'^+}. 

For a given $i\le 3$, we can define the following a-lens $\ebb^{\preccurlyeq}_i=(\spaA_i, \spaB,$ $\get_i, \putl_i)$:
$\get_i$ takes update $u^+$ as above and outputs its restriction to the model containing only objects recorded in space $\spA_i$. Operation $\putl_i$ takes an update \frar{v_i}{A_i}{A'_i} and first propagates it to all directions as discussed in \sectref{sec:example}, then merges these propagated local updates into a comprehensive  \spaB-update between comprehensive models. This yields a span of a-lenses that implements the same \syncon\ behaviour as the symmetric lens discussed in \sectref{sec:example}.

\noindent {\em\bf  From (symmetric) lenses to spans of a-lenses.} There is also a backward transformation of (symmetric) lenses to spans of a-lenses. Let $\ell=(\mspaA, \ppg)$ be a (very) wb lens. It gives rise to the following span of (very) wb a-lenses $\ell_i^\preccurlyeq=(\prtbf_i(\mspaA),\spaB,\get_i, \putl_i)$   
where space $\spaB$ has objects $(A_1\ldots A_n,R)$ with $R\in \Corrx{A_1...A_n}$ a consistent corr, and arrows are the update tuples of $\mspA$. Functors $\get_i:\spaB\to \spaA_i$ are projection functors. Given $B = (\tupleAWOB, R)$ and local update \frar{u_i}{A_i}{A'_i}, let

\frar{\putl^B_{i\indb}(u_i)\eqdef(u_1',..,u'_{i-1}, (u_i;u_i'), u'_{i+1},..,u_n')}
{(A_1... A_n,R)}{(A''_1... A''_n,R'')}
\\[1ex] 
where $u_j' \eqdef \ppg^R_{ij}(u_i)$ (all $j$) and $R'' = \ppg_{i\star}^R(u_i)$. Finally, $\putl^B_{i\inda}(v_i)\eqdef\ppg^R_{ii}(v_i)$, \ie, 
$v_i^{@B}=v_i^{@R}$. 
Validity of \zdmodtok{\lawname{Hippocr1-2}}{\lawname{Stability}}{}, \lawname{Reflect0-2}, \lawname{PutGet} and \kputputlaw\ directly follows from the above definitions. This yields 
\begin{theorem}\label{theo:lens2span}
Let $\ell = (\mspA, \ppg)$ be a (very) wb symmetric lens. Then the multi-span 
\[(\ell_i^\preccurlyeq=(\prtbf_i(\mspaA),\spaB,\get_i, \putl_i) \mid i=1..n)\]
of (very) wb a-lenses (where \spB\ is defined as specified above) has the same synchronisation behaviour w.r.t.\ model spaces $(\prtbf_i(\mspaA)\mid i=1..n)$.  
\qed
\end{theorem}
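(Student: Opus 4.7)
The plan is to verify each of the required properties of the a-lenses $\ell_i^\preccurlyeq$ in turn, essentially as a bookkeeping exercise that translates the laws of the symmetric lens $\ell$ into the laws of Def.~\ref{def:wbalens} and Def.~\ref{def:putput_a}. First I would set up the scaffolding: check that $\spB$ is a well-behaved model space, where $\Kupd_B$ and $\Kdisj{}_B$ are defined componentwise from the feet using the product structure of $\prod\prtbf_i(\mspA)$; then observe that $\get_i$, being a projection of tuple updates onto the $i$-th component (with the corr part discarded), is a functor; and finally note that $\putl_i^B$ is well-defined precisely because \lawname{Reflect3} for $\ell$ guarantees that $u_i;u_i'$ lands on $A''_i$, the target model of the amendment, so the output tuple is indeed an arrow of $\spB$ whose target carries the consistent corr $R''=\ppgr_{i\star}^R(u_i)$.

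Then I would verify the a-lens laws one by one. \lawnamebr{Stability}{} for $\ell_i^\preccurlyeq$ follows directly from \lawnamebr{Stability}{i} of $\ell$, because if $v_i=\id_{A_i}$ then $\ppgr^R_{ij}(\id_{A_i})=\id_{A_j}$ for every $j$ and $\ppgr^R_{i\star}(\id_{A_i})=R$, so $\putl^B_\indb(\id)$ is the identity tuple update of $B$ and the amendment is $\id_{A_i}$. \lawnamebr{\reflectzero}{} is the subtle one: I would argue that if $v_i=\get_i(u^+)$ for a base update $u^+: (\tupA, R) \to (\tupA', R')$, then $u_i^+ = v_i$ sits inside a tuple update whose target carries a consistent corr $R'$, and one can read off from Def.~\ref{def:wblens} (using \lawname{Stability} on the other components plus \lawname{Reflect3}) that the amendment $\ppgr^R_{ii}(v_i)$ must be $\id$; if this does not fall out cleanly, it has to be imposed as an additional assumption on how $\spB$ is constructed from $\mspA$. \lawnamebr{Reflect1}{} for $\ell_i^\preccurlyeq$ coincides with \lawnamebr{Reflect1}{i} for $\ell$ since both speak of $(v_i, \amex{v_i})\in\Kupd_{A'_i}$. \lawnamebr{Reflect2}{} is derived by applying $\putl^B$ to $v_i;\amex{v_i}$ and using \lawnamebr{Reflect2}{ij} ($j\neq i$) together with \lawnamebr{Reflect3}{i} of $\ell$; these collapse the extra amendment step and yield $\putl^B_\indb(v_i;\amex{v_i})=\putl^B_\indb(v_i)$. \lawnamebr{PutGet}{} is almost a matter of unwinding: $\get_i(\putl^B_{i\indb}(v_i))$ equals the $i$-th component of the tuple $(u_1',\ldots,u_i;u_i',\ldots,u_n')$, which is $v_i;\amex{v_i}$ by construction.

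For the ``very'' part I would assume \kputputlaw\ for $\ell$ and derive \kputputlaw\ for the associated binary symmetric lens of $\ell_i^\preccurlyeq$ (cf.\ Def.~\ref{def:putput_a}). Given two $\Kupd$-compatible view updates $v_i$ followed by $w_i$ with the appropriate mergeability of $w_i$ and the amendment $\amex{v_i}$, the result of $\putl^B$ applied to $v_i;w_i$ expands via the componentwise definition to the sequential composition of the individual $\ppg^R_{ij}$ and $\ppg^{R''}_{ij}$, which matches $\putl^B(v_i);\putl^{B''}(\widetilde{w_i})$ exactly by \lawnamebr{\kputput}{j\noteq i} on the propagated components and by \lawnamebr{\kputput}{ii} on the amendment component.

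Finally, the ``same \syncon\ behaviour'' claim is by construction: propagating $v_i$ through $\putl^B_{i\indb}$ and then projecting by each $\get_j$ reproduces exactly the tuple $(\ppg_{ij}^R(v_i))_{j\neq i}$ together with the amendment $\ppg_{ii}^R(v_i)$, which is the output of $\ell.\ppg$. I expect the main obstacle to be \lawnamebr{\reflectzero}{}: making precise why a tuple update between two consistent multimodels in $\spB$ must already be $R$-closed on its $i$-th foot, or, failing that, restricting $\spB$ to the subcategory of closed tuple updates so that the law holds by construction. Everything else is a direct translation of the symmetric-lens laws through the definitions of $\get_i$ and $\putl_i$.
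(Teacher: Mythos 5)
Your proposal takes essentially the same route as the paper, whose entire proof of this theorem is the single sentence that validity of \lawname{Stability}, \lawname{Reflect0-2}, \lawname{PutGet} and \kputputlaw\ ``directly follows from the above definitions''; your law-by-law unwinding is exactly that intended argument, only spelled out. The obstacle you flag at \lawnamebr{\reflectzero}{} is a real one that the paper silently skips: arrows of \spaB\ are arbitrary tuple updates between consistent multimodels, and since Def.~\ref{def:wblens} contains no Hippocraticness-style law forcing $\ppg^R_{ii}(u_i)=\id$ when $u_i$ is merely a component of such a tuple, \lawnamebr{\reflectzero}{} does not in fact ``directly follow'' --- one of your proposed repairs (restricting \spaB\ to propagation-generated/closed tuple updates, or adding a Hippocraticness hypothesis on $\ell$) is genuinely needed.
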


\newpii{
An open question is whether the span-to-lens transformation in Thm.\ref{thm:LensesFromSpans} and the lens-to-span transformation of Thm.~\ref{theo:lens2span} are mutually inverse. It is easy to see that the chain 
\begin{equation}\label{eq:lens2span2lens}
\mbox{symmetric lens $\rightarrow$ wide span $\rightarrow$ symmetric lens} 
\end{equation}
results in a lens isomorphic to the initial one, but the chain 
\begin{equation}
\mbox{wide span $\rightarrow$ symmetric lens $\rightarrow$ wide span} 
\end{equation}
ends with a different span as the first transformation above loses information about updates in the head of the span. 
As shown by Johnson and Rosebrugh in \cite{jr-jot17} for the binary case, the two spans can only be equivalent modulo certain equivalence relation, and an equivalence relation between lenses is also needed to align all constructs together. These equivalences may be different for our multiary lenses with amendments, and we leave this important question for future research.  
}{1}{-17em}

\section{Future work}\label{sec:future}

\renewcommand\thissec{\ref{sec:future}}
\renewcommand\mysubsubsection[1]{\par\vspace{-1.5ex}%
	\subsubsection{\thissec.#1}}

\newpii{We list and briefly comment on several important tasks for the multiary lens \fwk.}{2:this section is essentially reworked.}{-2em}

\subsection{Categorification of corrs} 
\newcommand\scat{\ensuremath{\mathbf{S}}}

A distinctive feature of the \fwk\ developed in the paper is the triviality of the corr updates --- they are just pairs $(R,R')$ of the old and the new corr (see Remark~\ref{rem:corr-upd} on p.\pageref{rem:corr-upd}). However, in practice, new corrs would be computed incrementally with deltas rather than afresh. To make the \fwk\ closer to practice, we need to change the notion of a multimodel update \frar{\mmu}{\nia}{\nia'} and consider it to be a pair $\mmu=(u,r)$ with $u=(u_1..u_n)$ a feet update and \frar{r}{R}{R'} a corr update rather than just a pair of states $(R,R')$.
Then we would obtain a setting based on a category \mspR\ of \mmodel s (that includes both local updates and corr updates) together with boundary projection functors \frar{\prtbf_i}{\mspR}{\spA_i}, $i=1..n$, which take a multimodel update $\mmu=(u_1...u_n, r)$ and select its corresponding component, $\prtbf_i(\mmu)=u_i$.%
\footnote{Given some construct $X$ depending on set $S$, the passage to a setting in which $S$ is a non-trivial category is often called {\em categorification} of $X$, hence, the title of the subsection. The quality of being a {\em non-trivial} category is essential. A set $S$ can be seen as a category \scat\ in two ways: {\em discrete}, when the only \scat\ arrows are identities, and {\em co-discrete {\em or} chaotic}, in which for any pair of $S$'s elements $s,s'$, there is one and only one arrow \frar{(s,s')}{s}{s'}. If $s'=s$, this arrow is the identity of $s$. We refer to both such categories \scat\ as trivial.}

For multiary update propagation, we require each projection $\prtbf_i$ to be the \get-part of an asymmetric lens (with amendment) that for a given multimodel state $R$,  puts any foot update \frar{u_i}{A_i}{A'_i}, $A_i=\prtbf_i(R)$ back to a multimodel update \frar{\mmu=\mmput^R_i(u_i)}{\nia}{\nia'}. Note that as \mmu\ includes all feet updates, operation $\mmput^R_i$ actually provides all local propagation operations $\ppg_{ij}$ we considered in the paper: $\ppg^R_{ij}(u_i)=\prtbf_j(\mmput^R_i(u_i))$. 
The Putget law is the equality $\prtbf_i(\mmu)=u_i;u^@_i$. In this way, a multiary symmetric lens is---by definition---a multiary span of asymmetric lenses \frar{(\prtbf_i,\mmput_i)}{\mspR}{\spA_i}, $i=1...n$. 

Besides being better aligned with practice, the setting above would  simplify notation and probably some technicalities, but its advantages were recognized when the paper was already written and submitted for reviewing. With a great regret, an accurate theory of the categorified version of multiary lenses is thus left for future work.

\mysubsubsection{2 Instantiation of the \fwk.} 
Although we presented several simple examples of multiary wb lenses, having practically interesting examples, or even better, a pattern for generating practically interesting examples, is an extremely important task. We think that the GDG \fwk\ for \syncon\ \cite{frankT-icmt15} is a promising foundation for such work, and it has actually already began in \cite{frankT-icmt16} (see the next section for more detailed comments). 

\mysubsubsection{3 Concurrent updates.}
As discussed in section \sectref{sec:backgr}.4, extending the lens  formalism to accommodate concurrent updates is both useful and challenging. One of the main problems to solve is how to reconcile non-determinism inherent in conflict resolution strategies and serialization of update propagation with the deterministic nature of the lens \fwk. 
%

\mysubsubsection{4 Richer compositional \fwk.}
There are several open issues here even for the non-concurrent case (not to mention its future concurrent generalization). First, our pool of lens composition constructs is far incomplete: we need to enrich it, at least, with (i) sequential composition of a-lenses with amendment so that a category of a-lenses could be built, and (ii) a relational composition of symmetric lenses sharing several of their feet (similar to relational join). It is also important to investigate composition with weaker junction conditions than that we considered. 

\mysubsubsection{5 Invertibility.}
Invertibility nicely fits in some but not all of our results. It is a sign that we do not well understand the nature of invertibility. Perhaps, while invertibility is essential for bx, its role for mx may be less important. 

\mysubsubsection{6 Hippocraticness.}
A very natural Hippocraticness requirement (introduced by Stevens for bx in \cite{stevens-sosym10}) may have less weight in the mx world. Indeed, the examples we considered in the paper show that non-Hippocratic consistency restoration may be practically reasonable and useful. A further research in this direction is needed.  


\section{Related Work}\label{sec:related} 
Compositionality as a fundamental principle for building \syncon\ tools was proposed by Pierce and his coauthors in \cite{foster07,boomerang-08} for the state-based asymmetric case, and further developed for the binary symmetric case by Hofmann \etal: in \cite{bpierce-popl11}, for the state-based setting, and in \cite{bpierce-popl12}, for a specific delta-based setting (where deltas are understood operationally as edits). The deficiencies of the state-based \fwk\ are discussed in detail in \cite{me-jot11,me-models11} (see also our \sectref{sec:invert}.2), where the notion of asymmetric and symmetric binary delta lenses were proposed, but lens composition was not specially considered (except easy sequential composition of a-lenses in \cite{me-jot11}). In section 6.1 of \cite{me-gttse09}, several results for delta lens composition are considered, including a special case of the star composition, when two asymmetric lenses form a cospan, but the consistency relation at the apex of this cospan is not assumed to be an equality and, hence, to be maintained by a general binary symmetric lens. 
%
A fundamental theory of composing (and decomposing) binary symmetric lenses from (into) spans or cospans of asymmetric ones was developed by Johnson and Rosebrugh \cite{jr-jot17,jr-bx18}; the theory is based on equivalences of lenses w.r.t. their behaviour. In all these works, 
neither  amendments nor intelligent (and hence constrained) versions of Putput are considered (and only binary lenses are taken into consideration).

\newcommand\tral{Trollmann and Albayrak}
\newcommand\tralmm{\ensuremath{\spM(S)}}
\newpii{
In the turn from the binary to the multiary update propagation, the work by \tral\ \cite{frankT-icmt15,frankT-icmt16,frankT-icmt17} is the closest to ours conceptually and in part technically. We made several brief remarks in the introduction, and now can provide more details. 
}{1}{-4em}
\newpii{
 \tral\ begin with defining a class \mspS\ of diagram shapes (they say, bases), and a multimodel is a graph diagram of a certain shape $S\in\mspS$ 
in a adhesive category \spG\  intended to model attributed typed graphs and their morphisms, \ie, a multimodel is a functor \frar{M}{S}{\spG} with $S\in\mspS$. A multimodel update \frar{m}{M}{M'} is a span of natural transformations, \flar{m_d}{M}{\hat m}, \frar{m_a}{\hat m}{M'}, whose head $\hat m$ is a diagram of the same shape, and injective transformations $m_d$ and $m_a$ specify, resp., deletions and additions provided by update $m$.  For a fixed $S$, this gives us a non-trivial category of multimodels \tralmm\ to be compared with the categorified version of our \fwk\ based on category \Corrcat\ with arrows as described in \sectref{sec:future}.1. A major distinction between the two is that \Corrcat\ is a general model space category without any further restrictions, everything needed is provided by a family of boundary functors  \frar{\prtbf_i}{\Corrcat}{\spA_i}.  In contrast, although the class of categories \compr{\tralmm}{S\in\mspS} is broad enough to be practically interesting, 
it does not include some corr structures appearing in applications, \eg, such as in our Running example, or in paper \cite{me-mdi10-springer} focused on UML modelling, or in our paper \cite{me-ecmfa17}, in which a very general pattern for corrs is described as a partial span of graph \mor s, or, finally, in general rule-based \syncon\ engines as described in \cite{egyed-sac15}. An advantage of the abstract lens \fwk\ is that all these constructs are uniformly modelled as category \Corrcat. 
\\
Further distinctions appear when we consider how update propagation operations are defined. \tral\ consider the concurrent case (a truly  impressive achievement), which we did not approach yet. However, they consider only two basic laws (Stability and Correctness) borrowed from the binary case, while our repertoire of laws is richer. They also have a sort of amendment operation implemented by Consistency Creating Rules, but it is a purely local operation independent on any other model in the \mmodel. In contrast, our amendments are tightly related to other models and in this sense are (self) propagation operations. Finally, a major distinction is that in their setting, consistency restoration is non-deterministic while our lenses are deterministic. Overall, lenses appear as an abstract algebraic interface to update propagation that can be implemented by different ways, \eg, with GDG or by repair rules as in 
\cite{egyed-sac15,DBLP:conf/models/RabbiLYK15}. 
}{cont'd}{-5cm}
%

For the state-based lens setting, the work closest in spirit to the turn from binary to multiary lenses, is Stevens' paper \cite{stevens-models17}. 
Her and our goals are similar, but the technical realizations are different even besides the state- vs. delta-based opposition. Stevens works with restorers, which take a multimodel (in the state-based setting, just a tuple of models) presumably {\em inconsistent}, and restores consistency by changing some models in the tuple while keeping other models (from the {\em authority set}) unchanged. In contrast, lenses take a {\em consistent} multimodel {\em and} updates, and return a consistent multimodel and updates.  As we argued in \sectref{sec:example}, including updates into the input data of the restoration operation allows better ``tuning'' update propagation policies as the inherited uncertainty of consistency restoration is reduced.}
Another important difference is update amendments, which are not considered in \cite{stevens-models17} -- models in the authority set are intact. 
%
{Yet another distinction is how the multiary vs. binary issue is treated. Stevens provides several results for decomposing an n-ary relation $R\in \xstar{\spA}$ into binary relations $R_{ij}\subseteq \xob{\spA_i}\timm \xob{\spA_j}$ between the components. For us, a relation is inherently $n$-ary, \ie, a set $R$ of $n$-ary links endowed with an $n$-tuple of projections \frar{\partial_i}{R}{A_i} uniquely identifying links' boundaries. 
Thus, while Stevens considers ``binarization'' of a relation $R$ by a {\em chain} of binary relations over the ``perimeter'' $A_1...A_n$, we binarize  it via the corresponding {\em span} of (binary) mappings $(\prt_1,{...},\prt_n)$ (UML could call this process {\em reification}). Our (de)composition results demonstrate advantages of the span view. 

Discussion of several other works in the state-based world, notably by Macedo \etal\ \cite{DBLP:conf/edbt/MacedoCP14} can be found in \cite{stevens-models17}.
}
Several remarks about the related work on the Putput law have already been made in \sectref{sec:backgr}.3

\zdmoddok{
Now several remarks about the related work on algebraic laws for lenses.
Compositionality of update propagation with update composition, the famous {Putput} law, is an important but probably the most controversial amongst lens laws. 
A proper formulation of such a law has always been a stumbling block for the lens \fwk\ as {Putput} without restrictions does not hold while finding an appropriate guarding condition -- not too narrow to be practically usable and not too wide to ensure compositionality -- has been elusive (cf. 
\cite{foster07%
	,bpierce-popl11%
	,me-jot11,jr-bx12,me-jss15%
}).
A preliminary idea of a constrained Putput is discussed in \cite{jr-bx12} under the name of a {\em monotonic} Putput: compositionality is only required for two consecutive deletions or two consecutive insertions (hence the term monotonic), which is obviously a too weak requirement. The idea of constraining Putput based on some compatibility relations over consecutive updates is much better; it was proposed by Orejas \etal\ in \cite{orejas-bx13}, but they did not consider reflective  amendments (and only dealt with trivial correspondences being pairs of models). Integration of \Kupd-constrained Putput and reflective updates (in the general setting for correspondences) is a contribution of the present paper.  
\\
In contrast to Putput, invertibility laws for symmetric binary lenses did not get that much attention in the bx community.  The issue of strong vs. weak invertibility (= invertibility in the present paper) was discussed in detail in \cite{me-models11,frank-models11}. A version of weak invertibility for the asymmetric case ---the PutGetPut law---was considered (in a different context of injective editing) in several early bx work from Tokyo, \eg, \cite{hu-aplas04}. As we mentioned in Remark \ref{rema:invert}, what is called roundtripping in \cite{bpierce-popl11} is, in fact, an identity propagation law rather than an invertibility law (and thus invertibility for ssc-lenses was not defined at all).  
}{}{-10cm}

\section{Conclusion}
Multimodel \syncon\ is an important practical problem, which cannot be fully automated but even partial automation would be beneficial. A major problem in building an automatic support is uncertainty inherent to consistency restoration. In this regard, restoration via update propagation rather than immediate repairing of an inconsistent state of the multimodel has an essential advantage:  having the update causing inconsistency as an input for the restoration operation can guide the propagation policy and essentially reduce the uncertainty. We thus come to the scenario of multiple model \syncon\ via multi-directional update propagation. We have also argued that reflective propagation to the model whose change originated inconsistency is a reasonable feature of the scenario. 

We presented a mathematical \fwk\ for  \syncon\ scenarios as above based on a multiary generalization of binary symmetric delta lenses introduced earlier, 
and enriched it with reflective propagation and KPutput law ensuring compatibility of update propagation with update composition in a practically reasonable way (in contrast to the strong but unrealistic Putput). We have also defined several operations of composing multiary lenses in parallel and sequentially. \zdnewwok{Our lens composition results make the \fwk\  interesting for practical applications: if a tool builder has implemented a library of elementary \syncon\ modules based on lenses and, hence, ensuring basic laws for change propagation, then a complex module assembled from elementary lenses will automatically be a lens and thus also enjoys the basic laws. This allows one to avoid additional integration testing, which can essentially reduce the cost of \syncon\ software.}{this piece is just copy-pasted from the intro}{-4em} 

\medskip
\noindent
{\bf Acknowledgement.} We are really grateful to anonymous reviewers for careful reading of the manuscript and detailed, pointed, and stimulating reviews, which essentially improved the presentation and discussion of the \fwk.

\bibliographystyle{splncs03} 
\bibliography{refsGrand17}
\end{document}